\documentclass[acmsmall,dvipsnames]{acmart}\settopmatter{printfolios=true,printccs=false,printacmref=false}

\usepackage{graphicx} % Required for inserting images
\usepackage{xcolor} % color printing
\usepackage{amsfonts}
\usepackage{amsmath}
\usepackage{amsthm}
\usepackage{mathtools}
\usepackage{graphicx}
\usepackage{soul}

\usepackage{hyperref}
\usepackage{cleveref}
% Customizing the format for section references
\crefformat{section}{#2\S#1#3}
\Crefformat{section}{#2\S#1#3}

\usepackage[framemethod=default]{mdframed}
\mdfsetup{skipabove=.67\topskip,skipbelow=.67\topskip}
\usepackage{enumitem}
\usepackage[titletoc]{appendix}

\newcommand{\eqdef}{\buildrel \mbox{\tiny\rm def} \over =}
\newcommand{\depth}{{\textit{depth}}}
\newcommand{\NS}{{\textit{NS}}}

\newtheorem{formula}{Formula}[section]

\usepackage{algorithm2e}
\SetKwInput{Input}{Input}
\SetKwInput{Output}{Output}
\SetKwBlock{Begin}{begin}{end}
\RestyleAlgo{ruled}

\usepackage{enumitem}
\usepackage{hhline}
\usepackage{multirow}

%% Definitions to repeat the statement of a proposition/lemma
\newtheorem{MyTheorem}{Theorem}[section]
\newtheorem{Proposition}[MyTheorem]{Proposition}

{\par\noindent{\bf Proposition~\ref{Prop:#1}\/}\begin{em}}%
{\end{em}}
\newtheorem{Lemma}[MyTheorem]{Lemma}

\newenvironment{SLem}[1]%
{\par\noindent{\bf Lemma~\ref{Lem:#1}\/}\begin{em}}%
{\end{em}}

\newcommand{\MPP}{\mbox{Matched-Path Principle\/}}
\newcommand{\CIP}{\mbox{Contextual-Interpretation Principle\/}}

\newcommand{\Matched}{{\textit{X}}}

\newcommand{\ADD}{{\textit{ADD}}}
\newcommand{\wasted}{{\textit{wasted}}}

\hyphenation{CFLOBDD}
\hyphenation{WCFLOBDD}
\uchyph=0

% "debug" version

\newcommand {\revision}[1]{{{#1}}}

% "release" version
% \newcommand {\twr}[1]{{}}
% \newcommand {\twrchanged}[1]{{#1}}
% \newcommand {\twrnew}[1]{{#1}}
% \newcommand{\xz}[1]{{}}
% \newcommand{\xzchanged}[1]{{#1}}

\title{Polynomial Bounds of CFLOBDDs against BDDs}

\author{Xusheng Zhi}
\affiliation{%
\institution{University of Wisconsin-Madison}
\country{USA}
}
\affiliation{%
\institution{Peking University}
\country{China}
}
\email{xusheng142@gmail.com}

\author{Thomas Reps}
\orcid{0000-0002-5676-9949}
\affiliation{%
\institution{University of Wisconsin-Madison}
\country{USA}
}
\email{reps@cs.wisc.edu}

\thanks{Some definitions and explanatory material in \Cref{Se:Backgrounds} and \Cref{Appendix: StructuralInvariants} are taken from Sistla et al.~\cite{TOPLAS:SCR24}.}

\begin{abstract}
Binary Decision Diagrams (BDDs) are widely used for the representation of Boolean functions.
Context-Free-Language Ordered Decision Diagrams (CFLOBDDs) are a plug-compatible replacement for BDDs---roughly, they are BDDs augmented with a certain form of procedure call.
% A natural question to ask is, ``For a given Boolean function $f$, what is the relationship between the size of a BDD for $f$ and the size of a CFLOBDD for $f$?''
% Sistla et al.\ established that, in the \emph{best case}, the CFLOBDD for a function $f$ can be exponentially smaller than any BDD for $f$ (regardless of what variable ordering is used in the BDD);
% Sistla et al.\ established that, in the \emph{best case}, the CFLOBDD for a function $f$ can be exponentially smaller than any BDD for $f$ (regardless of what variable ordering is used in the BDD);
A natural question to ask is, ``For a given
\revision{
family of Boolean functions $F$, what is the relationship between the size of a BDD for $f \in F$ and the size of a CFLOBDD for $f$?''
\citeauthor{TOPLAS:SCR24} established that there are \emph{best-case families of functions}, which demonstrate an inherently exponential separation between CFLOBDDs and BDDs.
They showed that there are families of functions $\{ f_n \}$ for which, for all $n = 2^k$, the CFLOBDD for $f_n$ (using a particular variable order) is exponentially more succinct than \emph{any} BDD for $f_n$ (i.e., using any variable order).
However,
}
they did not give a \emph{worst-case bound}---i.e., they left open the question, ``Is there a family of functions \revision{$\{ g_i \}$} for which the size of a CFLOBDD for \revision{$g_i$} must be substantially larger than a BDD for \revision{$g_i$?''} For instance, it could be that there is a family of functions for which the BDDs are exponentially more succinct than any corresponding CFLOBDDs.

This paper studies such questions, and answers the second question posed above in the negative.
In particular, we show that by using the same variable ordering in the CFLOBDD that is used in the BDD, the size of a CFLOBDD for any function \revision{$h$} cannot be far worse than the size of the BDD for \revision{$h$.} The bound that relates their sizes is polynomial:
\begin{quote}
      If BDD $B$ for function \revision{$h$} is of size $|B|$ and uses variable ordering $\textit{Ord}$, then the size of the CFLOBDD $C$ for \revision{$h$} that also uses $\textit{Ord}$ is bounded by $\revision{\mathcal{O}}(|B|^3)$.
\end{quote}
The paper also shows that the bound is tight:
there is a family of functions for which $|C|$ grows as $\Omega(|B|^3)$.
\end{abstract}

\begin{CCSXML}
<ccs2012>
<concept>
<concept_id>10010147.10010148.10010164.10010167</concept_id>
<concept_desc>Computing methodologies~Representation of Boolean functions</concept_desc>
<concept_significance>500</concept_significance>
</concept>
<concept>
<concept_id>10003752.10003766.10003773</concept_id>
<concept_desc>Theory of computation~Automata extensions</concept_desc>
<concept_significance>300</concept_significance>
</concept>
<concept>
<concept_id>10003752.10003809.10003635</concept_id>
<concept_desc>Theory of computation~Graph algorithms analysis</concept_desc>
<concept_significance>300</concept_significance>
</concept>
</ccs2012>
\end{CCSXML}

\ccsdesc[500]{Computing methodologies~Representation of Boolean functions}
\ccsdesc[300]{Theory of computation~Automata extensions}
\ccsdesc[300]{Theory of computation~Graph algorithms analysis}

\keywords{Decision diagram, relative-size bound}

\sloppy
\begin{document}

\maketitle

\section{Introduction}
\label{Se:Introduction}

% Some points to cover:
% \begin{itemize}
%   \item
%     CFLOBDDs are a plug-compatible replacement for BDDs
%   \item
%     Whereas a BDD can be considered to be a special form of bounded-size, branching, but non-looping program, a CFLOBDD can be considered to be a bounded-size, branching, but non-looping program in which a certain form of \emph{procedure call} is permitted.
%   \item
%     In the best case, the size of the CFLOBDD for a function $f$ is double exponentially smaller than the size of the decision tree for $f$.
% \end{itemize}

Binary Decision Diagrams (BDDs)
\revision{
\cite{toc:Bryant86,dac:BRB90,Book:Wegener00}
}
are commonly used to represent Boolean functions, offering a compressed representation of decision trees. Context-Free-Language Ordered Binary Decision Diagrams (CFLOBDDs)
\revision{
\cite{TOPLAS:SCR24}
}
are essentially a plug-compatible alternative to BDDs, but based on a different function-decomposition principle. Whereas a BDD can be considered to be a special form of bounded-size, branching, but non-looping program, a CFLOBDD can be considered to be a bounded-size, branching, but non-looping program in which a certain form of \emph{procedure call} is permitted. CFLOBDDs share many good properties of BDDs, but---in the best case---the CFLOBDD for a Boolean function can have a double-exponential reduction in size compared to the corresponding decision tree.\footnote{
\revision{
    An implementation of CFLOBDDs is available on GitHub \cite{SOFTWARE:CFLOBDDs}.
}
}

It is natural to ask how the sizes of the BDD and CFLOBDD for a given function compare. Sistla et al.~\cite[\S8]{TOPLAS:SCR24} established that in the best case, the CFLOBDD for a function $f$ can be exponentially smaller than any BDD for $f$ (regardless of what variable ordering is used in the BDD);
however, they do not give a bound in the opposite direction
(i.e., a bound on CFLOBDD size as a function of BDD size, for all BDDs).
They spell out two possibilities as follows \cite[\S8]{TOPLAS:SCR24}:
\begin{quote}
  It could be that there are families of functions for which BDDs are exponentially more succinct than any corresponding CFLOBDD;
  however, it could also be that for every BDD there is a corresponding CFLOBDD no more than, say, a polynomial factor larger.
\end{quote}
In this paper, somewhat surprisingly, we establish that for every BDD, the size of the corresponding CFLOBDD is at most a polynomial function of the BDD's size:
\begin{quote}
  If BDD\footnote{
  \revision{More precisely, the stated bound holds when $B$ is a quasi-reduced BDD (see \Cref{Se:BDDs}).
  The size of a quasi-reduced BDD is at most a factor of $n + 1$ larger than the size of the corresponding BDD \cite[Thm.\ 3.2.3]{Book:Wegener00}, where $n$ is the number of Boolean variables.
  For $B$ a true BDD, the bound becomes $\revision{\mathcal{O}}(n^3 |B|^3)$.
  For convenience, this paper focuses on quasi-reduced BDDs.
  }
  }
  $B$ for function $f$ is of size $|B|$ and uses variable ordering $\textit{Ord}$, then the size of the CFLOBDD $C$ for $f$ that also uses $\textit{Ord}$ is bounded by $\revision{\mathcal{O}}(|B|^3)$.
\end{quote}
Moreover, in \Cref{Se:TightInstances} we show that this bound is tight (i.e., asymptotically optimal) by constructing a family of functions for which $|C|$ grows as $\Omega(|B|^3)$.

\revision{
As will be explained in \Cref{Se:RelatedWork}, CFLOBDDs make use of a tree-structured decomposition hierarchy on the Boolean variables.
There are other kinds of decision diagrams that also use a tree-structured decomposition of the Boolean variables \cite{SDD,VSSDD}.
In those structures, the variable decomposition is a user-supplied input, and in the observations that have been made about the relative sizes of those structures compared to BDD sizes, only the \emph{best} variable-decomposition tree is considered.
However, the size-comparison question is then trivial, because
selecting a linear tree---i.e., a \emph{linked list} of the Boolean variables---trivially emulates BDDs:
it produces a data structure whose size is linearly related to that of the corresponding BDD.

In contrast, our work establishes a \emph{worst-case bound}:
we show that a CFLOBDD, using a \emph{binary-tree-structured decomposition} of the variables, is never more than cubic in the size of the corresponding BDD.
The different nature of our result is important because the tree-structured variable decomposition is what enables CFLOBDDs---and other decision diagrams \cite{SDD,VSSDD}---to attain a best-case exponential-succinctness advantage over BDDs.

\noindent
\begin{mdframed}
\revision{
  \textit{This paper is the first to demonstrate that a data structure can enjoy an exponential-succinctness advantage over BDDs in the best case, while simultaneously being at most polynomially larger than BDDs in the worst case.}
}
\end{mdframed}

\noindent
\Cref{Se:RelatedWork} gives a more detailed discussion of the issue of data-structure sizes compared to BDD sizes, for CFLOBDDs and other decision-diagram data structures that use a tree-structured decomposition of the Boolean variables.
}

\revision{
\citet{DBLP:journals/jair/DarwicheM02} initiated a research program on the properties of decision diagrams, exploring trade-offs in the relative size and computational complexity of different families of data structures.
Their work showed that many known data structures for propositional knowledge bases are obtained by imposing specific restrictions on Negation Normal Form (NNF).
According to their definition,
\begin{quote}
  A sentence in $\textrm{NNF}_\textit{PS}$ is a rooted, directed-acyclic graph (DAG) where each leaf node is labeled with $\textit{true}$, $\textit{false}$, $X$, or $\neg X$, [for] $X \in \textit{PS}$;
  and each internal node is labeled with $\land$ or $\lor$ and can have arbitrarily many children,
\end{quote}
where $\textit{PS}$ is a denumerable set of propositional variables  \cite[Definition 2.1]{DBLP:journals/jair/DarwicheM02}.
As will become clear in \Cref{Se:CFLOBDDs}, CFLOBDDs are \emph{not} DAGs, and hence lie outside their framework.
This paper establishes a result that compares our non-NNF structure to BDDs, which are NNF structures and do lie within their framework.
Thus, our results are complementary to the work of \citeauthor{DBLP:journals/jair/DarwicheM02}.
}

\paragraph{Organization.}
\Cref{Se:Backgrounds} reviews the basic definitions of BDDs and CFLOBDDs,
\revision{
and introduces several items of terminology.
\Cref{Se:Overview} presents the issue of ``3/4-depth duplication'' which appears to be an obstacle to the existence of the desired polynomial bound;
\Cref{Se:Overview} then provides intuition about why a polynomial bound does indeed exist---which guides the establishment of the bound in subsequent sections.}
\Cref{Se:Relations} establishes some structural relationships between BDDs and CFLOBDDs. 
\Cref{Se:Counting} presents upper bounds on the number of groupings, vertices, and edges in a CFLOBDD as a function of the size of the corresponding BDD.
\Cref{Se:TightInstances} gives tight instances for the three bounds.
\revision{
\Cref{Se:RelatedWork} discusses related work.
}
\Cref{Se:Conclusion} concludes. 
\section{Background and Terminology}
\label{Se:Backgrounds}

This section reviews BDDs (\Cref{Se:BDDs}) and CFLOBDDs (\Cref{Se:CFLOBDDs}), and introduces some terminology conventions used in the paper (\Cref{Se:TerminologyConventions}).

\subsection{Binary Decision Diagrams}
\label{Se:BDDs}

Binary Decision Diagrams (BDDs)
\revision{
\cite{toc:Bryant86,dac:BRB90,Book:Wegener00}
}
are commonly used to represent Boolean-valued and non-Boolean-valued functions over Boolean arguments (i.e., $\{0,1\}^n \rightarrow \{0,1\}$ and $\{0,1\}^n \rightarrow V$, respectively, for some value domain $V$).
BDDs are directed acyclic graphs (DAGs) in which maximal sub-DAGs are shared.
An Ordered BDD (OBDD) is a BDD in which the same variable ordering is imposed on the Boolean variables, and can thus be thought of as a compressed decision tree.
In an OBDD, a node $m_1$ can have the same node $m_2$ as its false-successor and its true-successor, in which case $m_1$ is called a \emph{don't-care node}.
In a Reduced Ordered BDD (ROBDD), all don't-care nodes are removed by repeatedly applying a ply-skipping transformation so that for each node $m$, its false-successor and true-successor are different nodes.
An OBDD in which don't-care nodes are \emph{not} removed (i.e., plies are never skipped) is sometimes called a \emph{quasi-reduced OBDD} \cite[p.\ 51]{Book:Wegener00}.\footnote{
\revision{
  For introductory material about BDDs, the reader should consult Brace et al.\ \cite{dac:BRB90} or the tutorial by Andersen \cite{Notes:Andersen98}.
  Wegener\ \cite{Book:Wegener00} provides a thorough account of the properties of BDDs and discusses many BDD variants.
}
}
\revision{Consequently, in a quasi-reduced BDD all paths from the root to a terminal value have length $n$, where $n$ is the number of variables.}
An ROBDD with non-binary-valued terminals is called a Multi-Terminal BDD (MTBDD) \cite{dac:CMZFY93,CMU:CS-95-160} or an Algebraic Decision Diagram (ADD) \cite{iccad:BFGHMPS93}.
A quasi-reduced-OBDD with non-binary-valued terminals could be called a quasi-reduced MTBDD.

% interpretation
The interpretation of a \revision{quasi-reduced MT}BDD with respect to a given assignment to the Boolean variables is the same as the process of determining whether a given string is accepted by a deterministic finite-state automaton. We start from the root node; for each successive variable, we choose a successor of the current node based on whether the variable's value in the assignment is 0 or 1. After interpreting all the variables, we arrive at a leaf node (also known as a value node or terminal node). The value in that leaf node is the result of the interpretation.

\begin{figure}
    \centering
    \includegraphics[width=0.3\linewidth]{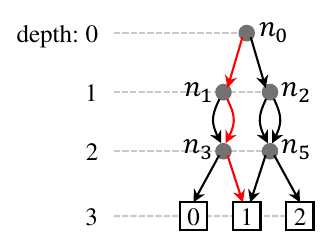}
    \caption{
\revision{
    (Color online.)
}
    The BDD for $f_0 = \lambda b_0, b_1, b_2 \,.\,(\text{if } b_0 \text{ then }1 \text{ else }0) + (\text{if } b_2 \text{ then }1 \text{ else }0)$. }
    \label{fig:BDDexample}
\end{figure}

\begin{example}\label{Exa:BDD}
    Let $f_0(b_0, b_1, b_2) \eqdef (\text{if } b_0 \text{ then }1 \text{ else }0) + (\text{if } b_2 \text{ then }1 \text{ else }0)$.
    The \revision{quasi-reduced MT}BDD for $f_0$ is shown in \Cref{fig:BDDexample}.
    Following the conventions used in \cite{toc:Bryant86}, we depict the non-terminal nodes as circles and the leaf nodes as squares.
    Throughout the paper, the left branch is always the transition taken for 0, and the right branch is always the transition taken for 1.
    The path taken when interpreting $[b_0 \mapsto 0, b_1 \mapsto 1, b_2 \mapsto 1]$ is depicted in red.
    We obtain the value 1, which matches the result we get by putting $[b_0 \mapsto 0, b_1 \mapsto 1, b_2 \mapsto 1]$ into the definition of $f_0$ and simplifying.
\end{example}

If $B$ is a (Boolean or Multi-Terminal) OBDD/ROBDD, there are two natural quantities for expressing the cost of a problem that involves $B$, namely,
\begin{itemize}
  \item
    the \emph{size} of $B$, denoted by $|B|$, which is the number of nodes (or edges) in $B$, and
  \item
    $n$, the number of Boolean variables over which $B$ is interpreted.
\end{itemize}
The size of a quasi-reduced OBDD is at most a factor of $n+1$ larger than the size of the corresponding ROBDD \cite[Thm.\ 3.2.3]{Book:Wegener00}.

In this paper, we establish bounds on the size of a CFLOBDD $C$ in terms of the size of the corresponding quasi-reduced OBDD $B$.
Thus, if the bound is $\revision{\mathcal{O}}(h(|B|))$, the bound with respect to ROBDDs is $\revision{\mathcal{O}}(h(n|B|))$---e.g., an $\revision{\mathcal{O}}(|B|^3)$ bound with respect to quasi-reduced OBDDs becomes an $\revision{\mathcal{O}}(n^3 |B|^3)$ bound with respect to ROBDDs.
\revision{
We will refer to Boolean-valued and non-Boolean-valued (``pseudo-Boolean'') functions over Boolean arguments generically as Boolean functions, and
}
refer to quasi-reduced OBDDS and quasi-reduced MTBDDs generically as BDDs from hereon.

% depth
The depth of a BDD node is defined as the distance from the root.\footnote{
  In some other papers, the ``depth'' of a node is called the ``ply'' at which the node occurs in the BDD.
}
For example, in \Cref{fig:BDDexample} the depths of the nodes are indicated on the left.
It is clear that the depth of a node equals the number of variables that have been interpreted when the node is reached.

Given a function $f : \{0,1\}^n \rightarrow V$ and a partial assignment $p$ that gives values for the first $d$ variables, the \emph{residual function} of $f$ with respect to $p$ is the function of type $\{0,1\}^{n-d} \rightarrow V$ obtained by currying $f$ with respect to its first $d$ parameters and evaluating the curried $f$ on $p$ \cite[\S4.2]{Book:JGS93}.
An important property of BDDs is that each BDD node represents a residual function, and each different node represents a different residual function.
For a BDD with $n$ variables, a node at depth $d$ corresponds to a residual function with $n - d$ variables.
For the example in \Cref{fig:BDDexample},
node $n_1$ represents ``$\lambda b_1, b_2 \,.\,$(if $b_2$ then 1 else 0)''; node $n_5$ represents ``$\lambda b_2 \,.\,$(if $b_2$ then 2 else 1)''; each leaf node represents a constant value, which can be viewed as a function of zero Boolean variables.

\subsection{Context-Free-Language Ordered Binary Decision Diagrams}
\label{Se:CFLOBDDs}

Context-Free-Language Ordered Binary Decision Diagrams (CFLOBDDs) \revision{\cite{TOPLAS:SCR24}} are decision diagrams that employ a certain form of procedure call.
\revision{
As with BDDs, they can be used to represent Boolean-valued and non-Boolean-valued functions over Boolean arguments (i.e., $\{0,1\}^n \rightarrow \{0,1\}$ and $\{0,1\}^n \rightarrow V$, respectively, for some value domain $V$).
}
Whereas a BDD can be thought of as an acyclic finite-state machine (modulo ply-skipping in the case of ROBDDs), a CFLOBDD is a particular kind of \emph{single-entry, multi-exit, non-recursive, hierarchical finite-state machine} (HFSM) \cite{TOPLAS:ABEGRY05}.
We will introduce CFLOBDDs using as an example the function
$f_1(b_0, b_1) \eqdef (\text{if (} b_0 \land \neg b_1\text{) then 7 else 5})$.
The CFLOBDD for $f_1$ is shown in \Cref{fig:CFLOBDDexample}. 

\begin{figure}
    \centering
    \includegraphics[width=0.7\linewidth]{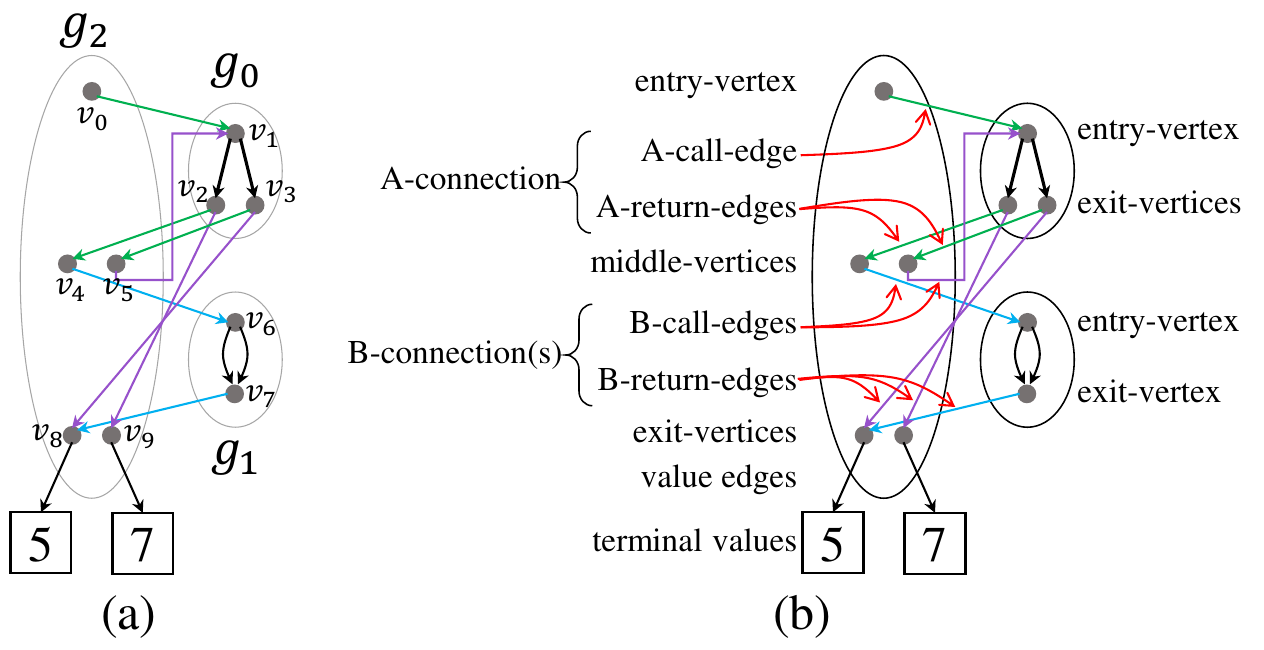}
    \caption{
\revision{
    (Color online.)
}
    The CFLOBDD for $f_1 \eqdef \lambda b_0, b_1 \,.\, (\text{if (} b_0 \land \neg b_1\text{) then 7 else 5})$.
    }
    \label{fig:CFLOBDDexample}
\end{figure}

The formal definition of CFLOBDDs contains two parts:
\revision{
\Cref{Def:MockCFLOBDD} specifies the basic structure of CFLOBDDs (but technically defines a data structure that also includes objects that are not CFLOBDDs);
}
\Cref{De:CFLOBDD} imposes some additional structural invariants.
Where necessary, we distinguish between \emph{mock-CFLOBDDs} (\Cref{Def:MockCFLOBDD}) and \emph{CFLOBDDs} (\Cref{De:CFLOBDD}), although we typically drop the qualifier ``mock-'' when there is little danger of confusion.
\Cref{fig:CFLOBDDexample}(b) illustrates \Cref{Def:MockCFLOBDD} for the CFLOBDD that represents $f_1$.
\revision{
\Cref{fig:CFLOBDDLikeTerms} provides a guide to the modifiers that we use on the term ``CFLOBDD'' in \Cref{Def:MockCFLOBDD}, \Cref{De:MockProtoCFLOBDD}, and \Cref{De:CFLOBDD}.
}

\begin{figure}
  \centering
  \small{
\revision{
  \begin{tabular}{||r|c||p{0.375\linewidth}|p{0.375\linewidth}||}
    \hhline{~~|t:==:t|}
    \multicolumn{2}{c||}{}     & \multicolumn{2}{c||}{``proto-''} \\
    \hhline{||~~--}
    \multicolumn{2}{c||}{}     & \multicolumn{1}{c|}{with}  &  \multicolumn{1}{c||}{w/o} \\
    \hhline{|t:==#=|=||}
    \multirow{2}{*}{``mock-'' \rule{0em}{14.4ex}} & \multirow{1}{*}{with \rule{0em}{8.0ex}} & A ``procedure-like'' substructure of a hierarchical decision diagram.  Does not necessarily conform to the structural constraints of \Cref{De:CFLOBDD}.   & A hierarchical decision diagram with ``procedure calls'' that (i) represents a Boolean function, but (ii) does not necessarily conform to the structural constraints of \Cref{De:CFLOBDD}. \\
    \hhline{||~---||}
                              & \multirow{1}{*}{w/o \rule{0em}{8.0ex}}  & The ``procedure-like'' sub-structures over which inductive arguments about CFLOBDDs are made.  They conform to the structural constraints of \Cref{De:CFLOBDD}. & Hierarchical decision diagrams that provide a canonical representation of each Boolean function.  They conform to the structural constraints of \Cref{De:CFLOBDD}. \\ 
    \hhline{|b:====:b|}
  \end{tabular}
}
  }
  \caption{\revision{Guide to the modifiers on the term ``CFLOBDD(s).''}
  }
  \label{fig:CFLOBDDLikeTerms}
\end{figure}

\begin{definition}[Mock-CFLOBDD]\label{Def:MockCFLOBDD}
    A \emph{mock-CFLOBDD} at level $k$ is a hierarchical structure made up of some number of \emph{groupings} (depicted as large ovals in the diagrams throughout the paper), of which there is one grouping at level $k$, and at least one at each level $0, 1, \ldots, k-1$.
    The grouping at level $k$ is the \emph{head} of the mock-CFLOBDD.
\revision{
    A $k$-level mock-CFLOBDD represents a Boolean function (or pseudo-Boolean function) over $2^k$ variables.
}
    
    A grouping is a collection of vertices (depicted as black dots)
    and edges to other groupings (depicted as arrows).

    \begin{itemize}
      \item 
        Each grouping $g$ at level $0 \le l \le k$ has a unique \emph{entry vertex}, which is disjoint from $g$'s non-empty set of \emph{exit vertices}.      
      \item
        If $l = 0$, $g$ is either a \emph{fork grouping} ($g_0$ in \Cref{fig:CFLOBDDexample}(a)) or a \emph{don't-care grouping} ($g_1$ in \Cref{fig:CFLOBDDexample}(a)).
        The entry vertex of a level-$0$ grouping corresponds to a decision point:
        left branches are for $F$ (or $0$);
        right branches are for $T$ (or $1$).
        A don't-care grouping has a single exit vertex, and the edges for the left and right branches both connect the entry vertex to the exit vertex.
        A fork grouping has two exit vertices:
        the entry vertex's left and right branches connect the entry vertex to the first and second exit vertices, respectively.
      \item
        If $l \ge 1$, $g$ has a further disjoint set of \emph{middle vertices}.
        We assume that both the middle vertices and the exit vertices are associated with some fixed, known total order (i.e., the sets of middle vertices and exit vertices could each be stored in an array).
        Moreover, $g$ has an \emph{A-call edge} that, from $g$'s entry vertex, ``calls'' a level-$(l\text{-}1)$ grouping $a$, along with a set of matching \emph{\revision{A-}return edges};
        each return edge from $a$ connects one of the exit vertices of $a$ to one of the middle vertices of $g$.
        In addition, for each middle vertex $m_j$, $g$ has a \emph{B-call} edge that ``calls'' a level-$(l\text{-}1)$ grouping $b_j$, along with a set of matching \emph{\revision{B-}return edges};
        each return edge from $b_j$ connects one of the exit vertices of $b_j$ to one of the exit vertices of $g$.

        \hspace{1.5ex}
        \revision{
        Note that the ``matching'' relation is one-many between call edges and return edges. In other words, each call edge has a disjoint set of return edges that match it.
        }
        We call $g$'s A-call edge, along with the set of matching A-return edges, an \emph{A-connection};
        we call each of $g$'s B-call edges, along with its respective set of matching B-return edges, a \emph{B-connection}.\footnote{
          The terminology used in this paper differs slightly from that in \cite{TOPLAS:SCR24}.
          Here, the term ``connection'' refers to a call edge and the matching return edges as a whole, whereas ``connection(-edge)'' in \cite{TOPLAS:SCR24}  refers only to the call edge.
        }
        In other words, a grouping $g$ at level $l \ge 1$ has an A-connection that represents a call to (and return from) a level-$(l\text{-}1)$ grouping $a$;
        for each middle vertex $m_j$, $g$ has a B-connection that represents a call to (and return from) a level-$(l\text{-}1)$ grouping $b_j$.
        In diagrams in the paper, we depict the different connections in different colors when there is a need to make a distinction.
        \revision{ For example, there are three connections in the CFLOBDD in \Cref{fig:CFLOBDDexample}: an A-connection that represents a call to $g_0$ (depicted in green), a B-connection that represents a call to $g_1$ (depicted in cyan), and a B-connection that represents another call to $g_0$ (depicted in purple).
        }
      \item
        If $l = k$, $g$ additionally has a set of \emph{value edges} that connect each exit vertex of $g$ to a \emph{terminal value}.
    \end{itemize}
\end{definition}

A \emph{proto-(mock-)CFLOBDD} is a (mock-)CFLOBDD without the value edges and terminal values. In other words, each grouping in a (mock-)CFLOBDD is the head of a proto-(mock-)CFLOBDD; the proto-(mock-)CFLOBDD of the level-$k$ grouping, together with the set of value edges and terminal values, forms a (mock-)CFLOBDD.
\revision{
The proto-(mock-)CFLOBDDs form an inductive structure, and we will make inductive arguments about CFLOBDDs based on this inductive structure:\footnote{
\revision{
  One cannot argue inductively in terms of CFLOBDDs because its constituents are proto-CFLOBDDs, not full-fledged CFLOBDDs.
  Thus, to prove that some property holds for a CFLOBDD, there will typically be an inductive argument to establish a property of the proto-CFLOBDD headed by the outermost grouping of the CFLOBDD, with---if necessary---an additional argument about the CFLOBDD’s value edges and terminal values.
}
}
}

\revision{
\begin{definition}[Mock-proto-CFLOBDD]\label{De:MockProtoCFLOBDD}
  A \emph{mock-proto-CFLOBDD} at level $i$ is a grouping at level $i$, together with the lower-level groupings to which it is connected (and the connecting edges).
  In other words, a mock-proto-CFLOBDD has the following recursive structure:
  \begin{itemize}
    \item
      a mock-proto-CFLOBDD at level 0 is either a fork grouping or a don't-care grouping
    \item 
      a mock-proto-CFLOBDD at level $i$ is headed by a grouping at level $i$ whose
      \begin{itemize}
        \item A-connection ``calls'' a level-($i$-1) mock-proto-CFLOBDD, and
        \item B-connections ``call'' some number of level-($i$-1) mock-proto-CFLOBDDs.
      \end{itemize}
  \end{itemize}
\end{definition}
}

The interpretation of an assignment in a
\revision{
mock-CFLOBDD
}
is with respect to a path that obeys the following principle:

\smallskip
\begin{mdframed}[innerleftmargin = 3pt, innerrightmargin = 3pt, skipbelow=-0.0em]
  $\textbf{\MPP}$.  {\em When a path follows an edge that returns to level $i$ from level $i-1$, it must follow an edge that matches the closest preceding edge from level $i$ to level $i-1$.\/}
\end{mdframed}

\begin{example}\label{Exa:MatchedPaths}
\revision{
In the CFLOBDD for $f_1$ from \Cref{fig:CFLOBDDexample}(a),
}
the matched path that corresponds to the assignment $[x_0 \mapsto 0, x_1 \mapsto 1]$ is as follows:
\[
  v_0 {\color{ForestGreen}{\xrightarrow{\text{call }g_0}}} v_1 \xrightarrow{x_0 \mapsto 0} v_2 {\color{ForestGreen}{\xrightarrow{\text{ret}}}} v_4 {\color{cyan}{\xrightarrow{\text{call } g_1}}} v_6 \xrightarrow{x_1 \mapsto 1} v_7 {\color{cyan}{\xrightarrow{\text{ret}}}} v_8 \xrightarrow{\text{value-edge}} 5.
\]
The matched path that corresponds to the assignment $[x_0 \mapsto 1, x_1 \mapsto 0]$ is as follows:
\[
  v_0 {\color{ForestGreen}{\xrightarrow{\text{call }g_0}}} v_1 \xrightarrow{x_0 \mapsto 1} v_3 {\color{ForestGreen}{\xrightarrow{\text{ret}}}} v_5 {\color{violet}{\xrightarrow{\text{call } g_0}}} v_1 \xrightarrow{x_1 \mapsto 0} v_2 {\color{violet}{\xrightarrow{\text{ret}}}} v_9 \xrightarrow{\text{value-edge}} 7.
\]
Note that the two paths take different return edges after they reach $v_2$.
In the first path, $g_0$ was called along the green edge ($v_0 {\color{ForestGreen}{\to}} v_1$), so the path must continue along the \revision{(matching)} green edge ($v_2 {\color{ForestGreen}{\to}} v_4$) when returning;
in the second path, the second call on $g_0$ was along the purple edge ($v_5 {\color{violet}{\to}} v_1$), so it must continue along the \revision{(matching)} purple edge ($v_2 {\color{violet}{\to}} v_9$) when returning.
\end{example}

\Cref{Exa:MatchedPaths} also illustrates that the
\revision{
same level-$0$ grouping can
}
interpret \emph{different variables at different places in a matched path}, in accordance with the following principle:

\smallskip
\begin{mdframed}[innerleftmargin = 3pt, innerrightmargin = 3pt, skipbelow=-0.0em]
  $\textbf{\CIP}$.
  {\em A level-$0$ grouping is not associated with a specific Boolean variable.
  Instead, the variable that a level-$0$ grouping refers to is determined by context:
  the $n^{\textit{th}}$ level-$0$ grouping visited along a matched path is used to interpret the $n^{\textit{th}}$ Boolean variable.}
\end{mdframed}

\smallskip
\noindent
For instance, in the second of the two paths
\revision{
in \Cref{Exa:MatchedPaths},
}
$g_0$ is called twice:
the first time to interpret $[x_0 \mapsto 1]$;
the second time to interpret $[x_1 \mapsto 0]$.

We can characterize the matched path used to interpret an assignment from the standpoint of each ``step'' taken along the path, which will be used frequently in the proofs given later in the paper.
Let $C(f)$ be a CFLOBDD with maximum level $k$, and $\alpha$ be an assignment to the Boolean variables $x_0$, $x_1$, $\ldots$, $x_{{2^k}-1}$.

A matched path starts from the entry vertex of the outermost grouping of $C(f)$. Then, when we are in a level-$l$ grouping $g$:
\begin{enumerate}[label=(\alph*)]
  \item
    If we are at the entry vertex of $g$:
    \begin{itemize}
      \item
        If $l = 0$, interpret the next Boolean variable in assignment $\alpha$, and go to one of the exit vertices of $g$.
      \item
        If $l > 0$, go to a level-$(l\text{-}1)$ grouping by following the A-call edge that leaves this entry vertex.
    \end{itemize}
  \item
    If we are at one of the middle vertices, it is guaranteed that $l > 0$ (because level-0 groupings have no middle vertices).
    Go to a level-$(l\text{-}1)$ grouping by following the B-call edge that leaves the current middle vertex.
  \item
    If we are at one of the exit vertices: 
    \begin{itemize}
      \item
        If $l = k$, follow the value edge, and return the terminal value as the result of the interpretation.
      \item
        If $l < k$, return to a vertex of a level-$(l\text{+}1)$ grouping, following the return edge that \emph{matches} the call edge used to enter $g$.
    \end{itemize}
\end{enumerate}

\noindent
The length of every matched path through a level-$l$ proto-CFLOBDD is described by the following recurrence equation:
\begin{equation}
  \label{Eq:MatchedPathLengthRecurrence}
  L(l) = 2 L(l-1) + 4  \qquad\qquad  L(0) = 1,
\end{equation}
which has the solution $L(l) = 5 \times 2^l -4$.
Consequently, each matched path through a CFLOBDD is finite, and the interpretation process must always terminate.

\revision{
Let $\textit{Var}(i)$ be the number of variables interpreted along each matched path in a level-$i$ proto-CFLOBDD (i.e., the number of times $\textit{Var}(i)$ that each matched path in a level-$i$ proto-CFLOBDD reaches the entry vertex of a level-$0$ grouping). In particular, $\textit{Var}(i)$ is described by the following recurrence relation:
\begin{equation}
  \label{Eq:MatchedPathVariableInterpretationRecurrence}
  \textit{Var}(l) = 2 \textit{Var}(l-1)  \qquad\qquad  \textit{Var}(0) = 1,
\end{equation}
which has the solution $\textit{Var}(i) = 2^i$, indicating that each level-$l$ CFLOBDD represents a (Boolean-valued or non-Boolean-valued) function that takes $2 ^ i$ Boolean variables as arguments.
}

Finally, to make sure that each Boolean function \revision{that takes $2 ^ k$ Boolean variables} has a unique, canonical representation as a \revision{level-$k$} CFLOBDD, some structural invariants are enforced.
The complete definition of CFLOBDDs (structural invariants included) can be found in \Cref{Appendix: StructuralInvariants}.
\revision{
It will help to keep in mind that the goal of the invariants is to force there to be a \emph{unique} way to fold a given decision tree into a CFLOBDD that represents the same Boolean function. The decision-tree folding method is discussed in \cite[\S4.2]{TOPLAS:SCR24} and \cite[Appendix C]{TOPLAS:SCR24}, but the main characteristic of the folding method is that it works greedily, left to right.
}
When the structural invariants are respected, CFLOBDDs enjoy the following canonicity property \cite[Theorem 4.3]{TOPLAS:SCR24}:

\begin{theorem}[Canonicity]\label{The:Canonicity}
If $C_1$ and $C_2$ are level-$k$ CFLOBDDs for the same Boolean function over $2^k$ Boolean variables, and $C_1$ and $C_2$ use the same variable ordering, then $C_1$ and $C_2$ are isomorphic.
\end{theorem}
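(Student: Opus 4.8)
The plan is to prove \Cref{The:Canonicity} by structural induction on the level $k$, strengthening the statement so that it becomes one about \emph{proto}-CFLOBDDs that closes under the induction. For a level-$\ell$ proto-CFLOBDD $n$, let $e_n : \{0,1\}^{2^\ell} \to \mathbb{N}$ send an assignment to the index (in $n$'s fixed order on its exit vertices) of the exit vertex reached by the matched path; this is well defined by the \MPP, and it induces a partition $\pi_n$ of $\{0,1\}^{2^\ell}$ whose blocks we number $1, 2, \ldots$ by the rank of their least element in the enumeration of $\{0,1\}^{2^\ell}$ determined by the variable ordering. The strengthened claim is: \emph{any two level-$\ell$ proto-CFLOBDDs that satisfy the structural invariants and induce the same partition are isomorphic, via an isomorphism respecting the exit numbering.} Granting this, \Cref{The:Canonicity} follows quickly: a level-$k$ CFLOBDD is a level-$k$ proto-CFLOBDD plus value edges; the structural invariants (which we take as given from \Cref{Appendix: StructuralInvariants}) force distinct exit vertices of the head grouping to carry distinct terminal values, and since every exit vertex is reachable this makes the head's partition exactly $f$-equivalence (``$\alpha$ and $\beta$ agree on $f$''), which depends only on $f$; hence the two heads are isomorphic by the claim, and the value edges necessarily agree because exit vertex $i$ must carry the common $f$-value of its block.

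The base case $\ell = 0$ is immediate: a level-$0$ grouping is a fork grouping (partition $\{\{0\},\{1\}\}$) or a don't-care grouping (partition $\{\{0,1\}\}$), so the partition pins it down. For the inductive step, take a level-$\ell$ proto-CFLOBDD $n$ ($\ell \ge 1$) with A-connection to a level-$(\ell\text{-}1)$ proto-CFLOBDD $a$ and, for each middle vertex $m_i$, a B-connection to a level-$(\ell\text{-}1)$ proto-CFLOBDD $b_i$ with B-return map $\rho_i$. Splitting an assignment as $(u,w)$ into its first and last $2^{\ell-1}$ bits, the \MPP\ gives $e_n(u,w) = \rho_{e_a(u)}\!\bigl(e_{b_{e_a(u)}}(w)\bigr)$. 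The heart of the argument is to show that $a$, the $b_i$, and the $\rho_i$ are each forced by $\pi_n$. First I would show that $\pi_a$ is exactly the relation ``$u$ and $u'$ induce the same slice of $\pi_n$'' (i.e., $\{w : (u,w) \in j\} = \{w : (u',w) \in j\}$ for every block $j$): one inclusion holds because $e_a(u)=e_a(u')$ forces the same B-connection and the same return map, hence the same slice; for the converse, if $u,u'$ induced the same slice but lay in distinct blocks $i \ne i'$ of $\pi_a$, then $\rho_i \circ e_{b_i} = \rho_{i'} \circ e_{b_{i'}}$, and since the structural invariants make both return maps injective with equal range, $b_i$ and $b_{i'}$ induce the same partition of $\{0,1\}^{2^{\ell-1}}$; the induction hypothesis then makes them isomorphic by an isomorphism under which $\rho_i$ and $\rho_{i'}$ coincide, contradicting the invariant that no two middle vertices carry B-connections that are isomorphic via an isomorphism identifying their return maps. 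Symmetrically, $\pi_{b_i}$ is forced to equal the slice of $\pi_n$ over block $i$ of $\pi_a$. Now the induction hypothesis applied to $a$ and to each $b_i$ yields isomorphisms between the corresponding substructures of the two candidate proto-CFLOBDDs; the number of middle vertices equals the number of exit vertices of $a$; the A-return edges are the forced bijection; and the $\rho_i$, together with the numbering of $n$'s exit vertices, are determined by $\pi_n$ and the invariant that exit vertices are discovered in increasing order. Splicing these isomorphisms together gives the desired isomorphism between the two proto-CFLOBDDs.

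I expect the main obstacle to be precisely the step that pins down $\pi_a$ and the $\pi_{b_i}$: one has to show that the structural invariants leave no room for the A-connection (or a B-connection) to realize a \emph{proper} refinement of the ``same slice'' relation, and this is exactly where the injectivity of return maps and the no-duplicate-B-connection invariant must be combined with the inductive canonicity of the substructures. Care is also needed to keep the induction on level well founded even though the invariants at level $\ell$ quantify over substructures at level $\ell-1$; handling this cleanly may call for first proving, by a separate easy induction, that every exit vertex of every grouping is reachable by some matched path, so that $\pi_n$ really does have as many blocks as $n$ has exit vertices (this is also what makes the top-level partition equal to $f$-equivalence in the first paragraph). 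A final, purely bookkeeping, difficulty is fixing one canonical enumeration order of $\{0,1\}^{2^\ell}$ induced by the shared variable ordering, so that ``least element'' and ``discovered in increasing order'' refer to the same thing in $C_1$ and in $C_2$; once that scaffolding is in place, the remainder is a routine structural induction.
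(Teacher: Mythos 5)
The paper does not prove this theorem itself: it is imported verbatim from Sistla et al.\ \cite[Thm.~4.3]{TOPLAS:SCR24}, so there is no in-paper proof to compare against. Your argument is, in essence, a faithful reconstruction of the proof in that reference---the same strengthening to proto-CFLOBDDs viewed as canonically ordered partitions of $\{0,1\}^{2^\ell}$ (a viewpoint this paper itself invokes in the footnote to \Cref{Lem:ManyOne}), the same induction on level, and the same use of the injective-return-map, no-duplicate-instance, and compact-extension invariants to pin down the A-callee, the B-callees, and the exit-vertex numbering; the two side conditions you flag (reachability of every exit vertex, and the agreement of ``discovery order'' with ``order by least preimage string'') are exactly the supporting facts the original proof also has to establish. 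I see no gap.
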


\subsection{Terminology Conventions}
\label{Se:TerminologyConventions}

In this section, we introduce the following terminological conventions:
\begin{itemize}
  \item
    Let $f$ be a function that takes $2^k$ Boolean parameters $x_0, \cdots, x_{2 ^ k - 1}$.
    In the remainder of the paper, we refer to the BDD for $f$ as $B(f)$ and the CFLOBDD for $f$ as $C(f)$. If $f$ is not specified, we refer to $B(f)$ and $C(f)$ as $B$ and $C$, respectively.
    In addition, we will assume that the same variable ordering is used in $B$ and $C$.
  \item
    We will omit the value-tuple (value-edges and the terminal values) from $|C|$.
    The size of that part of $C$ is the same as the corresponding terminal elements of $B$.
    \hspace{1.5ex}
    In addition, we do not count the vertices and edges in level-0 groupings in \Cref{Subse: Count Vertices} and \Cref{Subse: Count Edges}, respectively, because their numbers are each bounded by a constant.
    Because of the unique-representation property provided by hash-consing \cite{Tokyo-TR-74-03:Goto74},\footnote{
\revision{
      In the literature on decision diagrams, hash consing is typically not mentioned explicitly;
      however, maintaining the ''unique ids'' by means of a ''unique table'' \cite{dac:BRB90} so that equality can be checked in unit time by a single pointer comparison is exactly the technique of hash consing.
}
    }
    there is only a single copy of a fork grouping and a single copy of a don't-care grouping.
    Consequently, a CFLOBDD has at most two level-0 groupings, which collectively contribute at most five vertices and three edges.
  \item
    $|B|$ refers to the size of $B$, which can be understood either as the number of nodes in $B$ or as the number of edges (transitions): the two quantities are related by a constant factor.
  \item
    Let $\revision{\mathcal{D}_B(d)} \eqdef  |\{u \in B \mid \textit{depth}[u] = d \}|$, the number of nodes at depth $d$ in $B$. In particular, when $d=2^k$, \revision{$\mathcal{D}_B(d)$} denotes the number of leaf nodes in $B$.
    \revision{We will omit the subscript $B$ and simply use the symbol $\mathcal{D}$ when there is little danger of confusion.}
    For the BDD in \Cref{fig:BDDexample}, $\mathcal{D}(1) = 2$ and $\mathcal{D}(3) = 3$.

    \hspace{1.5ex}
    It is clear that $|B|=\sum_{i=0}^{2^k} \mathcal{D}(i)$.

  \item
    Following the convention used in \cite{TOPLAS:SCR24}, we use the term ``nodes'' solely when referring to BDDs; for CFLOBDDs, we refer to the components inside groupings as ``vertices.''
  \item
    For convenience in the counting arguments made later in the paper, we always consider call and return edges to be part of the \emph{caller} groupings.
    For example, all the call and return edges in \Cref{fig:CFLOBDDexample} are considered to be part of $g_2$.
\end{itemize}

\section{Overview}
\label{Se:Overview}

\revision{
This section first discusses the issue ``3/4-depth duplication'' (\Cref{Subse: 3/4-depth-duplication}), which suggests that there is an obstacle to the existence of a polynomial bound on the size of CFLOBDD $C$ relative to the size of BDD $B$, and then presents intuition about a concept we call ``level locality'' (\Cref{Subse: intuitionLevelLocality}), which is fundamental to showing that $|C|$ is at most cubic in $|B|$. Level locality will be formalized in \Cref{Subse: Phi} (via \Cref{Cor:GroupingsAtLSmallerThanB}).
}

\subsection{3/4-Depth Duplication} \label{Subse: 3/4-depth-duplication}

\begin{figure}
    \centering
    \includegraphics[width=1.0\linewidth]{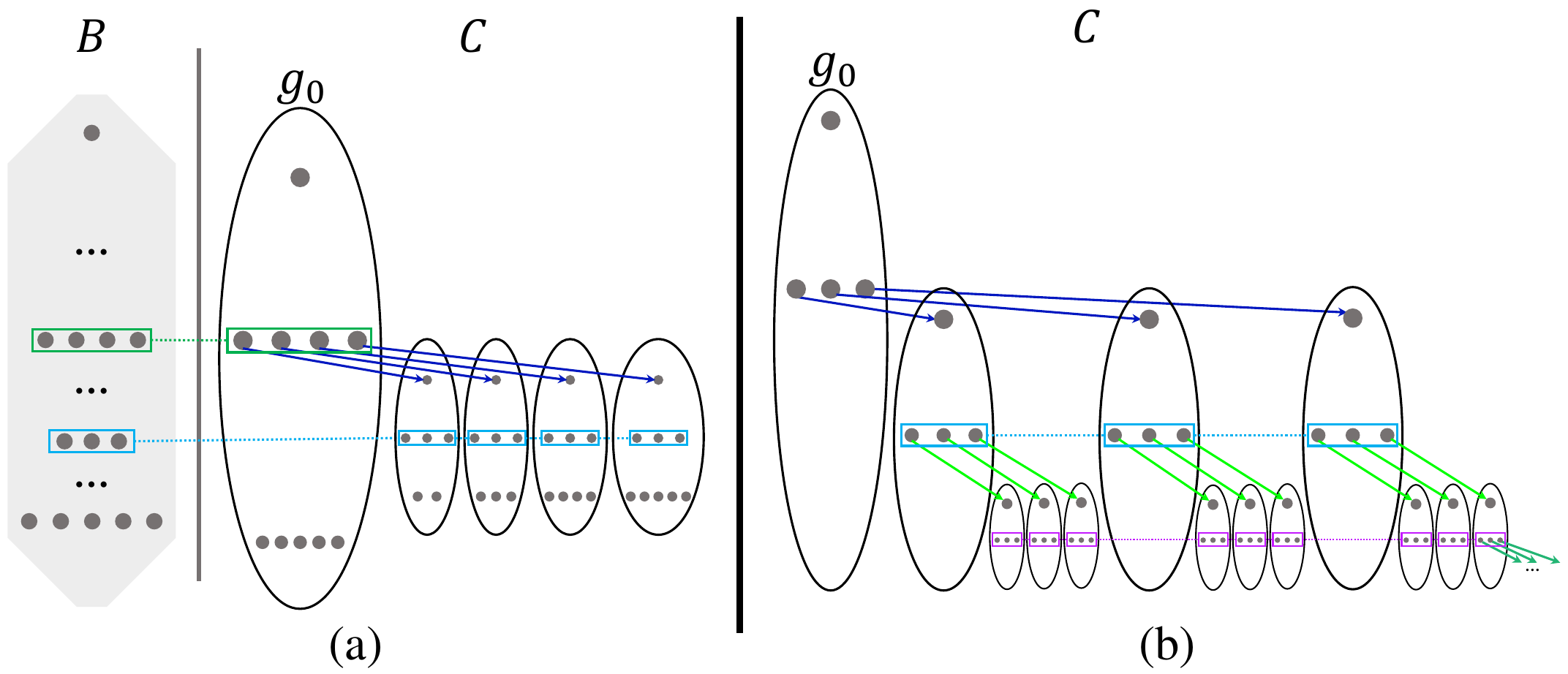}
    \caption{
\revision{
    (Color online.)
}
    (a) An illustration of ``3/4-depth duplication.''
    (b) A hypothetical scenario where ``3/4-depth duplication'' propagates, causing the size of $C$ to grow exponentially compared with the size of $B$.
\revision{
    The middle vertices enclosed in purple rectangles in (b) indicate the duplicates caused by the hypothesized propagation of ``3/4-depth duplication'' to groupings at the next-lower level, which could potentially propagate to still further lower levels.
}
    (To reduce clutter, return edges are elided in these diagrams.)
    }
    \label{fig:3/4 height problem}
\end{figure}

\revision{
As shown in \Cref{fig:3/4 height problem}(a), there are some structural}
differences between a multi-entry decision diagram (i.e., fragments in the interior of $B$) and a single-entry (hierarchical) decision diagram (i.e., proto-CFLOBDD $C$).
Let $g_0$ be the outermost grouping of $C$.
If we look at the interpretation of the second half of the variables in $B$ (in the bottom half of $B$), we find that it is a multi-entry finite-state machine (where the ``entry'' nodes are the nodes at half-depth in $B$).
In contrast, in CFLOBDDs, we will interpret the second half of the variables with some level-$(k\text{-}1)$
\revision{
proto-CFLOBDDs---possibly just one---each
}
of which is a single-entry hierarchical structure.
\revision{}{
When
}
the single-entry proto-CFLOBDDs at level-$(k\text{-}1)$ cannot share with each other, the B-callees of $g_0$ have to be different groupings,
\revision{
which leads to some duplication, as depicted in \Cref{fig:3/4 height problem}(a).
}

\revision{
By ``duplication,'' we do not mean that some grouping appears more than once in the CFLOBDD: that kind of structural repetition would violate the CFLOBDD structural invariants.
Rather, we use ``duplication'' in the narrow sense of a repetition of elements when a (proto-)CFLOBDD is compared with the corresponding BDD.
In particular, as shown in \Cref{fig:3/4 height problem}(a), the sets of middle vertices of the four ``callees'' of $g_0$, enclosed in cyan-colored boxes, each correspond to the nodes of $B$ that are enclosed in the cyan-colored box.
}

\revision{
Because $C$ has multiple copies of the cyan-boxed nodes from $B$, such duplication
raises concerns about a CFLOBDD-to-BDD size bound.
}
\revision{
In general, the nodes of $B$
}
at 3/4-depth correspond to the middle vertices of the B-callees of $g_0$, but these nodes can be duplicated many times to create the middle vertices of the different level-$(k\text{-}1)$ groupings. 
\revision{
To refer to this phenomenon concisely,
}
we call this form of duplication ``3/4-depth duplication.'' 
\revision{
We see from \Cref{fig:3/4 height problem}(a) that, in the worst case, some structures at level-($k$-1)---typically vertices---can be duplicated $\mathcal{D}(2 ^ {k - 1})$ times (i.e., the number of BDD nodes at half depth).
}

\revision{Moreover, because the relationship between multi-entry decision diagrams and single-entry (hierarchical) decision diagrams discussed above could hold at every level, it is possible for ``3/4-depth duplication'' to occur at every level of a proto-CFLOBDD.
}
What is worrisome about this phenomenon is that if such duplication 
\revision{
were to \emph{cascade} across many (or most) levels of a CFLOBDD, $|C|$ could end up being exponentially larger than $|B|$.
\Cref{fig:3/4 height problem}(b) shows a hypothetical scenario for such a cascade, where the groupings branch in a tree-like manner when going from higher levels to lower levels.
} 

\revision{
If such a cascade were to occur, for each $l$, the sizes of the structures at level $l-1$ could be $\mathcal{D}(d_{i_l})$ times larger than the size of structures at level $l$ (for some depth $d_{i_l}$), in which case the size of $C$ could include a product of $\Theta(k)$ of such $\mathcal{D}(d_{i_l})$ factors, which grows faster than any polynomial in $|B|$.
Consequently,
}
it is natural to be concerned about the following question:

\begin{quote}
    Does ``3/4-depth duplication'' at one
\revision{
    level
}
    propagate to \revision{lower} levels?
\end{quote}

\noindent
\revision{Fortunately, it turns out that although ``3/4-depth duplication'' can occur at every level, there is never a cascade effect, as will be established by the counting argument in \Cref{Se:Relations} and \Cref{Se:Counting}.}

\subsection{Intuition about ``Level-Locality''} \label{Subse: intuitionLevelLocality}

\revision{We now present an different viewpoint under which the number of groupings in $C$ can be seen to be at most a polynomial in $|B|$, providing an intuitive basis for the counting argument formalized in \Cref{Se:Relations} and \Cref{Se:Counting}.}

\revision{
Let $g$ be a grouping at level $l$.
There is at least one partial assignment $\alpha = [x_0 \mapsto a_0, \cdots, x_{j-1} \mapsto a_{j-1}]$ that corresponds to a matched path in $C$ ending at the entry vertex of $g$.
If we interpret $\alpha$ in $B$, we get to a node $n$ at depth $j$. Let $f_n$ be the residual function represented by $n$, which takes $2^k - j$ Boolean variables $x_j, \cdots, x_{2^k - 1}$. 
Any residual matched path in $C$, which starts from the entry node of $g$ and reaches one of the value vertices of $C$ after interpreting $x_j, \cdots, x_{2^k - 1}$, corresponds to an assignment of $x_j, \cdots, x_{2^k - 1}$ for the Boolean function $f_n$.
Among these variables, $x_j, \cdots, x_{j + 2^l - 1} $ are interpreted in the proto-CFLOBDD headed by $g$, so in some sense, the proto-CFLOBDD headed by $g$ interprets the first $2^l$ variables of $f_n$.
The number of residual functions like $f_n$ is $|B|$ (because each node of $B$ corresponds to a residual function, and there are $|B|$ nodes in total).
Therefore, there can be at most $|B|$ groupings at level $l$.
The total number of groupings can then be no more than $k \times |B|$.
}

\revision{
This viewpoint analyzes the number of level-$l$ groupings directly from the structure of $B$ without considering the other levels, thereby achieving some form of ``level locality'': the number of groupings at any given level $l$ is bounded by $|B|$.
}

\revision{
Actually, ``level locality'' is fundamental to showing that $|C|$ is cubic in $|B|$.
We will establish the bound following the intuition above: 
\Cref{Se:Relations} defines a binary relation ``$\triangleright$'' between the nodes in $B$ and the groupings in $C$ based on semantic notions such as partial assignments and paths, and then proves that this binary relation is many-one, thereby establishing a structural relationship between $B$ and $C$. 
\Cref{Se:Counting} presents upper bounds on the number of groupings, vertices, and edges of $C$ as a function of $|B|$ by counting them at each level independently and summing the results.
``3/4-depth duplication'' can happen at every level, but it does 
\revision{
\emph{not} propagate, and hence
}
\Cref{fig:3/4 height problem}(b) is a misleading picture.
\Cref{Se:TightInstances} presents
\revision{
tight instances of all three bounds, the construction of which is based on ``3/4-depth duplication.''
}
In \Cref{Se:Conclusion}, we recap why ``3/4-depth duplication'' does not propagate across levels.
}
\section{Structural Relationships between B and C}
\label{Se:Relations}

This section establishes two structural relationships between $B$ and $C$. \Cref{Subse: relation |>} defines a binary relation ``$\triangleright$'' between nodes of $B$ and groupings of $C$. \Cref{Subse: Properties |>} proves some properties of ``$\triangleright$''. Based on that, \Cref{Subse: Phi} defines a mapping $\Phi$, which captures a key relationship between $B$ and $C$.

\subsection{The Binary Relation ``$\triangleright$''}
\label{Subse: relation |>}

\revision{
As mentioned in \Cref{Subse: intuitionLevelLocality}, to establish how the structures of $B$ and $C$ are related, 
}
we start with a relation ``$\triangleright$'' between nodes of $B$ and 
\revision{
groupings of $C$.
Informally, $n \triangleright g$ captures the following condition:
\begin{quote}
  There exists a partial assignment $\alpha$ whose interpretation in $B$ follows a path to node $n$, and whose interpretation in $C$ follows a matched path to the entry vertex of $g$.
\end{quote}

(This notion will be defined formally in \Cref{Def:Simu}.)
}

\begin{example}\label{Ex:TriangleManyMany}

\begin{figure}
    \centering
    \includegraphics[width=0.4\linewidth]{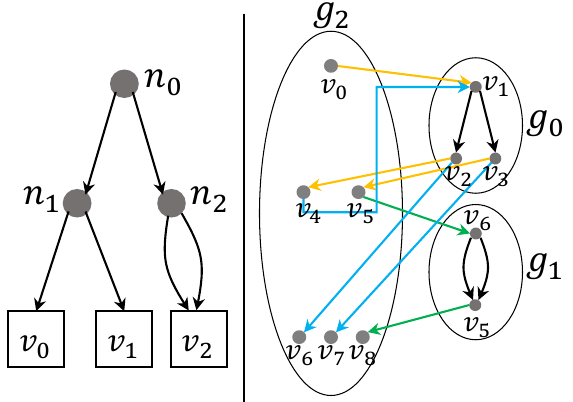}
    \caption{
\revision{
    (Color online.)
}
    The BDD (left) and the CFLOBDD (right) for the function ``$\textit{if}\ (\neg x_0)\ \textit{then}\ (\textit{if}\ (\neg x_1)\ \textit{then}\ v_0\ \textit{else}\ v_1)\ \textit{else}\ v_2$.''}
    \label{fig:DefSimu}
\end{figure}

The BDD and the CFLOBDD for the function
\[
  \textit{if}\ (\neg x_0)\ \textit{then}\ (\textit{if}\ (\neg x_1)\ \textit{then}\ v_0\ \textit{else}\ v_1)\ \textit{else}\ v_2
\]
is shown in \Cref{fig:DefSimu}.
When we interpret the partial assignment $[x_0 \mapsto 0]$, we will take the path $n_0 \xrightarrow{x_0 \mapsto 0} n_1$ in B and $v_0 \xrightarrow{\text{call}} v_1 \xrightarrow{x_0 \mapsto 0} v_2 \xrightarrow{\text{ret}} v_4 \xrightarrow{\text{call}} v_1$ in $C$. 
Consequently, we have $n_1 \triangleright g_0$.
In addition, via the empty partial assignment [], $n_0 \triangleright g_2$ and $n_0 \triangleright g_0$ both hold.

\end{example}

The following is the formal definition of the binary relation ``$\triangleright$'':

\begin{definition}\label{Def:Simu}
    For a node $n \in B$ and a grouping $g \in C$, $n \triangleright g$ holds iff there is an assignment $\alpha$ and some $j$, such that:
    \begin{itemize}
      \item
        we reach $n$ after interpreting $\alpha(x_0), \alpha(x_1), \cdots, \alpha(x_{j-1})$ in $B$, and
      \item
        we reach the entry vertex of $g$ after the following steps:
        \begin{enumerate}
          \item 
            \label{It:InterpretStep}
            interpreting $\alpha(x_0), \alpha(x_1), \cdots, \alpha(x_{j-1})$ in $C$,
\revision{
            following a path $P$ that obeys the Matched-Path Principle of \Cref{Se:CFLOBDDs}
}
          \item
            \label{It:FollowMatchingReturnsStep}
            following some number of return edges (possibly zero),
\revision{
            obeying the Matched-Path Principle with respect to closest preceding call edges in $P$
}
          \item
            \label{It:FollowCalsStep}
            following some number of call edges (possibly zero), reaching the entry vertex of $g$.
          \end{enumerate}
    \end{itemize}
\end{definition}

\revision{In \Cref{Ex:TriangleManyMany}, the partial assignment $[x_0 \mapsto 0]$ witnesses $n_1 \triangleright g_0$. According to \Cref{Def:Simu}, the corresponding path is divided into three parts, labeled (1), (2), and (3), as follows:

\[
  \underbrace{v_0 \xrightarrow{\text{call}} v_1 \xrightarrow{x_0 \mapsto 0} v_2}_{(1)} 
  \underbrace{\xrightarrow{\text{ret}} v_4}_{(2)}
  \underbrace{\xrightarrow{\text{call}} v_1}_{(3)}
\]

The empty partial assignment $[]$ witnesses $n_0 \triangleright g_0$. In this case, the matched path in $C$ consists of just a single call edge from $v_0$ to $v_1$.
According to \Cref{Def:Simu}, this edge belongs to the (3) part, and the (1) and (2) parts are empty:
\[
  \underbrace{v_0 \xrightarrow{\text{call}} v_1 }_{(3)}
\]

The empty assignment also witnesses $n_0 \triangleright g_2$.
In this case, the matched path in $C$ is the degenerate path consisting of just the vertex $v_0$, so parts (1), (2), and (3) are all empty.

We note that the (1) and (2) parts of the matched path in $C$ are empty only when $n$ is the root node of $B$; the (3) part of the path is empty only when $n$ is the root node of $B$ and $g$ is the outermost grouping of $C$.
In addition, in \Cref{Ex:TriangleManyMany}, $n_0 \triangleright g_0$, $n_0 \triangleright g_2$, and $n_1 \triangleright g_0$ all hold. In this sense, the relation ``$\triangleright$'' is many-many.
}

% example "Def:Simu"
% For the example in \Cref{fig:DefSimu}, when we interpret the partial assignment $[x_0 \mapsto 1]$, the path we take is denoted by the red line. So we have $n_2 \triangleright g_0$. Note that in a CFLOBDD, we can move along call edges without reading a variable. Thus, $n_0 \triangleright g_2$ and $n_0 \triangleright g_1$ both hold for \Cref{fig:DefSimu}. 

\subsection{The Properties of ``$\triangleright$''}
\label{Subse: Properties |>}
 
We hope to associate every grouping in $C$ with some components of $B$. As we now show, ``$\triangleright$'' covers every grouping in $C$, and it is strongly connected with depth-alignment in $B$. The following theorem formalizes these properties:

\begin{theorem} \label{Lem:Align}
\revision{
    For a given $l$, the relation $\triangleright$ is ``surjective'' in both directions between the level-$l$ groupings and the nodes at depths\footnote{
\revision{
      Recall that the depth of a node equals the number of Boolean variables that have been interpreted when the node is reached, and for a BDD with $2 ^ k$ variables, a node at depth $d$ corresponds to a residual function with $2 ^ k - d$ variables. When discussing divisibility by $2 ^ i$ (with $0 \le i \le k$) in this theorem, using $d$ or $2 ^ k - d$ gets equivalent results, because $d$ and $2 ^ k - d$ share the same divisibility for $2^i$ (with $0 \le i \le k$).
}
    }
    that are multiples of $2 ^ l$:
}
    \begin{enumerate}[label=(\alph*)]
        \item For any level-$l$ grouping $g$, there exists $n_0 \in B$ such that $n_0 \triangleright g$ and \depth[$n_0$] is a multiple of $2^l$. 
        \item For any node $n$, if \depth[n] is a multiple of $2^l$ and $n$ is not a terminal node, there exists a level-$l$ grouping $g_0$ such that $n \triangleright g_0$.
    \end{enumerate}
\end{theorem}

\begin{proof}
% the proof of (a)
    We first prove $(a)$. We can assert that there is a partial assignment $\alpha$ that leads us to the entry vertex of $g$ in $C$ (following steps (1), (2), and (3) from \Cref{Def:Simu}). There can be many such $\alpha$s, but it suffices to pick any of them. We will get to some node $n_0$ if we interpret $\alpha$ in $B$. $n_0 \triangleright g$ obviously holds. Moreover, the depth of node $n_0$ must be a multiple of $2^l$. To prove it, we fix $l$ and do an induction on $k$. Recall that $k$ is the maximum level of $C$, so it is obvious that $k \ge l$. For this induction, the base case is $k = l$ because the premises are contradictory when $k < l$. Because $\depth$ in $B$ is defined as the number of variables read, we use the following inductive hypothesis:
    \begin{quote}
        For a proto-CFLOBDD with maximum level $k$, if we reach the entry vertex of a level-$l$ grouping with the partial assignment $[x_0 \mapsto a_0, \cdots, \revision{x_{j-1} \mapsto a_{j-1}}]$, then the number of variables read \revision{(i.e., $j$)} is a multiple of $2^l$.
    \end{quote}
    The cases are shown in \Cref{Fig: Align}. Let $g_0$ be the outermost grouping of $C$:

    \begin{figure}
        \centering
        \includegraphics[width=0.5\linewidth]{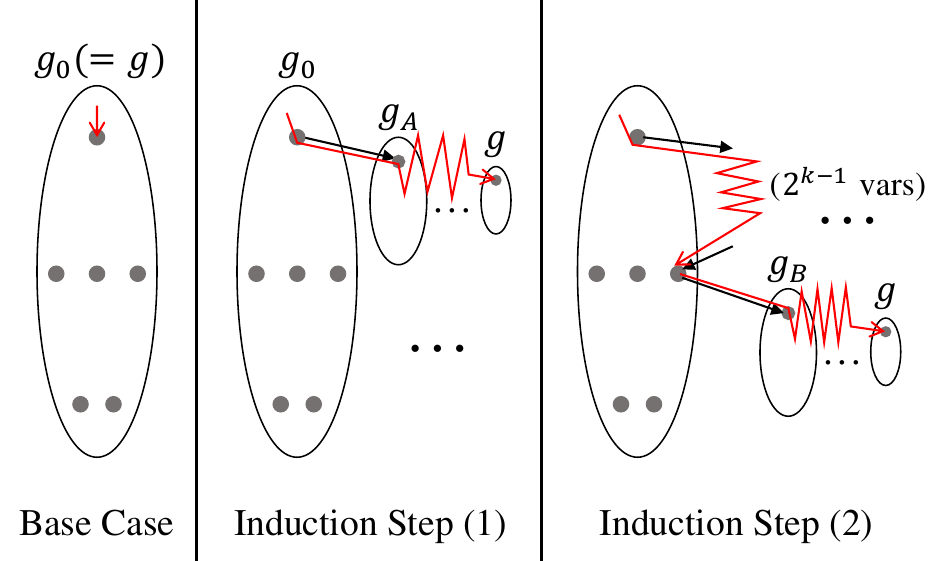}
        \caption{
\revision{
        (Color online.)
}
        The cases for the proof of \Cref{Lem:Align}. The red line stands for 
\revision{
        the path considered
}
        in CFLOBDD $C$.
\revision{
        (To reduce clutter, return edges are elided in these diagrams.)
}
        }
        \label{Fig: Align}
    \end{figure}

    \paragraph{Base Case:} If $k = l$ , then $g = g_0$; $n$ must be the root of $B$. No variable has been read, and the result
    holds because 0 is a trivial multiple of $2^k$.

    \paragraph{Induction Step:}
    Assume that the inductive hypothesis is true for (proto-CFLOBDDs with maximum level) $k-1$, and now $k > l$. Our goal is to show that the induction hypothesis is true for $k$. There are two cases for $g$'s ``position'' in $C$ after following the path for the partial assignment $[x_0 \mapsto a_0, \cdots, x_j \mapsto a_j]$:
    \begin{enumerate}
      \item
        If we are still in the A-connection part of $g_0$
        (that is, the path did not reach the middle vertices of $g_0$), all readings of the values of variables
        take place in the proto-CFLOBDD headed by $g$'s A-callee $g_A$, which is a proto-CFLOBDD of maximum level $k - 1$. We can draw the conclusion directly by applying the inductive hypothesis to $g_A$, and thus the induction hypothesis is true for $k$.
      \item
        If we are in the B-connection part, we divide the path into two parts: the part in the A-connection (the proto-CFLOBDD headed by $g_A$) of $g$, and the part in some B-connection (a proto-CFLOBDD headed by $g_B$) of $g$. We read $2 ^ {k-1}$ variables in the first part. According to the inductive hypothesis, the number of variables read in the second part is a multiple of $2^l$. Because $2 ^ {k-1}$ is a multiple of $2 ^ l$, their sum is a multiple of $2 ^ l$, and thus the induction hypothesis is true for $k$.
    \end{enumerate}
% the proof of (b)
    We next prove $(b)$. Similar to $(a)$, we can use the following inductive hypothesis:
    \begin{quote}
        For a proto-CFLOBDD with maximum level $k$, if $j$ is a multiple of $2^l$ and $j < 2 ^ k$, 
        then for any partial assignment
\revision{
        $\alpha = [x_0 \mapsto a_0, \cdots, x_{j-1} \mapsto a_{j-1}]$,
}
        there exists some $g$ at level-$l$ such that we reach the entry vertex of $g$ after: 
        \begin{enumerate}
          \item 
            interpreting $\alpha(x_0), \alpha(x_1), \cdots, \alpha(\revision{x_{j-1}})$ in $C$
          \item
            following some number of return edges (possibly zero)
          \item
            following some number of call edges (possibly zero)
          \end{enumerate}
    \end{quote}

    \Cref{Fig: Align} also shows the cases for the proof of $(b)$. We do an induction with the same pattern as \textit{(a)}. Again depth in $B$ corresponds to the number of variables read. Let $g_0$ be the outermost grouping of $C$:
    \paragraph{Base Case:} If $k = l$, then $j$ must equal to $0$ because $0$ is the only non-negative multiple of $2^l$ less than $2^k\ (=2^l)$. Then $g_0$ is the grouping we are looking for---we reach the entry vertex of $g_0$ after interpreting $0$ variables, and following no edges.

    \paragraph{Induction Step:}
    Assume that the inductive hypothesis is true for (proto-CFLOBDDs with maximum level) $k - 1$, and now $k > l$. Our goal is to show that the induction hypothesis is true for $k$. There are two cases for the value of $j$: $0 \le j < 2 ^ {k - 1}$ and $2 ^ {k-1} \le j < 2 ^ k$.
    \begin{enumerate}
      \item
        If $0 \le j < 2 ^ {k - 1}$, the interpretation of \revision{$x_0, \cdots, x_{j-1}$} takes place in the proto-CFLOBDD headed by $g_0$'s A-callee $g_A$ after we move from $g_0$'s entry vertex to $g_A$'s entry vertex along a call edge. We can find the proper $g$  by applying the inductive hypothesis to $g_A$.
      \item
        If $2 ^ {k-1} \le j < 2 ^ k$, we will ``return'' to a middle vertex of $g_0$ after interpreting \revision{$x_0, \cdots, x_{2^{k-1} - 1}$} in the proto-CFLOBDD headed by $g$'s A-callee. Then, we will interpret the remaining $j - 2 ^ {k-1}$ variables (\revision{$x_{2^{k-1}}, \cdots, x_{j-1}$}, possibly zero) in one of $g_0$'s B-callees $g_B$. Because both $j$ and $2^{k-1}$ are multiples of $2^l$, $j - 2 ^ {k-1}$ is a multiple of $2 ^ l$. Then we can apply the inductive hypothesis to $g_B$ and find the proper $g$, and thus the induction hypothesis is true for $k$.
    \end{enumerate}
\end{proof}

\Cref{Lem:Align}(a) shows the possibility to make a mapping from some components of $B$ to $C$. To make a well-defined mapping, we also have to show that ``$\triangleright$'' is many-one when looked at in the right way. As we saw in \Cref{Ex:TriangleManyMany}, the ``$\triangleright$'' relation itself is many-many. But actually, ``$\triangleright$'' is many-one with the level-$l$ fixed. To establish that many-oneness holds after fixing the level, as well as to serve counting purposes later, we need to establish correspondences among  ``bigger'' structures in BDDs and CFLOBDDs. In a CFLOBDD, a ``bigger structure'' is exactly a proto-CFLOBDD \cite[Def.\ 4.1]{TOPLAS:SCR24}, whereas in the corresponding BDD, a ``bigger structure'' is what we call a \emph{BDDpatch}:

\begin{definition}\label{De:BDDpatch}
    For a
\revision{
    BDD
}
    node $n$, BDDpatch$(n, h)$ is defined as the substructure containing node $n$ and all descendants of \revision{$n$} at depths \depth{[\revision{$n$}]} to \depth[\revision{$n$}]$+ h \le 2^k$ (inclusive) \revision{and all the edges between them}. 
\end{definition}

For the BDD in \Cref{fig:DefSimu}, BDDpatch$(n_0, 1)$ contains nodes $\{n_0, n_1, n_2\}$ and \revision{{the edges $\{n_0 \to n_1, n_0 \to n_2\}$}}.
BDDpatch$(n_2, 1)$ contains nodes $\{n_2, v_2\}$ and the two edges from $n_2$ to leaf $v_2$. Note that BDDpatch$(n, h)$ can also be viewed as a BDD---the ``value'' of each leaf is a pseudo-Boolean function over the variables $\{ x_{\depth[\revision{n}]+h}, \ldots, x_{2^k-1} \}$.
From the vantage point of $B$, let $f_n$ be the residual function represented by $n$, which takes variables $\{ x_{\depth[n]}, \ldots, x_{2^k-1}\}$.
Suppose that BDDpatch($n, h$) has $m$ leaves.
Function $f_n$ can be considered to be the composition of
(i) the function over $h$ variables represented by BDDpatch($n, h$) proper, with (ii) $m$ (different) functions $h_1$, $\ldots$, $h_m$---each over the variables $\{ x_{\depth[n]+h}, \ldots, x_{2^k-1}\}$---corresponding to the leaf nodes of BDDpatch($n, h$).

We can now establish a correspondence between BDDpatches and proto-CFLOBDDs. Let $\alpha = [x_0 \mapsto a_0, x_1 \mapsto a_1, \cdots, x_j \mapsto a_j]$ be a partial assignment to the first $j+1$ variables, and suppose that by interpreting $\alpha$ we get to node $n$ in $B$, and the entry vertex of grouping $g$ in $C$. Then variables $x_{j+1}, x_{j+2}, \cdots, x_{j+2^l}$ will be interpreted both (a) during a traversal of BDDpatch($n, 2^l$), and also (b) during a traversal of the proto-CFLOBDD headed by $g$. In this sense, the proto-CFLOBDD headed by $g$ should ``simulate'' BDDpatch($n, 2^l$). \Cref{Lem:Corr} formalizes this intuition:

\begin{lemma}\label{Lem:Corr}
    Let $g$ be a level-\textit{l} grouping in $C$. For a node $n$ in $B$, if $n \triangleright g$, then BDDpatch$(\revision{n}, 2^l)$ is equivalent to the proto-CFLOBDD headed by $g$. That is, there exists a bijection\footnote{
      There is not an explicit order to the leaves of a BDD. We index the leaves of the BDDpatch according to the lexicographic order of the leftmost reaching path, and the bijection is an identity map (between the BDDpatch's leaf node $i$ and the grouping's exit vertex $i$).
    }
    between the leaf nodes of BDDpatch$(\revision{n}, 2^l)$ and the exit vertices of $g$ such that we always get to a corresponding leaf node/exit vertex pair for any assignment of $2^l$ variables.
\end{lemma}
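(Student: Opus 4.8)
The plan is to induct on the level $l$ of $g$. Fix an assignment $\alpha$ (and index $j$, with $j+1 = \depth[n]$) witnessing $n \triangleright g$; by \Cref{Lem:Align}(a), $d \eqdef \depth[n]$ is a multiple of $2^l$, so the $2^l$ variables at issue are $x_d, \ldots, x_{d+2^l-1}$. Rather than build the bijection by hand, I would prove the equivalent statement that the proto-CFLOBDD headed by $g$ and BDDpatch$(n, 2^l)$ induce the \emph{same partition} of $\{0,1\}^{2^l}$, where the block of an assignment $\beta$ is the exit vertex of $g$ that the matched path reaches (resp.\ the leaf of BDDpatch$(n, 2^l)$ reached from $n$). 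Since every exit vertex of $g$ is reachable (a structural invariant) and distinct leaves of BDDpatch$(n,2^l)$ are distinct nodes of $B$, hence distinct residual functions, each partition has exactly one block per exit vertex (resp.\ per leaf); so equality of the two partitions gives a routing-preserving bijection, and comparing the canonical leftmost-path orders on the two sides shows it is the identity under the indexing of the footnote.

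The argument rests on one semantic observation. If $n \triangleright g$ via $\alpha$, then for \emph{every} full assignment $\alpha\beta\gamma$ extending $\alpha$, the matched path of $C$ visits the entry vertex of $g$ immediately after reading $\alpha$, hence an exit vertex of $g$ immediately after reading $\alpha\beta$; and by the \MPP, the rest of the computation depends only on that exit vertex together with the call stack built while reading $\alpha$, not on $\beta$ or $\gamma$. (After reading the last variable of $\alpha$ the path climbs to the first middle vertex it meets and then descends by one $B$-call and some $A$-calls to the level-$0$ grouping that reads $x_d$; every $g$ with $n \triangleright g$ lies on this descent.) Since interpreting $\alpha$ in $B$ reaches $n$, this yields one inclusion of partitions at once: if $\beta,\beta'$ reach the same exit vertex of $g$ then $f_n(\beta,\cdot) = f_n(\beta',\cdot)$, so $\beta$ and $\beta'$ reach the same leaf of BDDpatch$(n,2^l)$.

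For the base case $l = 0$: $g$ is a fork or a don't-care grouping and BDDpatch$(n,1)$ is $n$ with its one or two successors; using the inclusion just noted together with ``distinct nodes means distinct residual functions,'' one shows $n$ is a don't-care node iff the two branches of $g$ lead to the same exit vertex, i.e.\ iff $g$ is a don't-care grouping, after which the match of the $0$/$1$ branches with the first/second exit vertex is read off from the definitions. For the inductive step $l \ge 1$: following the $A$-call edge out of $g$'s entry shows $n \triangleright g_A$, so the inductive hypothesis gives a routing-preserving bijection between the leaves $p_1, \ldots, p_t$ of BDDpatch$(n, 2^{l-1})$ and the exit vertices $y_1, \ldots, y_t$ of $g_A$. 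One then argues that the $A$-return map $\tau$ (with $y_i$ returning to middle vertex $m_{\tau(i)}$) is injective: if $\tau(i) = \tau(i')$ with $i \ne i'$ then $p_i \ne p_{i'}$, but picking a lower-half assignment that separates $f_{p_i}$ from $f_{p_{i'}}$ and using that a $B$-call followed by reading the lower half is context-free, the inclusion above forces $f_{p_i} = f_{p_{i'}}$, a contradiction. Hence $t$ equals the number of middle vertices of $g$, $m_i \leftrightarrow y_i \leftrightarrow p_i$, and following the $B$-call edge out of $m_i$ shows $p_i \triangleright b_i$, so the inductive hypothesis applies again to each pair $(b_i, p_i)$. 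Composing the two layers of bijections through $\tau$ and the $B$-return maps $\rho_i$ then gives a candidate bijection between the leaves of BDDpatch$(n,2^l)$ and the exit vertices of $g$.

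The remaining --- and main --- obstacle is that this candidate map must be shown well-defined and injective, even though a single leaf of BDDpatch$(n,2^l)$ may be reachable along several routes (via several $p_i$) because BDDpatch$(n,2^l)$, being a sub-DAG of $B$, inherits $B$'s substructure sharing; this is exactly ``3/4-depth duplication'' viewed from the BDD side, and resolving it is where the structural invariants of $C$ (canonicity, \Cref{The:Canonicity}, and the reducedness invariants of \Cref{Appendix: StructuralInvariants}) are essential. No \emph{over}-gluing can occur: if two distinct leaves $\ell \ne \ell'$ both routed through $g$ to one exit vertex $e$, then in the $\alpha$-context the continuation from $e$ would equal both $f_\ell$ and $f_{\ell'}$, forcing $\ell = \ell'$. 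To rule out \emph{under}-gluing --- i.e.\ that $g$ has more exit vertices than BDDpatch$(n,2^l)$ has leaves --- I would first show, by the same induction, that BDDpatch$(n^*, 2^l)$ has the same routing partition for \emph{all} $n^*$ with $n^* \triangleright g$ (the pieces $g_A$, the $b_i$, $\tau$, and the $\rho_i$ being fixed by $g$); merging the blocks of $g$'s routing partition down to this common partition would then be legitimate in every context in which $g$ is called, producing a strictly smaller valid CFLOBDD for $f$ and contradicting canonicity. Assembling the base case, the inductive step, and this no-duplication argument completes the proof.
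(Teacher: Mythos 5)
Your overall architecture matches the paper's: induction on the level, decomposition of $g$ into its A-callee $g_A$ and its B-callees, the observation that $n \triangleright g_A$ and (via the half-depth nodes $p_i$) that each $p_i$ is related to the corresponding B-callee, and the identification of exactly two bad cases to exclude---two assignments reaching the same exit vertex of $g$ but different leaves of the patch (``over-gluing''), and the same leaf but different exit vertices (``under-gluing''). Your treatment of over-gluing is sound and is essentially the paper's: the continuation from a single exit vertex in a fixed context computes a single residual function, while distinct BDD nodes represent distinct residual functions. The genuine gap is in your treatment of under-gluing. The paper excludes it by a direct operational construction (the long footnote in \Cref{Appendix: ProofLemCorr}): given two \emph{distinct} exit vertices of the same grouping reached in the same context, one can always extend the two partial assignments, stepping along suffixes of matched paths, to two \emph{distinct} terminal values, using the injectivity of return maps (invariant~\ref{Inv:2a}), the requirement that B-connections with equal callees have different return tuples (invariant~\ref{Inv:4}), and the distinctness of terminal values (invariant~\ref{Inv:6}). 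This separation property is the crux of the lemma, and your proposal never establishes it.

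Concretely, your route---show that BDDpatch$(n^*,2^l)$ has the same routing partition for every $n^*$ with $n^* \triangleright g$, then merge exit vertices and contradict canonicity---breaks at the first step. The justification ``the pieces $g_A$, the $b_i$, $\tau$, and the $\rho_i$ being fixed by $g$'' determines, via the inductive hypothesis, the partitions \emph{within} each sub-patch BDDpatch$(p_i,2^{l-1})$, but not the identification of leaves \emph{across} different sub-patches: whether a leaf of BDDpatch$(p_i,2^{l-1})$ and a leaf of BDDpatch$(p_j,2^{l-1})$ are the same node of $B$ is decided by residual functions below depth $\depth[n^*]+2^l$, i.e., by the context beneath the patch, which $g$ does not control. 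Proving that this cross-sub-patch gluing is context-independent is essentially equivalent to the separation property you are missing, so the appeal to canonicity does not discharge it---canonicity at best yields that two distinct exit vertices are separable in \emph{some} context, whereas the lemma needs separability in the given context. (A second, more repairable, issue is that ``merging the blocks \ldots\ producing a strictly smaller valid CFLOBDD'' silently requires re-establishing invariants~\ref{Inv:1}, \ref{Inv:2b}, \ref{Inv:3}, and~\ref{Inv:4} after the merge and propagating the change through every caller of $g$.) To close the gap, replace the canonicity detour with the footnote's argument: from two distinct exit vertices reached in the same context, construct a common suffix assignment that reaches two distinct terminal values, contradicting the equality of the residual functions that holds when the two prefixes reach the same leaf of the patch.
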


The proof can be found in the \Cref{Appendix: ProofLemCorr}.

With the help of \Cref{Lem:Corr}, we can establish a condition under which the many-oneness holds for ``$\triangleright$'':
% Simu's property about entry vertices
\begin{lemma}\label{Lem:ManyOne}
    For all $n \in B$, if $g$ and $g'$ are two level-$l$ groupings for which $n \triangleright g$ and $n \triangleright g'$ both hold, then $g \equiv g'$ (i.e., $g$ is identical to $g'$).
\end{lemma}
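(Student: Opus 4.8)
The plan is to deduce the claim from canonicity of (proto-)CFLOBDDs, using \Cref{Lem:Corr} to show that the proto-CFLOBDDs headed by $g$ and $g'$ compute the same data. First I would pin down what $n$ fixes on the BDD side. Writing $d \eqdef \depth[n]$, \Cref{Def:Simu} forces every assignment witnessing $n \triangleright g$ (resp.\ $n \triangleright g'$) to have read exactly $d$ variables when $n$ is reached, so the proto-CFLOBDD headed by $g$ and the one headed by $g'$ are both used to interpret the same block of variables $x_d, x_{d+1}, \ldots, x_{d+2^l-1}$, in the same order. Applying \Cref{Lem:Corr} to $n \triangleright g$ and again to $n \triangleright g'$ then shows that each of these two proto-CFLOBDDs is equivalent to the \emph{same} structure $\mathrm{BDDpatch}(n, 2^l)$, via the identity-on-indices bijection described in the footnote to \Cref{Lem:Corr}. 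In particular $g$ and $g'$ have the same number $m$ of exit vertices (namely the number of leaves of $\mathrm{BDDpatch}(n,2^l)$), and for every assignment $\sigma$ of the $2^l$ variables the index of the exit vertex of $g$ reached under $\sigma$ equals the index of the leaf of $\mathrm{BDDpatch}(n,2^l)$ reached under $\sigma$, which equals the index of the exit vertex of $g'$ reached under $\sigma$. Hence the two proto-CFLOBDDs have literally the same ``return map'' from $2^l$-bit assignments to exit-vertex indices.

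Next I would turn equality of return maps into an isomorphism. Because the proto-CFLOBDDs headed by $g$ and by $g'$ occur inside the well-formed CFLOBDD $C$, each is itself a proper proto-CFLOBDD satisfying the structural invariants of \Cref{Appendix: StructuralInvariants}. I would complete each to a genuine level-$l$ (multi-terminal) CFLOBDD by attaching the same value tuple---sending exit vertex $i$ to the terminal value $i$ for $i = 1, \ldots, m$; the tuple $(1, 2, \ldots, m)$ has no repeats and is in first-occurrence order, so it meets the value-tuple invariant. Call the results $\widehat{g}$ and $\widehat{g'}$. By the previous paragraph, for every assignment $\sigma$ of $x_d, \ldots, x_{d+2^l-1}$ the CFLOBDDs $\widehat{g}$ and $\widehat{g'}$ evaluate to the same terminal value, i.e.\ they represent the same function over the same $2^l$ variables with the same variable ordering. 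Canonicity (\Cref{The:Canonicity}, in its multi-terminal form) then yields an isomorphism $\widehat{g} \cong \widehat{g'}$; since any such isomorphism must carry value edges to value edges preserving the attached terminal value, it carries exit vertex $i$ of $g$ to exit vertex $i$ of $g'$ and restricts to an isomorphism of the proto-CFLOBDDs headed by $g$ and by $g'$.

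Finally I would invoke reducedness of $C$: because $C$ is a fully reduced, hash-consed CFLOBDD, any two groupings of $C$ whose proto-CFLOBDDs are isomorphic are in fact the very same grouping (this maximal sharing is precisely what underlies \Cref{The:Canonicity}). Since we have shown $\mathrm{proto\text{-}CFLOBDD}(g) \cong \mathrm{proto\text{-}CFLOBDD}(g')$, it follows that $g \equiv g'$.

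I expect the only real friction to be bookkeeping rather than mathematics: (i) checking that the ``same variable ordering'' hypothesis of \Cref{The:Canonicity} is genuinely satisfied, which is exactly what the depth argument in the first paragraph secures by anchoring both proto-CFLOBDDs to the variable block starting at $\depth[n]$; and (ii) being entitled to the multi-terminal version of canonicity and to the ``proto-CFLOBDD canonicity'' consequence used in the second paragraph---both are part of the standard CFLOBDD development and could instead be cited directly from \Cref{Appendix: StructuralInvariants}, in which case the value-edge completion step can be skipped and \Cref{Lem:Corr} supplies the equality of return maps outright.
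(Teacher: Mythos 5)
Your proposal is correct and follows essentially the same route as the paper's own proof: apply \Cref{Lem:Corr} to both $n \triangleright g$ and $n \triangleright g'$ to conclude that both proto-CFLOBDDs are equivalent to the same $\mathrm{BDDpatch}(n,2^l)$ (so corresponding exit vertices are reached by the same sets of assignments), and then invoke canonicity to conclude $g \equiv g'$. Your additional bookkeeping---completing each proto-CFLOBDD with the value tuple $(1,\ldots,m)$ before applying \Cref{The:Canonicity}, and appealing to structural invariant~\ref{Inv:3} (no two separate instances of equal proto-CFLOBDDs) to pass from isomorphic to identical---just makes explicit what the paper's two-sentence proof leaves implicit.
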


\begin{proof}
    According to \Cref{Lem:Corr}, $g$ and $g'$ are both equivalent to BDDpatch($\revision{n}, 2^l$), which means that their corresponding exit vertices are reached (from the entry vertices of $g$ and $g'$) by the same sets of assignments.\footnote{
      An alternative way of thinking about $g$ and $g'$ is to consider an assignment as a binary string in $\{ 0,1 \}^{2^l}$, and thus each exit vertex of a level-$l$ proto-CFLOBDD represents a non-empty set of strings---i.e., a language, and the languages for the different exit vertices partition the set $\{ 0,1 \}^{2^l}$.
      See \cite[\S6]{cflobdd_arxiv}.
    }
    By the CFLOBDD Canonicity Theorem \cite[Thm.\ 4.3]{TOPLAS:SCR24}, $g$ is identical to $g'$.
\end{proof}

\revision{
\Cref{Lem:Corr} also helps us establish a result that relates the middle- and exit-vertices in $C$ with some nodes in $B$:
}

\begin{figure}
    \centering
    \includegraphics[width=0.6\linewidth]{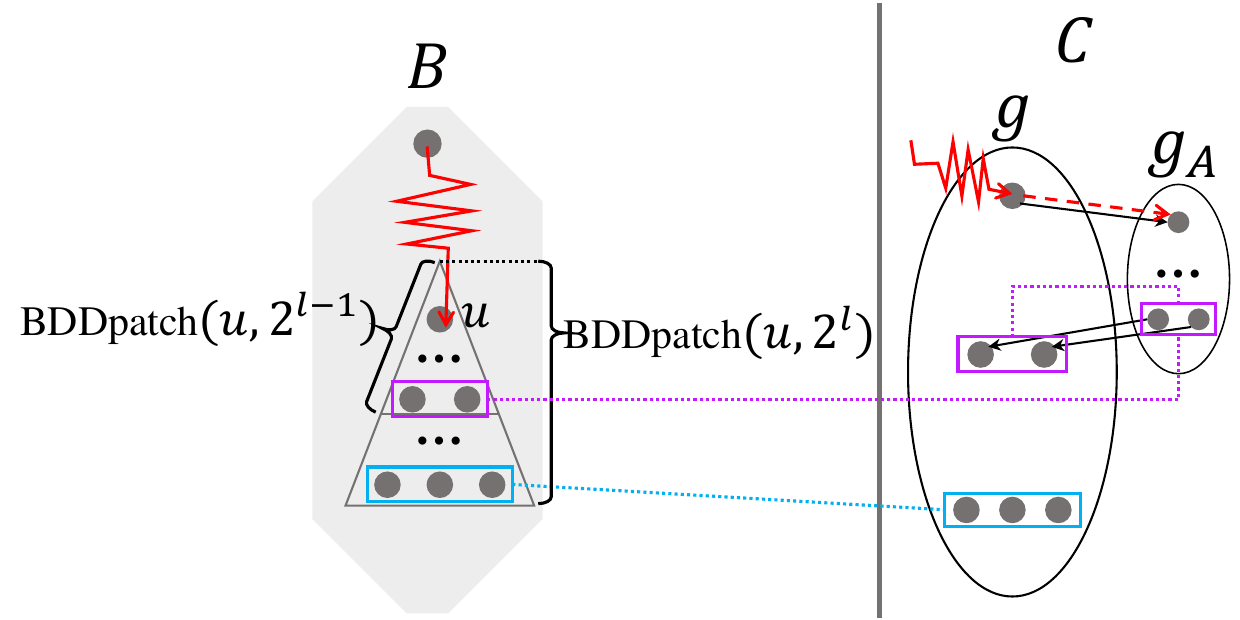}
    \caption{
    \revision{(Color online.)}
    An illustration of the properties established in \Cref{Lem:NumberMidExit1New}.
}
    \label{Fig:NumberMidExitNew}
\end{figure}

\revision{
\begin{lemma}\label{Lem:NumberMidExit1New}
    Let $u$ be a node in $B$, and let $g$ be a level-$l$ grouping in $C$. If $u \triangleright g$, then:
    \begin{enumerate}[label=(\alph*)]
        \item if $l \ge 1$, the number of middle vertices of $g$ equals the number of leaf nodes of BDDpatch$(u, 2 ^ {l - 1})$.
        \item the number of exit vertices of $g$ equals the number of leaf nodes of BDDpatch$(u, 2 ^ l)$.
    \end{enumerate}
\end{lemma}
}

\revision{
 An illustration of the properties established in \Cref{Lem:NumberMidExit1New} is given in \Cref{Fig:NumberMidExitNew}.
}

\revision{
\begin{proof}
    We first prove \textit{(b)}. According to \Cref{Lem:Corr}, $g$ is equivalent to BDDpatch($u, 2^l$) (i.e., there is a bijection between the leaf nodes of BDDpatch($n, 2 ^ l$) and the exit vertices of $g$ about paths. Such a bijection implies the equality between the number of exit vertices of $g$ and the leaf nodes of BDDpatch$(u, 2 ^ l)$.

    We can establish property \textit{(a)} by changing our viewpoint to level $(l\text{-}1)$.
    Because $l \ge 1$, $g$ is a non-level-0 grouping, we can move from $g$'s entry vertex to the entry vertex of $g$'s A-callee $g_A$ (via $g$'s A-call edge) without interpreting any Boolean variable.
    According to the definition of $\triangleright$, this step is just appending a call edge into the (3) part of our path (see \Cref{Def:Simu}), and thus we have $u \triangleright g_A$. 
    Because $g_A$ is at level $(l\text{-}1)$, we know from \textit{(b)} that the number of exit vertices of $g_A$ equals the number of leaf nodes of BDDpatch$(u, 2 ^ {l -1})$.
    The CFLOBDD structural invariants (\Cref{De:CFLOBDD}) tell us that the return-map of $g$'s A-connection is the identity mapping. 
    Therefore, the number of exit vertices of $g_A$ equals the number of middle vertices of $g$.
    By transitivity, the number of middle vertices of $g$ equals the number of leaf nodes of BDDpatch$(u, 2 ^ {l - 1})$.
    The three sets of equal vertices are indicated by the purple boxes in \Cref{Fig:NumberMidExitNew}).
\end{proof}
}

\subsection{The Mapping $\Phi$}
\label{Subse: Phi}

We now can define a mapping $\Phi$ that maps some components of $B$ to the groupings of $C$.
As shown in the previous sections, the relation ``$\triangleright$'' itself is not many-one, but it becomes many-one if we fix the level $l$. Thus, we define the following set as the domain of $\Phi$:

\begin{definition}\label{Def:NS}
    (Node Stratification) Let $\NS(B) = \{ (u, l) \mid \revision{u \in B \land u\text{ is not a leaf node} \land} \depth[u]$ is a multiple of $2^l\}$ 
\end{definition}

\begin{figure}
    \centering
    \includegraphics[width=1.0\linewidth]{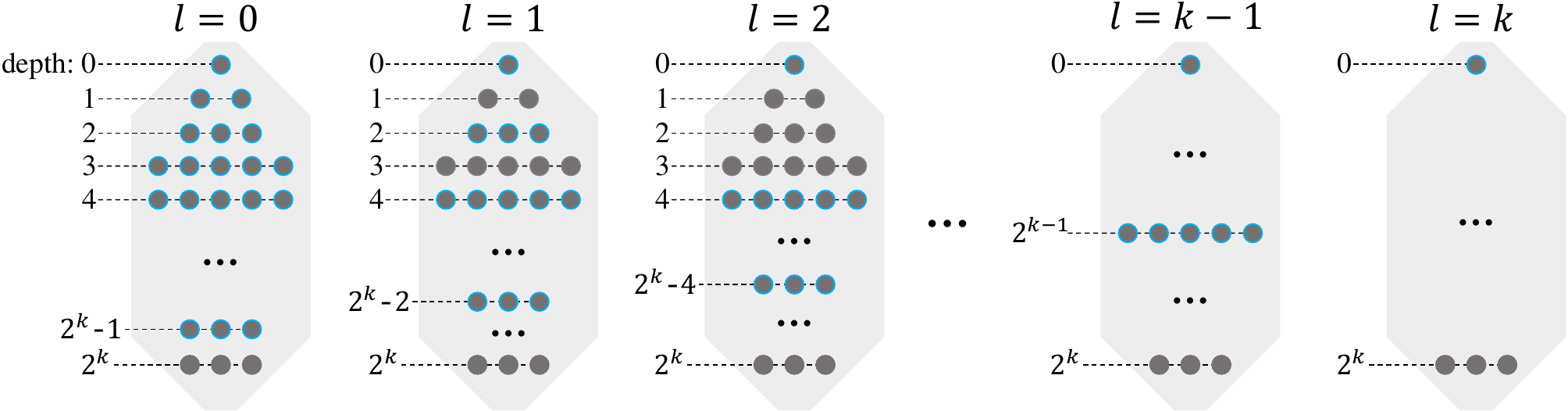}
    \caption{
      \revision{ (Color online.)
      An illustration of $\NS(B)$ from \Cref{Def:NS}. The set of blue-marked nodes in the respective diagrams for the different values of $l$ indicate the subset of $\{ (u, l) \in \NS(B) \}$ for that value of $l$.
      }
    }
    \label{Fig:IteratePatternOvercount}
\end{figure}

\revision{
The different ``strata'' of $\NS(B)$ are depicted in \Cref{Fig:IteratePatternOvercount}.
}
Note that a given node $n$ can be part of many elements of $\NS(B)$. For instance, if \depth[$n$] = 24, then $(n, 3)$, $(n, 2)$, $(n, 1)$ and $(n, 0)$ are all elements of \NS(B).

We define the mapping $\Phi$ from $\NS(B)$ to the groupings of $C$ as follows:

\begin{definition}\label{def:phi}
  Let $G(C)$ denote the set of groupings of $C$. We define a mapping $\Phi$ from $\NS(B)$ to $G(C)$ according to ``$\triangleright$'':
  \begin{quote}
    $\Phi$ maps $(u, l)$ to the level-$l$ grouping $g$ such that $u \triangleright g$.
  \end{quote}
\end{definition}

We first have to show that the mapping $\Phi$ is well-defined. Given $(u, l) \in \NS(B)$, \Cref{Lem:Align}(b) tells us that there exists some level-$l$ grouping $g \in G(C)$ such that $u \triangleright g$; \Cref{Lem:ManyOne} shows that such a level-$l$ grouping $g$ is unique. \revision{ Therefore, $\Phi$ selects a unique grouping in $G(C)$ for every $(u, l) \in \NS(B)$. }

The mapping $\Phi$ captures a key structural relationship: 

\begin{theorem}\label{trm:phi}
    Let $\Phi[S] \eqdef \{ \Phi(u, l) \mid (u, l) \in S \}$, where $S \subseteq \NS(B)$. Then $G(C) = \Phi[\NS(B)]$ (i.e., $\Phi$ is an onto-mapping from $\NS(B)$ to $G(C)$).
\end{theorem}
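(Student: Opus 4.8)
The plan is to prove the two inclusions $\Phi[\NS(B)] \subseteq G(C)$ and $G(C) \subseteq \Phi[\NS(B)]$ separately, since the first is essentially the well-definedness already observed and the second is where the real work lies. For the forward inclusion, I would simply note that $\Phi$ is a total function from $\NS(B)$ into $G(C)$ --- this is exactly the well-definedness argument immediately preceding the theorem, which combines \Cref{Lem:Align}(b) (existence of a level-$l$ grouping $g$ with $u \triangleright g$ for each $(u,l) \in \NS(B)$) with \Cref{Lem:ManyOne} (uniqueness of that $g$). Hence $\Phi(u,l)$ is a genuine element of $G(C)$ for every $(u,l) \in \NS(B)$, giving $\Phi[\NS(B)] \subseteq G(C)$.

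For the reverse inclusion I would take an arbitrary grouping $g \in G(C)$ and exhibit a pair $(u,l) \in \NS(B)$ with $\Phi(u,l) = g$. Let $l$ be the level of $g$. By \Cref{Lem:Align}(a), there exists a node $n_0 \in B$ such that $n_0 \triangleright g$ and $\depth[n_0]$ is a multiple of $2^l$; the latter condition is precisely what is needed for $(n_0, l) \in \NS(B)$ by \Cref{Def: NS}. Now I must check that $n_0$ is not a terminal node, so that $\Phi$ is actually defined on $(n_0,l)$ in the intended sense --- but this follows because $g$, being a grouping of $C$, heads a proto-CFLOBDD that interprets at least $2^l \ge 1$ variable, so after reaching the entry vertex of $g$ via $\triangleright$ there are still variables left to interpret; correspondingly $n_0$ has $\depth[n_0] \le 2^k - 2^l < 2^k$, so $n_0$ is an internal node. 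Then, since $n_0 \triangleright g$ and $g$ is the level-$l$ grouping related to $n_0$, \Cref{Lem:ManyOne} guarantees $g$ is \emph{the} such grouping, so by \Cref{def:phi} we have $\Phi(n_0, l) = g$. Therefore $g \in \Phi[\NS(B)]$, and since $g$ was arbitrary, $G(C) \subseteq \Phi[\NS(B)]$. Combining the two inclusions yields $G(C) = \Phi[\NS(B)]$.

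I do not expect any single step to be a serious obstacle, since all the heavy lifting has been done in \Cref{Lem:Align}, \Cref{Lem:Corr}, and \Cref{Lem:ManyOne}; the theorem is essentially a repackaging of surjectivity of $\triangleright$ (one direction) together with well-definedness of $\Phi$. The one point requiring mild care is the handling of terminal nodes: \Cref{Lem:Align}(b) explicitly excludes terminal nodes from its hypothesis, and $\NS(B)$ as defined includes pairs $(u, l)$ where $u$ is the terminal "node" at $\depth = 2^k$. I would address this either by the depth argument above (showing the witness $n_0$ supplied by part (a) is necessarily non-terminal because $g$ heads a proto-CFLOBDD that still reads variables), or, if cleaner, by restricting $\NS(B)$ in \Cref{Def: NS} to non-terminal nodes --- a harmless change since no grouping of $C$ corresponds to the terminal layer anyway (terminal values live in the value-tuple, which the paper has already excised from $|C|$). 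Either way, the argument is short and the theorem follows immediately from the already-established lemmas.
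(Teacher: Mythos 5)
Your proposal is correct and takes essentially the same approach as the paper: the forward inclusion follows from the well-definedness of $\Phi$, and the reverse inclusion follows by applying \Cref{Lem:Align}(a) to an arbitrary level-$l$ grouping $g$ to obtain a witness $(u,l) \in \NS(B)$ with $\Phi(u,l) = g$. The only difference is that you explicitly justify why that witness node is non-terminal (via the depth bound $\depth[n_0] \le 2^k - 2^l < 2^k$), a point the paper's proof simply asserts when invoking \Cref{Lem:Align}(a); this is a welcome bit of extra care rather than a divergence in method.
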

\begin{proof}
    The well-definedness of $\Phi$ implies that $\Phi[\NS(B)] \subseteq G(C)$.
    We now need to show that $G(C) \subseteq \Phi[\NS(B)]$.
    \Cref{Lem:Align}(a) tells us that for any level-$l$ grouping $g$, there exists a non-terminal node $u \in B$ such that $u \triangleright g$ and $\depth[u]$ is a multiple of $2 ^ l$. According to the definition of $\NS$, $(u, l)$ must be in $\NS(B)$. Consequently, $g \in \Phi[\NS(B)]$, and thus $G(C) \subseteq \Phi[\NS(B)]$.
\end{proof}

$\Phi$ could be viewed as the mapping that maps some components of BDD $B$ to the part of CFLOBDD $C$ that simulates it. For the example, in \Cref{fig:DefSimu}, $\Phi(n_1, 0) = g_1$ and $\Phi(n_2, 0) = g_0$. Note that $\Phi$ is not necessarily bijective because $\Phi(n_1, 0) = g_1$ and $\Phi(n_0, 0) = g_1$ both hold. It should be intuitive due to the mechanism of ``reusing'' or ``sharing'' of the CFLOBDD.

The mapping $\Phi$ can help us figure out certain properties of $C$ by looking at $B$. In particular, it will help us do the counting in \Cref{Se:Counting} and establish the tight instances in \Cref{Se:TightInstances}. A taste of such results is provided by the following corollary, \revision{which formalizes the intuition of ``level locality'' presented in \Cref{Subse: 3/4-depth-duplication}}:

\begin{corollary}\label{Cor:GroupingsAtLSmallerThanB}
Let $G_l(C)$ denote the set of groupings of CFLOBDD $C$ at level $l$.
Then $|G_l(C)| \leq |B|$.
\end{corollary}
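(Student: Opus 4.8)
The plan is to obtain the bound directly from the onto-mapping $\Phi$ of \Cref{trm:phi}, after slicing it by level. Define $\NS_l(B) \eqdef \{(u,l) \mid \depth[u] \text{ is a multiple of } 2^l\}$, i.e., the ``horizontal slice'' of $\NS(B)$ whose second coordinate is the fixed value $l$. Since $\Phi$ sends $(u,l)$ to a level-$l$ grouping (by \Cref{def:phi} together with the fact that $u \triangleright g$ forces $g$ to have level $l$), the image $\Phi[\NS_l(B)]$ is contained in $G_l(C)$. First I would argue the reverse inclusion: given any level-$l$ grouping $g \in G_l(C)$, \Cref{Lem:Align}(a) supplies a node $u_0 \in B$ with $u_0 \triangleright g$ and $\depth[u_0]$ a multiple of $2^l$; hence $(u_0, l) \in \NS_l(B)$ and $\Phi(u_0, l) = g$ by well-definedness (\Cref{Lem:ManyOne}). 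Thus $\Phi$ restricts to an onto map $\NS_l(B) \twoheadrightarrow G_l(C)$, so $|G_l(C)| \le |\NS_l(B)|$.

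It then remains to bound $|\NS_l(B)|$. The slice $\NS_l(B)$ contains exactly one pair $(u,l)$ for each node $u$ of $B$ whose depth is a multiple of $2^l$, so $|\NS_l(B)| = \sum_{j \ge 0} \mathcal{D}(j \cdot 2^l) \le \sum_{i=0}^{2^k} \mathcal{D}(i) = |B|$. Chaining the two inequalities gives $|G_l(C)| \le |\NS_l(B)| \le |B|$, which is the claim. (One could note in passing that the argument actually yields the sharper $|G_l(C)| \le \sum_{j}\mathcal{D}(j 2^l)$, counting only nodes at depths that are multiples of $2^l$, but the stated corollary only needs the cruder bound.)

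I do not expect a genuine obstacle here, since the corollary is essentially a restatement of \Cref{trm:phi} read one level at a time; the only point that needs a moment's care is checking that restricting $\Phi$ to a fixed level still yields a surjection onto $G_l(C)$ rather than merely onto some subset—that is exactly what \Cref{Lem:Align}(a) gives, and it is the one place where the level-$l$ alignment of depths is used. A secondary fine point is that a level-$l$ grouping is always reached before all $2^k$ variables have been interpreted, so the witnessing node $u_0$ is non-terminal when $l<k$; this is harmless for the counting, since terminal nodes are included in $|B|$ in any case.
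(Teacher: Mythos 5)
Your proposal is correct and follows essentially the same route as the paper: restrict $\Phi$ to the level-$l$ slice of $\NS(B)$, use \Cref{Lem:Align}(a) to get surjectivity onto $G_l(C)$, and then observe that the slice contains at most one entry per node of $B$, hence has size at most $|B|$. The paper phrases the last step as $\NS_l(B) \subseteq \NS_0(B) = N(B)$ rather than summing $\mathcal{D}(j\cdot 2^l)$, but the argument is the same.
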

\begin{proof}
Let $N(B)$ denote the set of nodes of $B$, and let $\NS_l(B) \eqdef \{ u \mid (u,l) \in \NS(B) \}$.
$\Phi$ maps $\NS_l(B)$ onto $G_l(C)$  (i.e., $|\NS_l(B)| \geq |G_l(C)|$).
However, for each $l$, $\NS_l(B) \subseteq \NS_0(B) = N(B)$, and thus $|G_l(C)| \leq |\NS_l(B)| \leq |N(B)| = |B|$.
\end{proof}
\section{The Counting Results}
\label{Se:Counting}

This section presents the upper bounds of the number of groupings (\Cref{Subse: Count Groupings}), vertices (\Cref{Subse: Count Vertices}), and edges (\Cref{Subse: Count Edges}) as a function of $|B|$, the size of $B$. We will be counting $|C|$, the size of $C$, with the mapping $\Phi$ introduced in \Cref{trm:phi}. In particular, we do an overcounting with the following formula:

\begin{formula}\label{Fml:Overcount}
We are actually interested in three different ``size'' quantities, all denoted by $|C|$, where which size measure is intended will be clear from context.

In \S4.1, \S4.2, and \S4.3, the ``size operator'' on groupings---also denoted via $|\cdot|$ in each of the right-hand sides below---denotes the number of groupings, vertices, and edges, respectively.
For each of the three versions of $|C|$, the following derivation holds:
\begin{align}
    |C| & = |\Phi[\NS(B)]|  &\text{\revision{(Rewrite using \Cref{trm:phi})}} \notag \\
    &= \left|\bigcup_{l=0}^k \bigcup_{\substack{u \in B \\ u \text{ is non-terminal} \\ \depth[u] \text{ is a multiple of } 2^l}} \Phi(u, l)\right|  \notag \\
    && \text{(overcount by assuming $\Phi$ to be bijective)} \notag \\
    &\le \sum_{l=0}^k  \sum_{\substack{u \in B \\ u \text{ is non-terminal} \\ \depth[u] \text{ is a multiple of } 2^l}} \left| \Phi(u, l) \right| \notag \\
    && \text{(refine the second ``$\Sigma$'' by enumerating the node depths)} \notag \\
    &= \sum_{l=0}^k  \sum_{d = 0}^{2^{k-l} - 1} \sum_{\substack{u \in B \\ \depth[u] = d \times 2^l}} \left| \Phi(u, l) \right|   \label{Eq:SizeCToBeReinterpreted}
\end{align}    
\revision{
The iteration pattern in \Cref{Eq:SizeCToBeReinterpreted} can also be understood in terms of the blue-marked nodes in \Cref{Fig:IteratePatternOvercount}: for each level $l$, each blue-marked node $u$ contributes to the value of $|\Phi(u, l)|$.
}

\end{formula}

% \noindent
% where in \S4.1, \S4.2, and \S4.3 the ``size operator'' ($|\cdot|$)
% in \Cref{Eq:SizeCToBeReinterpreted}
% will denote the number of groupings, vertices, and edges, respectively.

\subsection{The Number of Groupings}
\label{Subse: Count Groupings}

% the proof of O(nlogn) groupings

We first give a bound on the number of groupings:

\begin{theorem} \label{Trm:Groupings}
    \revision{Let $B$ and $C$ be the BDD and CFLOBDD, respectively, for the same Boolean function $f$ (with the same variable ordering $Ord$), and} let $|G(C)|$ be the number of groupings in $C$. Then $|G(C)|=\revision{\mathcal{O}}(|B|\log |B|)$.
\end{theorem}
\begin{proof}
    This result is straightforward from \Cref{Cor:GroupingsAtLSmallerThanB}: the number of groupings at each level is no greater than $|B|$, and there are $k \revision{~= \log 2 ^ k } \le \log|B|$
\revision{
    levels.
    ($2^k \le |B|$ holds because each ply contains at least one node, and there are $2^k + 1$ plies in total.)
    Consequently,
}
    there are no more than $|B|\log|B|$ groupings in total.

    We now prove the result in a different way, via \Cref{Fml:Overcount}.
    This proof serves as a warm-up exercise that uses the mathematical techniques that will be employed shortly in bounding the number of vertices (\Cref{Trm:Vertices}) and edges (\Cref{Trm:Edges}).
    In this proof, the ``size operator'' ($|\cdot|$) in \Cref{Eq:SizeCToBeReinterpreted} counts the number of groupings, and hence is just constant function with value 1 when applied to a single grouping ($|\Phi(u, l)| = 1$). Thus, we have:
    
    \begin{align*}
        |G(C)|
        &\le \sum_{l=0}^k  \sum_{d = 0}^{2^{k-l} - 1} \sum_{\substack{u \in B \\ \depth[u] = d \times 2^l}} 1 \\
        &= \sum_{l=0}^k  \sum_{d = 0}^{2^{k-l} - 1} \mathcal{D}(d \times 2 ^ l) \\
        & \text{(Overcount by including nodes at \emph{any depth} for each level-$l$, rather than just }\\
        & \text{the depths that are multiples of $2 ^ l$ )}\\
        &\le \sum_{l=0}^k  \sum_{d = 0}^{ 2^k } \mathcal{D}(d) \\
        &= k \times |B| \\
        &\text{(because $k \le \log|B|$)}\\
        &= \revision{\mathcal{O}}(|B|\log|B|) 
    \end{align*}      
\end{proof}

\subsection{The Number of Vertices}
\label{Subse: Count Vertices}

\begin{figure}
    \centering
    \includegraphics[width=0.4\linewidth]{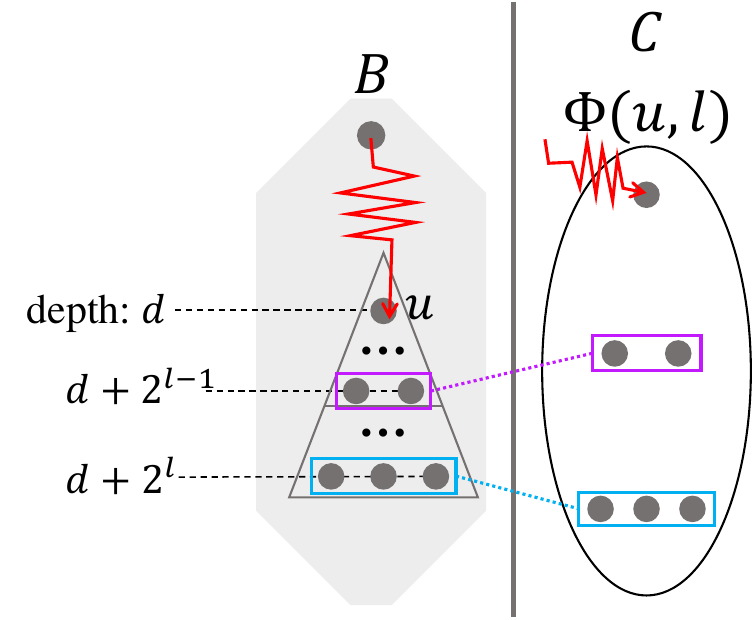}
    \caption{
\revision{
    (Color online.)
}
    A diagram illustrating \revision{the proof of} \Cref{Lem:NumberMidExit2}.}
    \label{Fig:NumberMidExit}
\end{figure}

\revision{ We next move on to bounding the number of vertices of $C$.\Cref{Lem:NumberMidExit1New} implies that the number of vertices per grouping is bounded by $\revision{\mathcal{O}}(|B|)$.
Using the $\revision{\mathcal{O}}(|B|\log|B|)$ bound on $|G(C)|$, we immediately get an $\revision{\mathcal{O}}(|B|^2\log|B|)$ bound on the total number of vertices. 
However, we can get a quadratic bound using a finer-grained counting argument based on the depths of the nodes in $B$.
For a given node $u \in B$, the following lemma bounds the number of (middle- and exit-)vertices of $\Phi(u, l)$ using $\mathcal{D}$:
}

\begin{lemma}\label{Lem:NumberMidExit2}
    Let $u$ be a node in $B$ such that $d\eqdef$\depth[u] is a multiple of $2 ^ l$ (i.e. $(u, l) \in \NS(B)$):
    \begin{enumerate}[label=(\alph*)]
        \item if $l \ge 1$, the number of middle vertices of $\Phi(u, l)$ is no more than $\mathcal{D}(d + 2 ^ {l-1})$
        \item the number of exit vertices of $\Phi(u, l)$ is no more than $\mathcal{D}(d + 2 ^ l)$
    \end{enumerate}
\end{lemma}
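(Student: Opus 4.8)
The plan is to reduce \Cref{Lem:NumberMidExit2} directly to \Cref{Lem:NumberMidExit1} by bounding the relevant counts of nodes \emph{internal} to $\text{BDDpatch}(u, 2^l)$ by counts of nodes at the corresponding \emph{absolute} depth in the full BDD $B$. The key observation is that a ``node at half depth of $\text{BDDpatch}(u, 2^l)$'' is, viewed inside $B$, a node at absolute depth $\depth[u] + 2^{l-1} = d + 2^{l-1}$, and similarly a leaf of $\text{BDDpatch}(u, 2^l)$ is a node of $B$ at absolute depth $d + 2^l$. The BDDpatch is just a sub-DAG of $B$ obtained by restricting to descendants of $u$ across a band of depths, so every node appearing at half depth (resp.\ at the leaf frontier) of the patch is one particular node of $B$ at depth $d + 2^{l-1}$ (resp.\ $d + 2^l$). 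Hence the number of such nodes in the patch is at most the total number of nodes at that depth in $B$, which is exactly $\mathcal{D}(d + 2^{l-1})$ (resp.\ $\mathcal{D}(d + 2^l)$) by the definition of $\mathcal{D}$ given in \Cref{Se:TerminologyConventions}.

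Concretely, first I would prove \emph{(b)}: by \Cref{Lem:NumberMidExit1}(b) the number of exit vertices of $\Phi(u,l)$ equals the number of leaf nodes of $\text{BDDpatch}(u, 2^l)$. By the definition of BDDpatch, its leaves are exactly the descendants of $u$ at depth $\depth[u] + 2^l = d + 2^l$ in $B$ (here using that $d$ is a multiple of $2^l$, so $d + 2^l \le 2^k$ whenever the patch is non-trivial, and when $d + 2^l > 2^k$ the statement is vacuous or the patch collapses to $u$ itself as a leaf). Since each such leaf is a distinct node of $B$ at depth $d + 2^l$, and not every node of $B$ at that depth need be a descendant of $u$, we get that this count is $\le \mathcal{D}(d + 2^l)$. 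Then for \emph{(a)}, assuming $l \ge 1$, I would apply \Cref{Lem:NumberMidExit1}(a): the number of middle vertices of $\Phi(u,l)$ equals the number of nodes at half depth of $\text{BDDpatch}(u, 2^l)$. ``Half depth'' of this patch means depth $\depth[u] + 2^{l-1} = d + 2^{l-1}$ in $B$, and $d + 2^{l-1}$ is a valid depth (an integer strictly between $d$ and $d + 2^l$). Each node counted there is a distinct node of $B$ at depth $d + 2^{l-1}$, so the count is $\le \mathcal{D}(d + 2^{l-1})$, as claimed.

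I expect the main (minor) obstacle to be purely bookkeeping about boundary cases: making sure the depths $d + 2^{l-1}$ and $d + 2^l$ are well-defined and that the ``half depth'' of a BDDpatch is indeed the absolute depth $d + 2^{l-1}$ rather than, say, some notion that only makes sense after the patch has been re-reduced as a standalone BDD. Since we work with quasi-reduced OBDDs (plies are never skipped, as stipulated in \Cref{Se:BDDs}), the band of nodes at each absolute depth is genuinely present in both $B$ and in the patch, so ``half depth of the patch'' unambiguously corresponds to absolute depth $d + 2^{l-1}$, and the inclusion of node sets is immediate. The inequality (rather than equality) in the statement is precisely because the patch only retains descendants of $u$, so in general it omits some nodes of $B$ at those depths; this is exactly where the overcount slack enters. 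No deeper argument than this containment-of-node-sets observation plus \Cref{Lem:NumberMidExit1} is needed.
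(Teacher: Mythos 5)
Your proposal is correct and follows essentially the same route as the paper: it invokes \Cref{Lem:NumberMidExit1} to equate the middle/exit vertex counts with the counts of half-depth/leaf nodes of BDDpatch$(u,2^l)$, and then observes that those patch nodes form a subset of the nodes of $B$ at absolute depths $d+2^{l-1}$ and $d+2^l$, respectively. The paper's proof is just a terser statement of this same containment argument, so no further comparison is needed.
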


\begin{proof}
    \revision{This lemma follows directly from \Cref{Lem:NumberMidExit1New}} by considering the
\revision{
    depths of the nodes in $B$,
}
    as shown in \Cref{Fig:NumberMidExit}.
    \revision{By definition, $u \triangleright \Phi(u, l)$, so the number of exit vertices of $\Phi(u, l)$ equals the number of leaf nodes of BDDpatch($u, 2 ^ l$). Moreover, if $l \ge 1$, the number of middle vertices of $\Phi(u, l)$ equals the number of leaf nodes of BDDpatch($u, 2 ^ {l - 1}$).}
    
    \revision{Consequently,} the leaf nodes of BDDpatch$(u, 2^l)$ are a subset of (and possibly equal to) the nodes at depth $d + 2 ^ l$ in $B$, which establishes \textit{(b)}. \revision{If $l \ge 1$,} the leaf nodes \revision{of BDDpatch$(u, 2 ^ {l - 1})$} are a subset of (and possibly equal to) the nodes at depth $d + 2 ^ {l - 1}$ in $B$, which establishes \textit{(a)}.
\end{proof}

\revision{
Using \Cref{Lem:NumberMidExit2}, we can establish the quadratic bound: 
}

\begin{theorem}\label{Trm:Vertices}
    \revision{Let $B$ and $C$ be the BDD and CFLOBDD, respectively, for the same Boolean function $f$ (with the same variable ordering $Ord$), and} let $|V(C)|$ be the number of vertices in $C$. Then $|V(C)|=\revision{\mathcal{O}}(|B|^2)$.
\end{theorem}
\begin{proof}
    Now the ``size operator'' $(|\cdot|)$
    in \Cref{Eq:SizeCToBeReinterpreted}
    counts the number of vertices. According to \Cref{Lem:NumberMidExit2}, for $l \ge 1$, $\Phi(u, l)$ has at most $\mathcal{D}(\depth[u] + 2 ^ {l - 1})$ middle vertices, and $\mathcal{D}(\depth[u] + 2 ^ l)$ exit vertices. 
    
    While proving the $\revision{\mathcal{O}}(|B|^2)$ bound for the number of vertices, we will ignore those at level-0, as mentioned in \Cref{Se:TerminologyConventions}. Also, we will ignore the entry vertices, because each grouping has exactly one entry vertex and thus by \Cref{Trm:Groupings} the total number of entry vertices is $\revision{\mathcal{O}}(|B|\log|B|)$. 
    
    We obtain an asymptotic bound on the total number of vertices by putting the value $\mathcal{D}(\depth[u] + 2 ^ {l - 1}) + \mathcal{D}(\depth[u] + 2 ^ l)$ into \Cref{Fml:Overcount}:

    \begin{align*}
        |V(C)| &\le \sum_{l=0}^k  \sum_{d = 0}^{2^{k-l} - 1} \sum_{\substack{u \in B \\ \depth[u] = d \times 2^l}} |\Phi(u, l)| \\
        & \text{(pulling out level-0 and the entry vertices)}\\
        &\le \revision{\mathcal{O}}(|B|\log|B|) + \sum_{l=1}^k  \sum_{d = 0}^{2^{k-l} - 1} \sum_{\substack{u \in B \\ \depth[u] = d \times 2^l}} \mathcal{D}(d \times 2^l + 2 ^ {l - 1}) + \mathcal{D}(d \times 2^l + 2 ^ l) \\
        &= \revision{\mathcal{O}}(|B|\log|B|) + \sum_{l=1}^k  \sum_{d = 0}^{2^{k-l} - 1}  \mathcal{D}(d \times 2^l) \times (\mathcal{D}(d \times 2^l + 2 ^ {l - 1}) + \mathcal{D}(d \times 2^l + 2 ^ l)) \\
        &= \revision{\mathcal{O}}(|B|\log|B|) + \sum_{l=1}^k  \sum_{d = 0}^{2^{k-l} - 1}  \mathcal{D}(d \times 2^l) \times \mathcal{D}(d \times 2^l + 2 ^ {l - 1}) + \mathcal{D}(d \times 2^l) \times \mathcal{D}(d \times 2^l + 2 ^ l)
    \end{align*}

\begin{figure}
    \centering
    \includegraphics[width=1.0\linewidth]{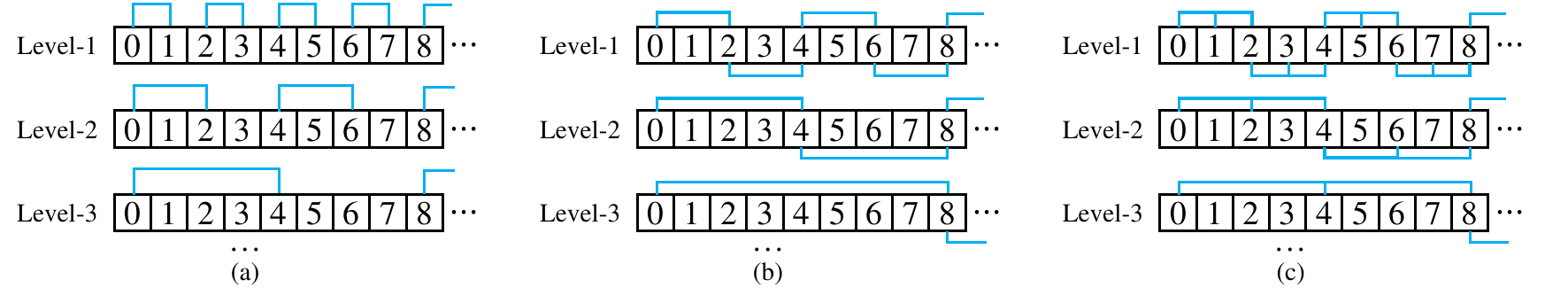}
    \caption{
      (a) and (b) The iteration patterns in the formula in \Cref{Trm:Vertices}.
      (c) The iteration pattern in the formula in \Cref{Trm:Edges}.
    }
    \label{Fig:IteratePattern}
\end{figure}

\noindent
In the double-summation term, for every pair ($i_0, i_i$) such that $0 \le i_0 < i_1 \le 2 ^ k$, the term $\mathcal{D}(i_0) \times \mathcal{D}(i_1)$ appears no more than twice.
The reason why this property holds is because on each successive iteration of the outer $\Sigma$, the stride $2^l$ is doubled,
and the inner $\Sigma$ iterates linearly with that stride. The scenarios for the terms $\mathcal{D}(d \times 2^l) \times \mathcal{D}(d \times 2^l + 2 ^ {l - 1})$ and $\mathcal{D}(d \times 2^l) \times \mathcal{D}(d \times 2^l + 2 ^ {l})$ are shown in \Cref{Fig:IteratePattern}(a) and \Cref{Fig:IteratePattern}(b), respectively. We overcount by assuming that for all such pairs $(i_0, i_1)$, $\mathcal{D}(i_0) \times \mathcal{D}(i_1)$ appears twice.
    
    \begin{align*}
        &\le \revision{\mathcal{O}}(|B|\log|B|) + 2 \times \sum_{0 \le i_0 < i_1 \le 2^k}  \mathcal{D}(i_0) \times \mathcal{D}(i_1) \\
        &\le \revision{\mathcal{O}}(|B|\log|B|) + \left(\sum_{i = 0} ^ {2 ^ k} \mathcal{D}(i)\right) ^ 2 \\
        &= \revision{\mathcal{O}}(|B|^2)
    \end{align*}

\end{proof}

\subsection{The Number of Edges}
\label{Subse: Count Edges}

We count the number of edges based on the following fact:

\begin{lemma}\label{Lem:NumEdges}
For a grouping $g$ with $M$ middle vertices and $E$ exit vertices, there are at most $M \times E$ B-return edges (i.e., the return edges of the B-connections).
\end{lemma}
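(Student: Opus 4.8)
The plan is to count the B-return edges one B-connection at a time. By \Cref{Def:MockCFLOBDD}, the B-connections of $g$ are in one-to-one correspondence with its $M$ middle vertices $m_1,\dots,m_M$; the B-connection at $m_j$ calls a level-$(l\text{-}1)$ grouping $b_j$ and carries exactly one return edge per exit vertex of $b_j$ (the return edges of a single connection realize a total function from the callee's exit vertices into $g$'s exit vertices). Hence the number of B-return edges of $g$ equals $\sum_{j=1}^{M}|\mathrm{ExitVertices}(b_j)|$, and it suffices to prove $|\mathrm{ExitVertices}(b_j)|\le E$ for every $j$, since then this sum is at most $M\times E$.

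The crux is bounding $|\mathrm{ExitVertices}(b_j)|$ by $E$, and I would do this by exhibiting the BDDpatch that $b_j$ simulates as a sub-patch of the one that $g$ simulates. By \Cref{Lem:Align}(a), choose $u\in B$ with $u\triangleright g$ and $d\eqdef\depth[u]$ a multiple of $2^l$; then $g=\Phi(u,l)$, and by \Cref{Lem:NumberMidExit1}(b) the quantity $E=|\mathrm{ExitVertices}(g)|$ is the number of leaves of $\mathrm{BDDpatch}(u,2^l)$, i.e.\ the number of descendants of $u$ at depth $d+2^l$. Now extend the assignment witnessing $u\triangleright g$ by $2^{l-1}$ more variables so that the $C$-path runs from $g$'s entry into $g$'s A-callee $g_A$, out an exit vertex of $g_A$, and back to the middle vertex $m_j$ (this uses that the A-connection's return map is the identity, exactly as in the proof of \Cref{Lem:NumberMidExit1}(a)); in $B$ the corresponding path runs from $u$ to the half-depth node $v_j$ of $\mathrm{BDDpatch}(u,2^l)$ associated with $m_j$. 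Following the B-call edge at $m_j$ reaches the entry of $b_j$, so $v_j\triangleright b_j$, i.e.\ $b_j=\Phi(v_j,l-1)$, and $\depth[v_j]=d+2^{l-1}$ is a multiple of $2^{l-1}$. Applying \Cref{Lem:NumberMidExit1}(b) to $b_j$, the number of exit vertices of $b_j$ equals the number of leaves of $\mathrm{BDDpatch}(v_j,2^{l-1})$, i.e.\ the descendants of $v_j$ at depth $d+2^l$; since $v_j$ is a descendant of $u$, these form a subset of the leaves of $\mathrm{BDDpatch}(u,2^l)$, whence $|\mathrm{ExitVertices}(b_j)|\le E$. Summing over $j$ gives the claimed bound.

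I expect the bookkeeping in the middle step to be the only real obstacle: making precise that the half-depth node $v_j$ ``reached on the way into $b_j$'' is genuinely a descendant of the node $u$ picked for $g$ (so that the two patches nest), and that $v_j\triangleright b_j$ in the sense of \Cref{Def:Simu}. The identity-A-return invariant and \Cref{Def:Simu}'s ``interpret, then take returns, then take calls'' shape carry this. An equivalent and perhaps cleaner packaging is to prove outright that the return map of each B-connection is \emph{injective}---no two exit vertices of $b_j$ return to the same exit vertex of $g$, which is immediate once \Cref{Lem:Corr} tells us that distinct exit vertices of $b_j$ witness distinct leaves of the nested patch, hence distinct leaves of the outer patch, hence (by the Canonicity-backed bijection of \Cref{Lem:Corr}) distinct exit vertices of $g$---and then to observe that the map sending each B-return edge to the pair (its source middle vertex, its target exit vertex) is injective into a set of size $M\times E$. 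If the invariant list in \Cref{Appendix: StructuralInvariants} already records injectivity of B-return maps, this last step reduces to a one-line citation.
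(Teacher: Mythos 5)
Your proposal is correct, but your primary argument takes a much longer, semantic route than the paper's, which disposes of the lemma in three lines: structural invariant~\ref{Inv:2a} (\Cref{Appendix: StructuralInvariants}) requires each B-return tuple to map the callee's exit vertices one-to-one into the exit vertices of $g$, so each of the $M$ B-return maps contains at most $E$ elements, giving at most $M \times E$ B-return edges. Your main argument instead re-derives the per-connection bound from the BDD side: you pick $u$ with $g = \Phi(u,l)$, locate a half-depth descendant $v_j$ with $b_j = \Phi(v_j, l-1)$, and observe that the leaves of $\mathrm{BDDpatch}(v_j, 2^{l-1})$ nest inside those of $\mathrm{BDDpatch}(u, 2^{l})$, so $|\mathrm{ExitVertices}(b_j)| \le E$. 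That does work---the steps $v_j \triangleright b_j$ and the nesting of patches are exactly the moves already made in \Cref{Lem:NumberMidExit1} and \Cref{Lem:NumberMidExit2}, and every middle vertex is indeed reached by some assignment (via \Cref{Lem:Corr} applied to the A-callee together with invariant~\ref{Inv:1})---but it proves more than the lemma asks: the statement is a purely structural fact about a single grouping, true independently of $B$, and your route makes it depend on the correspondence with $B$. Your closing ``alternative packaging''---injectivity of each B-return map, then injecting return edges into (middle vertex, exit vertex) pairs---\emph{is} the paper's proof; as you suspected, injectivity is recorded verbatim as structural invariant~\ref{Inv:2a}, so no appeal to \Cref{Lem:Corr} is needed and that version reduces to a one-line citation.
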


\begin{proof}
    According to the structural invariants, the B-return-maps must be injective. Then each B-return-map will contain at most $E$ elements. The $M$ B-return-maps will contains at most $M \times E$ elements in total. That is, the number of B-return edges is no more than $M \times E$.
\end{proof}

Using a sum similar to the one used in \Cref{Trm:Vertices}, we can obtain the following result:
\begin{theorem}\label{Trm:Edges}
    \revision{Let $B$ and $C$ be the BDD and CFLOBDD, respectively, for the same Boolean function $f$ (with the same variable ordering $Ord$), and} let $|E(C)|$ be the number of edges in $C$. Then $|E(C)|=\revision{\mathcal{O}}(|B|^3)$.
\end{theorem}

\begin{proof}
    Now the ``size operator'' in \Cref{Eq:SizeCToBeReinterpreted} counts the edges of a grouping. According to \Cref{Lem:NumberMidExit2} and \Cref{Lem:NumEdges}, for $l \ge 1$, $\Phi(u, l)$ has at most $\mathcal{D}(\depth[u] + 2 ^ {l - 1}) \times \mathcal{D}(\depth[u] + 2 ^ l)$ B-return edges.

    While proving the $\revision{\mathcal{O}}(|B|^3)$ bound for the number of edges, we only need to consider the B-return edges---the number of call edges equals the total number of entry vertices and middle vertices, which contribute only $\revision{\mathcal{O}}(|B|^2)$; according to the structural invariants of CFLOBDDs, the total number of A-return edges equals the total number of middle vertices, which contribute only $\revision{\mathcal{O}}(|B|^2)$.
    Also, as mentioned in \Cref{Se:TerminologyConventions}, we will ignore the edges of groupings at level 0.

    We obtain an asymptotic bound on the total number of edges by putting the value $\mathcal{D}(\depth[u] + 2 ^ {l - 1}) \times \mathcal{D}(\depth[u] + 2 ^ l)$ into \Cref{Fml:Overcount}:
    \begin{align*}
        |E(C)| &\le \sum_{l=0}^k  \sum_{d = 0}^{2^{k-l} - 1} \sum_{\substack{u \in B \\ \depth[u] = d \times 2^l}} |\Phi(u, l)| \\
        & \text{(pulling out the call edges and the A-return edges) }\\
        &= \revision{\mathcal{O}}(|B|^2) + \sum_{l=1}^k  \sum_{d = 0}^{2^{k-l} - 1}  \mathcal{D}(d \times 2 ^ l) \times \mathcal{D}(d \times 2^l + 2 ^ {l - 1}) \times \mathcal{D}(d \times 2^l + 2 ^ l) \\
    \end{align*}

\noindent
In the double-summation term, for every triple ($i_0, i_1, i_2$) such that $0 \le i_0 < i_1 < i_2 \le 2 ^ k$ , the term $\mathcal{D}(i_0) \times \mathcal{D}(i_2) \times \mathcal{D}(i_2)$ appears no more than once. The reason why this property holds is because on each successive iteration of the outer $\Sigma$, the stride $2^l$ is doubled, and the inner $\Sigma$ iterates linearly by that stride. The scenario for the term $\mathcal{D}(d \times 2 ^ l) \times \mathcal{D}(d \times 2^l + 2 ^ {l - 1}) \times \mathcal{D}(d \times 2^l + 2 ^ l)$ is shown in \Cref{Fig:IteratePattern}(c). We overcount by assuming that for all triples $(i_0, i_1, i_2)$, $\mathcal{D}(i_0) \times \mathcal{D}(i_2) \times \mathcal{D}(i_2)$ appears once.

    \begin{align*}
        &\le \revision{\mathcal{O}}(|B|^2) + \sum_{0 \le i_0 < i_1 < i_2 \le 2 ^ k}\mathcal{D}(i_0) \times \mathcal{D}(i_1) \times \mathcal{D}(i_2) \\
        &\le \revision{\mathcal{O}}(|B|^2) + \frac{1}{3} \times \left(\sum_{i = 0} ^ {2 ^ k} \mathcal{D}(i)\right) ^ 3 \\
        &= \revision{\mathcal{O}}(|B|^3)
    \end{align*}
\end{proof}

\section{Tight Instances}
\label{Se:TightInstances}

This section present a series of functions that create tight instances for all three bounds. 

If we examine the proofs in \Cref{Se:Counting}, we might initially think that the upper bounds are very loose---we did much overcounting, especially by adding a lot of terms in the proof of the bounds. 

However, in constructing such worst-case instances, we are permitted to choose the order of the function's variables (the same order is used for a function's BDD and the function's CFLOBDDs). This section's results show that the three bounds of \Cref{Se:Counting} are asymptotically optimal. The first piece of intuition why the result holds stems from the following fact:
\begin{quote}
    The BDD nodes are not distributed evenly across depths: although $|B|=\sum_{i=0}^{2^k} \mathcal{D}(i)$, it is possible that for some depth $d$ (possibly more than one), $\mathcal{D}(d) = \Omega(|B|)$ (i.e., $\mathcal{D}(d)$ is proportional to $|B|$).
\end{quote}    
Consequently, if there exists a depth $d$ such that $\mathcal{D}(d)$ groupings occur at (almost) every level, the total number of groupings would be $\Omega(|B|\log|B|)$; if there are two such depths $d_0, d_1$ the number of vertices involves a term of the form $\mathcal{D}(d_0) \times \mathcal{D}(d_1)$, then the total number of vertices would be $\Omega(|B|^2)$; if there are three such depths $d_0, d_1, d_2$ and the number of edges involves a term of the form $\mathcal{D}(d_0) \times \mathcal{D}(d_1) \times \mathcal{D}(d_2)$, then the total number of edges would be $\Omega(|B| ^ 3)$.

Following this idea, we construct a series of Boolean functions as follows:
\begin{definition} \label{Def:TightInstanceFunc}
    Suppose that $k$ is big enough (or more precisely, $k \ge 7$). Let $f_k$ be a function $B ^ {2 ^ k} \to B$.
    
    $f_k(\vec{X}) \eqdef (\vec{Y_0} = \vec{Z_0} \land \vec{Y_1} = \vec{Z_1} \land \vec{Y_2} = \vec{Z_2})$, where:
    \begin{itemize}
        \item $\vec{Y_0} = \vec{X}[\frac{1}{2} \times 2 ^ k - k, \frac{1}{2} \times 2 ^ k ],\ \vec{Z_0} = \vec{X}[\frac{1}{2} \times 2 ^ k, \frac{1}{2} \times 2 ^ k + k]$
        \item $\vec{Y_1} = \vec{X}[\frac{5}{8} \times 2 ^ k - k, \frac{5}{8} \times 2 ^ k ],\ \vec{Z_1} = \vec{X}[\frac{5}{8} \times 2 ^ k, \frac{5}{8} \times 2 ^ k + k]$
        \item $\vec{Y_2} = \vec{X}[\frac{3}{4} \times 2 ^ k - k, \frac{3}{4} \times 2 ^ k ],\ \vec{Z_2} = \vec{X}[\frac{3}{4} \times 2 ^ k, \frac{3}{4} \times 2 ^ k + k]$
    \end{itemize}
\end{definition}
    
The shape of the BDD for the function $f_k$ ($B(f_k)$) is shown in \Cref{Fig: TightInstanceBDD}. The left column (depicted in blue and green) shows the paths that can lead to the value $T$, and the right column (depicted in yellow) shows the paths along which an inequality holds, and goes to the value $F$ without any further non-trivial decisions.  Notably, there are three denser regions in the left column, which performs the equality checks for $\vec{Y_0} = \vec{Z_0}$, $\vec{Y_1} = \vec{Z_1}$, and $\vec{Y_2} = \vec{Z_2}$, respectively. We can see from the partition of $\vec{X}$ that they lie exactly at ``1/2'', ``5/8'', and ``3/4'' of the total depths, respectively.

Let us look at the one for $\vec{Y_0}=\vec{Z_0}$ in detail, as shown in the top right of \Cref{Fig: TightInstanceBDD}. For an equality relation, the use of ``concatenated'' vocabularies
leads to a BDD whose size is exponential in the number of variables.
(However, our construction is allowed to use such a ``sub-optimal'' variable order.)
Over the $k$ plies that correspond to $\vec{Y_0}$, the diagram branches into $2 ^ k$ nodes to ``remember'' the $\vec{Y_0}$ values. Let us call the nodes at depth $2^{k-1}$ $n_0, n_1, \cdots, n_{2^k-1}$ from left to right. Each of these nodes ``remembers'' a certain value of $\vec{Y_0}$: $n_0$ remembers $0^k$; $n_1$ remembers $0^{k-1}1$; $\cdots$; $n_{2^k-1}$ remembers $1^k$. Then, over the $k$ plies that correspond to $\vec{Z_0}$, the diagram checks whether the value of $\vec{Z_0}$ matches the remembered value, directing each non-matching decision to the right column. The ``diamonds'' for $\vec{Y_1}=\vec{Z_1}$ and $\vec{Y_2}=\vec{Z_2}$ are similar, but each diamond has an additional node, e.g., $n'_{\textit{ineq}}$ in the case of the diamond for $\vec{Y_1}=\vec{Z_1}$, to receive the paths along which an inequality has already been discovered. If an assignment passes all three checking parts, then it will eventually reach the terminal $T$.

\begin{figure}
    \centering
    \includegraphics[width=1\linewidth]{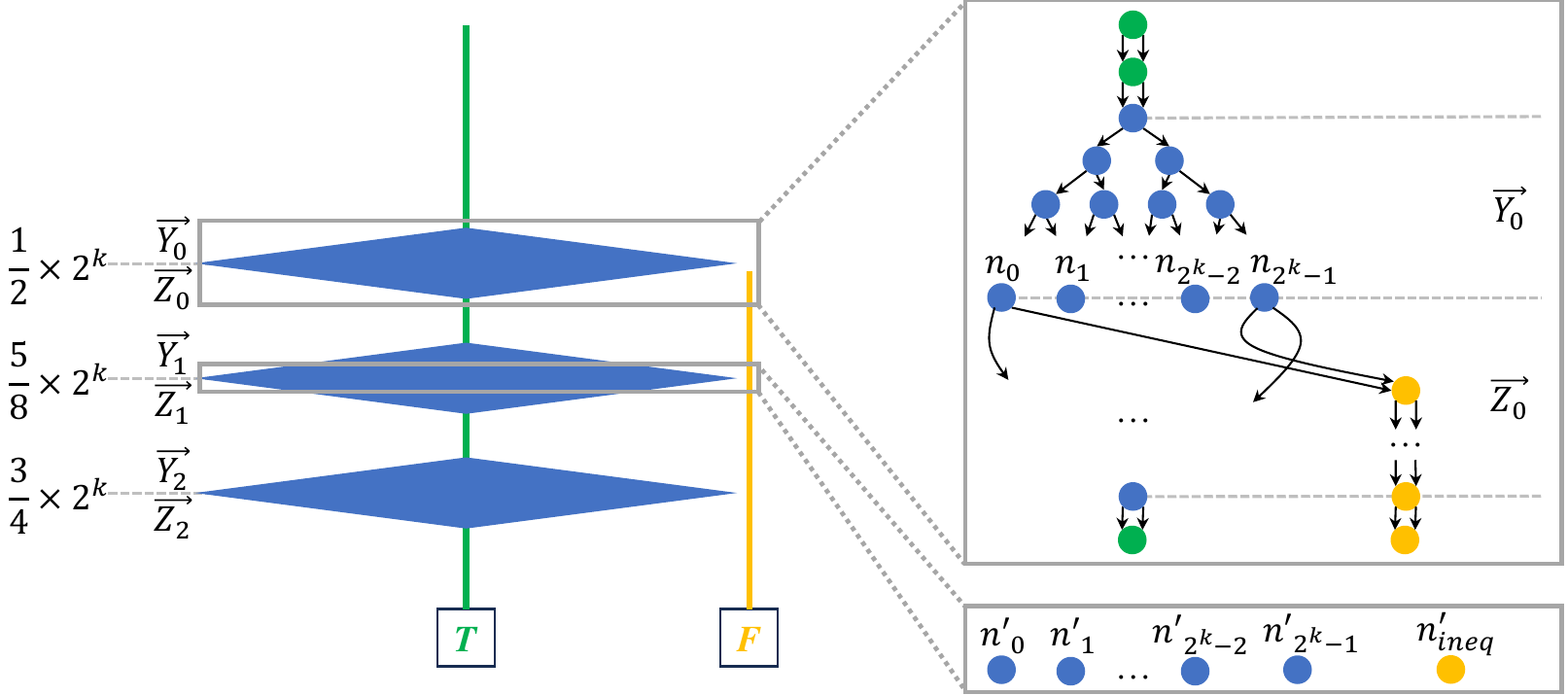}
    \caption{
\revision{
    (Color online.)
}
    A diagram illustrating the shape of the BDD for $f_k$. 
    }
    \label{Fig: TightInstanceBDD}
\end{figure}

It is easy to see that the number of nodes in $B(f_k)$ is $\Theta(2^k)$.\footnote{
  To state the number precisely, there are $\frac{21}{2} \times 2 ^ k - 6k - 8$ nodes in $B(f_k)$---there are two leaf nodes, $9 \times 2 ^ k - 6$ nodes in the three ``dense'' parts (blue), $2 ^ k - 6k - 3$ nodes in the green part, and $2 ^ {k - 1}$ nodes in the yellow part. 
}

Let's now consider the structure of $C(f_k)$.
Because the structure of $C(f_k)$ is far more complex than the structure of $B(f_k)$, we will not illustrate all parts of $C(f_k)$;
instead, we will focus only on the groupings that contribute predominantly to the size of $C(f_k)$, as depicted in \Cref{Fig: TightInstanceCFLOBDD}. Via the mapping $\Phi$, we will establish how certain groupings in $C(f_k)$ must be organized.

We first show that for ``almost all'' of the CFLOBDD levels $l$, $\Phi(n_i, l)$ is ``injective'':

\begin{lemma}\label{Lem: TightInstanceProperty}
For any $l$ such that $\lceil log_2 k\rceil \le l < k $, if $i \neq j$, then $\Phi(n_i, l) \neq \Phi(n_j, l)$.
\end{lemma}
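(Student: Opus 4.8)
The goal is to show that the groupings $\Phi(n_i, l)$ for $0 \le i \le 2^k - 1$ are pairwise distinct whenever $l$ lies in the range $\lceil \log_2 k \rceil \le l < k$. By \Cref{Lem:ManyOne}, $\Phi(n_i, l)$ is the unique level-$l$ grouping $g$ with $n_i \triangleright g$, and by \Cref{Lem:Corr} this grouping is equivalent to $\textrm{BDDpatch}(n_i, 2^l)$. Hence $\Phi(n_i, l) = \Phi(n_j, l)$ would force $\textrm{BDDpatch}(n_i, 2^l)$ and $\textrm{BDDpatch}(n_j, 2^l)$ to represent the same function of $2^l$ variables (with matching leaf structure). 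So the plan is to show that for $i \neq j$ these two BDDpatches are \emph{not} equivalent, i.e., that starting from $n_i$ and reading the next $2^l$ variables yields observably different behavior than starting from $n_j$.

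First I would pin down the depth arithmetic. The nodes $n_0, \ldots, n_{2^k-1}$ sit at depth $\tfrac12 \cdot 2^k = 2^{k-1}$, which is a multiple of $2^l$ for every $l \le k-1$, so $(n_i, l) \in \NS(B)$ and $\Phi(n_i, l)$ is well-defined in the stated range. Next, I would describe the residual function at $n_i$: node $n_i$ ``remembers'' the value $\vec{Y_0} = \textrm{bin}(i)$ (the $k$-bit binary encoding of $i$), and the very next $k$ variables read from $n_i$ are exactly $\vec{Z_0}$; the node then routes to the ``inequality'' sink of the $\vec{Y_0}=\vec{Z_0}$ diamond unless $\vec{Z_0} = \textrm{bin}(i)$. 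Since $l \ge \lceil \log_2 k \rceil$ means $2^l \ge k$, reading $2^l \ge k$ variables from $n_i$ is enough to read \emph{all} of $\vec{Z_0}$ (and possibly more), so $\textrm{BDDpatch}(n_i, 2^l)$ already distinguishes the assignment $\vec{Z_0} = \textrm{bin}(i)$ (which stays on the ``still possibly $T$'' track) from $\vec{Z_0} = \textrm{bin}(j)$ (which falls to the inequality sink). For $i \ne j$, pick the assignment to the first $k$ of these $2^l$ variables equal to $\textrm{bin}(i)$ and arbitrary on the rest: from $n_i$ this lands on a leaf/exit vertex on the ``alive'' side, whereas from $n_j$ it lands on the ``dead'' side. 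These are different leaves of the respective BDDpatches, contradicting equivalence. Therefore $\Phi(n_i, l) \ne \Phi(n_j, l)$.

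The one point requiring a little care — and the main obstacle — is making the ``different leaves'' argument rigorous in the sense demanded by \Cref{Lem:Corr}: equivalence of BDDpatches is witnessed by the identity bijection on leaves indexed by lexicographic leftmost-reaching-path order, so I must argue not merely that the two patches compute different functions but that no relabeling could reconcile them. This follows because if $\Phi(n_i,l) = \Phi(n_j,l) = g$, then \emph{both} $\textrm{BDDpatch}(n_i,2^l)$ and $\textrm{BDDpatch}(n_j,2^l)$ are equivalent to the \emph{same} $g$ via the \emph{same} identity indexing, so they are equivalent to each other via the identity — in particular they send every $2^l$-bit assignment to the leaf with the same index, hence (composing with whatever pseudo-Boolean functions hang off the leaves, which here are literally the sink behaviors of $B(f_k)$) they compute the same residual function. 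Since the residual functions at $n_i$ and $n_j$ genuinely differ (they disagree on $\vec{Z_0}$, as shown above), this is a contradiction. A secondary check is the boundary case $l = \lceil \log_2 k \rceil$ when $2^l$ is between $k$ and $2k$: I should confirm that reading $2^l$ variables from depth $2^{k-1}$ does not overshoot into the next diamond in a way that destroys the distinction — but since $\vec{Z_0}$ occupies variables at depths $2^{k-1}$ through $2^{k-1}+k$ and the $\vec{Y_1}=\vec{Z_1}$ diamond only begins at depth $\tfrac58 \cdot 2^k - k = 2^{k-1} + \tfrac18 \cdot 2^k - k$, which exceeds $2^{k-1} + 2k$ once $k \ge 7$, there is ample room, so the distinction survives.
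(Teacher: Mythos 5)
Your distinguishing idea is exactly the paper's: since $l \ge \lceil \log_2 k\rceil$ gives $2^l \ge k$, all of $\vec{Z_0}$ is read inside BDDpatch$(n_i, 2^l)$, and $n_i$ keeps the assignment $\vec{Z_0} = \textrm{bin}(i)$ ``alive'' while $n_j$ sends it to the inequality sink; your depth arithmetic and boundary check are also fine. The gap is in the bridge you build from ``$\Phi(n_i,l)=\Phi(n_j,l)=g$'' to a contradiction. You claim that because both patches are equivalent to the same $g$ via the identity leaf-indexing, they ``send every $2^l$-bit assignment to the leaf with the same index, hence \ldots\ they compute the same residual function,'' and you then contradict the distinctness of the residual functions at $n_i$ and $n_j$. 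That inference is not valid: equivalence to the same $g$ only says the two patches induce the same \emph{indexed partition} of $\{0,1\}^{2^l}$ into leaf-classes; it does not identify leaf $r$ of BDDpatch$(n_i,2^l)$ with leaf $r$ of BDDpatch$(n_j,2^l)$ as nodes of $B$, so the functions ``hanging off the leaves'' may differ and the composed residual functions at $n_i$ and $n_j$ are not forced to coincide. Indeed, ``same grouping $\Rightarrow$ same residual function'' already fails in \Cref{fig:DefSimu}, where $n_0$ and $n_1$ are distinct nodes (hence have distinct residual functions) yet $\Phi(n_0,0)=\Phi(n_1,0)=g_1$. Relatedly, your one-assignment observation (``from $n_i$ it lands on the alive side, from $n_j$ on the dead side'') compares leaves of two \emph{different} patches and so does not by itself contradict anything.

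The repair is short and is what the paper does: argue about the partition itself with \emph{two} test assignments. Pick any $m \notin \{i,j\}$ (possible since $2^k \ge 3$) and consider $\beta = \textrm{bin}(i)\cdot 0^{2^l-k}$ and $\beta' = \textrm{bin}(m)\cdot 0^{2^l-k}$. From $n_j$ both mismatch $\textrm{bin}(j)$, fall into the inequality column, and reach the \emph{same} leaf of BDDpatch$(n_j,2^l)$; from $n_i$, $\beta$ stays on the alive track while $\beta'$ falls to the inequality column, so they reach \emph{different} leaves of BDDpatch$(n_i,2^l)$. Hence the two patches partition $\{0,1\}^{2^l}$ differently, which by \Cref{Lem:Corr} (identity indexing into the exit vertices of $g$) is impossible if both are equivalent to the same grouping. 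With that one-sentence fix your proof coincides with the paper's, which phrases the same argument via the set $S=\{s\cdot 0^{2^l-k}\mid s\in\{0,1\}^k\}$ and the fact that a level-$l$ proto-CFLOBDD represents a partition of $\{0,1\}^{2^l}$.
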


\begin{proof}
    According to \S6 of \cite{cflobdd_arxiv},\footnote{
\revision{
    Reference \cite{cflobdd_arxiv} is an extended version of reference \cite{TOPLAS:SCR24} that contains proofs and other material that was omitted from \cite{TOPLAS:SCR24} due to length considerations.
    \S6 of \cite{cflobdd_arxiv} presents a denotational semantics for CFLOBDDs, which does not appear in reference \cite{TOPLAS:SCR24}.
}
}
    a level-$l$ proto-CFLOBDD represents a partition of $\{0, 1\} ^ {2 ^ l}$. We will show that the proto-CFLOBDDs headed by $\Phi(n_i, l)$ and $\Phi(n_j, l)$ represent different partitions of strings. 

    Consider the set of string $S = \{s\cdot 0^{2 ^ l -k} \mid s \in \{0, 1\} ^ k  \} \subseteq \{0, 1\} ^ {2 ^ l}$. From the standpoint of $\Phi(n_i, l)$ and $\Phi(n_j, l)$, this set enumerates all possible assignments to $\vec{Z_0}$ and constrains the other parts to be $0$. According to \Cref{Def:TightInstanceFunc} and the discussion on its BDD representation, the proto-CFLOBDD headed by $\Phi(n_i, l)$ partitions $S$ into two parts: it accepts the only string that matches the remembered value of $\vec{Y_0}$ and rejects the other strings. This property implies that $\Phi(n_i)$ and $\Phi(n_j)$ partition the strings in $S$ differently, and thus they must partition $\{0, 1\} ^ {2 ^ l}$ differently. Consequently, $\Phi$ must map $(n_i, l)$ and $(n_j, l)$ to different groupings.
\end{proof}

With \Cref{Lem: TightInstanceProperty}, we can establish that $f_k$ indeed defines a family of tight instances:

\begin{theorem}\label{Thm:TightBounds}
    For $f_k$, $|G(C)| = \Omega(k \times 2 ^ k) = \Omega(|B|\log|B|)$; $|V(C)| = \Omega(2 ^ {2k}) = \Omega(|B| ^ 2)$; $|E(C)| = \Omega(2 ^ {3k}) = \Omega(|B| ^ 3)$.
\end{theorem}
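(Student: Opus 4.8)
The plan is to exhibit, for each of the three size measures, a set of ``witness'' groupings in $C(f_k)$ whose cardinality (resp.\ vertex count, resp.\ edge count) achieves the claimed asymptotic lower bound, using the mapping $\Phi$ and Lemma~\ref{Lem: TightInstanceProperty} to guarantee that these groupings are genuinely distinct. First I would fix notation: recall that $|B(f_k)| = \Theta(2^k)$, so $\log|B| = \Theta(k)$, and the three target bounds $\Omega(k 2^k)$, $\Omega(2^{2k})$, $\Omega(2^{3k})$ are exactly $\Omega(|B|\log|B|)$, $\Omega(|B|^2)$, $\Omega(|B|^3)$. The three ``diamonds'' in $B(f_k)$ sit at depths $\tfrac12 2^k$, $\tfrac58 2^k$, $\tfrac34 2^k$, and each contains $2^k$ nodes at its ``remembering'' middle row ($\mathcal{D}$ is $\Omega(2^k)$ at these depths). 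Call these rows $R_0$ (the $n_i$'s, at depth $2^{k-1}$), $R_1$, $R_2$ respectively.

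For the grouping bound, I would apply Lemma~\ref{Lem: TightInstanceProperty}: for each level $l$ with $\lceil\log_2 k\rceil \le l < k$, the $2^k$ nodes $n_0,\dots,n_{2^k-1}$ of $R_0$ map under $\Phi(\cdot,l)$ to $2^k$ \emph{distinct} level-$l$ groupings. Since the sets $G_l(C)$ for distinct $l$ are disjoint, summing over the $k - \lceil\log_2 k\rceil = \Omega(k)$ admissible levels yields $|G(C)| \ge \Omega(k)\cdot 2^k = \Omega(k 2^k)$. For the vertex bound, I would fix a single level $l^\star$ in the admissible range (say $l^\star$ chosen so that $2^{l^\star} \le 2^k/8$, which holds for all $l < k$ with room to spare) and look at the groupings $g_i = \Phi(n_i, l^\star)$, $i = 0,\dots,2^k-1$. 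By Lemma~\ref{Lem:NumberMidExit1} (or Lemma~\ref{Lem:NumberMidExit2}), the number of exit vertices of $g_i$ equals the number of leaf nodes of $\mathrm{BDDpatch}(n_i, 2^{l^\star})$. Here I need $2^{l^\star}$ large enough that the patch rooted at $n_i$ (which starts at depth $2^{k-1}$) reaches down through one of the later diamonds — so I would instead center attention on row $R_1$ at depth $\tfrac58 2^k$: the patch from a node of $R_1$, extended by $2^{l^\star}$ plies with $2^{l^\star} \ge \tfrac18 2^k$ (achievable since $l^\star$ can be taken close to $k$, e.g.\ $l^\star = k-1$ gives $2^{l^\star} = 2^{k-1} > \tfrac18 2^k$), spans the $\vec{Z_1}$-check and reaches the $R_2$ diamond, so each such patch has $\Omega(2^k)$ leaves, hence each $g_i$ has $\Omega(2^k)$ vertices. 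Combined with $2^k$ distinct such groupings (Lemma~\ref{Lem: TightInstanceProperty} applied at $l^\star$, noting the $R_1$-nodes also ``remember'' distinct values and the same argument as Lemma~\ref{Lem: TightInstanceProperty} shows distinctness), this gives $|V(C)| \ge 2^k \cdot \Omega(2^k) = \Omega(2^{2k})$. For the edge bound, I would use Lemma~\ref{Lem:NumberMidExit1} on the same family: each $g_i = \Phi(n_i, l^\star)$ with $l^\star = k-1$ has $\Omega(2^k)$ middle vertices (nodes at half-depth of the patch, landing in the $R_1$ diamond) \emph{and} $\Omega(2^k)$ exit vertices (leaves of the patch, landing in the $R_2$ diamond), and by the structure of the $\vec{Z_1}$- and $\vec{Z_2}$-equality checks every remembered $\vec{Y_1}$-value is realized by some $\vec{Z_1}$-assignment, forcing the B-return structure to be ``full'' — i.e.\ $\Omega(2^k \times 2^k) = \Omega(2^{2k})$ B-return edges per grouping. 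Over $2^k$ distinct groupings this yields $|E(C)| \ge 2^k \cdot \Omega(2^{2k}) = \Omega(2^{3k})$.

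The main obstacle is the edge count: it is not enough that a grouping has many middle vertices and many exit vertices — I must show the B-connections are (close to) \emph{dense}, i.e.\ that $\Omega(2^{2k})$ of the potential $M \times E$ B-return edges actually occur. The key observation I would develop is that, inside $\Phi(n_i, l^\star)$, the portion of the patch spanning the $\vec{Z_1}$-equality check (which separates the middle row from the rest) has the property that \emph{each} of its $2^k$ ``survivor'' nodes (one per remembered $\vec{Y_1}$) leads, via the $\vec{Y_2}$-remembering sub-block, to a distinct collection of $2^k$ further survivor nodes — so the BDDpatch, read as a two-stage structure, realizes a surjection-like pattern between the half-depth nodes and the leaves that is ``rectangular'' enough that translating through Lemma~\ref{Lem:Corr}'s bijection produces $\Omega(2^{2k})$ B-return edges in the grouping. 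Making this precise requires a careful local analysis of how the $\vec{Y_1}=\vec{Z_1}$ and $\vec{Y_2}=\vec{Z_2}$ diamonds compose inside a single patch of width $2^{k-1}$, and checking that the canonicalization/reduction of the proto-CFLOBDD does not collapse these edges — this is where I expect the bulk of the work to lie, and I would lean on Lemma~\ref{Lem:NumberMidExit1}, Lemma~\ref{Lem:NumEdges}, and the explicit description of $B(f_k)$'s diamonds given just before the theorem.
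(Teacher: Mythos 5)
Your overall plan---exhibit $2^k$ provably-distinct witness groupings per level via $\Phi$ and \Cref{Lem: TightInstanceProperty}, then count their vertices and edges through the corresponding BDDpatches---is the same as the paper's, and your grouping bound matches the paper's argument exactly. However, there are two concrete problems in the other two parts.

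First, the level/depth bookkeeping does not work as written. You propose $l^\star = k-1$ and, at one point, rooting patches at the $R_1$ row (depth $\tfrac58 2^k$). But $\Phi(u,l)$ is only defined when $\depth[u]$ is a multiple of $2^l$, and $\tfrac58 2^k = 5\cdot 2^{k-3}$ is not a multiple of $2^{k-1}$ (nor of $2^{k-2}$), so $\Phi(n'_i,k-1)$ does not exist; moreover a $2^{k-1}$-ply patch rooted there would extend past depth $2^k$. If instead you keep $l^\star=k-1$ rooted at $R_0$, the patch spans the entire bottom half of $B$, so its leaves are the two terminal nodes and the exit-vertex count is $O(1)$, not $\Omega(2^k)$. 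The geometry you actually describe---middle vertices landing on one later diamond and exit vertices on another---is realized only at $l^\star = k-2$ rooted at the $R_0$ nodes: BDDpatch$(n_i,2^{k-2})$ runs from depth $\tfrac12 2^k$ to $\tfrac34 2^k$, with half-depth $\tfrac58 2^k$ ($R_1$, giving $2^k+1$ middle vertices) and bottom $\tfrac34 2^k$ ($R_2$, giving $2^k+1$ exit vertices). This is the choice the paper makes, and it also lets you invoke \Cref{Lem: TightInstanceProperty} directly for distinctness rather than re-proving an analogue for $R_1$-rooted groupings.

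Second, the edge bound is left essentially open. You correctly observe that \Cref{Lem:NumEdges} is only an upper bound and that having $\Omega(2^k)$ middle and exit vertices does not by itself force $\Omega(2^{2k})$ B-return edges, but the ``rectangular/dense reachability'' argument you sketch is not carried out, and the worry about canonicalization collapsing edges points away from the actual mechanism. The missing observation is structural, not combinatorial: a B-connection has exactly one return edge per exit vertex of its B-callee (the return tuple is a total, injective map on the callee's exit vertices), so the number of B-return edges of a grouping is the sum, over its middle vertices, of the exit-vertex counts of the corresponding B-callees. The paper identifies the $2^k+1$ B-callees of $\Phi(n_j,k-2)$ as $\{\Phi(n'_i,k-3)\}\cup\{\Phi(n'_{\textit{ineq}},k-3)\}$ (via \Cref{Lem:Corr} and the identity A-return map) and shows each has $2^k+1$ exit vertices because BDDpatch$(n'_i,2^{k-3})$ bottoms out on the $R_2$ row; this immediately yields $\Omega(2^{2k})$ edges per grouping with no density argument needed. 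Without some version of this step, your edge bound does not go through.
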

\begin{proof}
    The third term in each inequality follows from the second because $|B(f_k)| = \Theta(2 ^ k)$. Parts of the highest three levels of $C(f_k)$ are shown in \Cref{Fig: TightInstanceCFLOBDD}.

\begin{figure}
    \centering
    \includegraphics[width=1\linewidth]{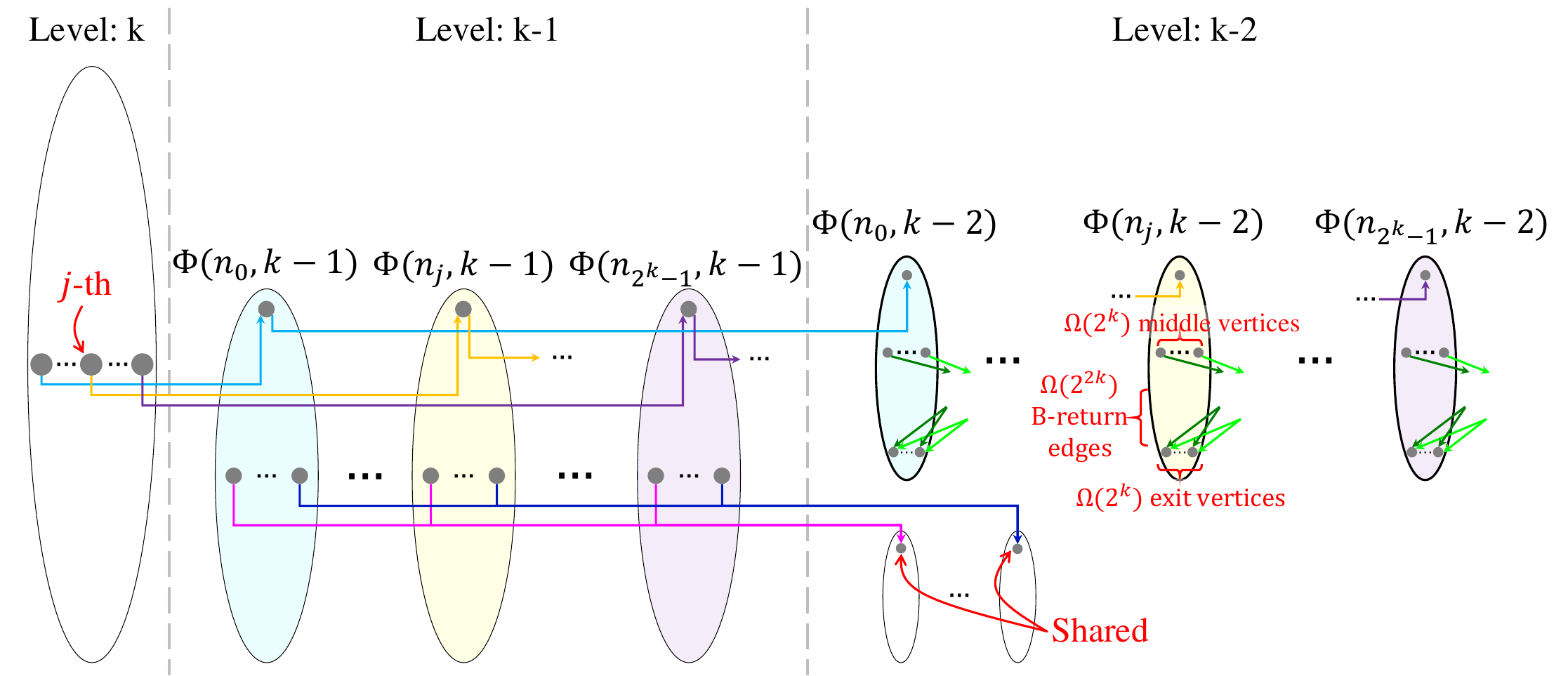}
    \caption{
\revision{
    (Color online.)
}
    A diagram illustrating the structure of the CFLOBDD for $f_k$. 
    To reduce cluster, we do not show some of the call edges and return edges that would link the level-($k\text{-}1$) groupings to level $k - 2$.
    As argued in \Cref{Thm:TightBounds}, the set of groupings $\{\Phi(n_i, k - 2) \mid 0 \le i < 2 ^ k\}$ contributes, in total, $\Omega(2 ^ {2k})$ vertices and $\Omega(2 ^ {3k})$ edges to $C$.
\revision{
    (To reduce clutter, return edges are elided in these diagrams.)
}
    }
    \label{Fig: TightInstanceCFLOBDD}
\end{figure}

    For groupings, we consider the groupings of $C(f_k)$ in $\{\Phi(n_i, l) \mid 0 \le i < 2 ^ k,\ \lceil log_2k \rceil \le l \le k - 1 \}$. According to \Cref{Lem: TightInstanceProperty}, these $\Phi(n_i, l)$s do not equal to each other. Thus, they will contribute $(k - \lceil log_2k \rceil) \times 2 ^ k = \Omega(k \times 2 ^ k)$ groupings.
    
    For vertices, we consider the vertices in the groupings $\{\Phi(n_i, k - 2) \mid 0 \le i < 2 ^ k\}$ in $C(f_k)$.
    According to \Cref{Lem:NumberMidExit2}, we can determine the number of vertices in these groupings by examining the corresponding BDDpatches. For any $0 \le i < 2 ^ k$, all the nodes in the lower half of $B(f_k)$ are reachable from $n_i$, indicating that BDDpatch($n_i, 2 ^ {k-2}$) includes all the nodes at depth $\frac{5}{8} \times 2 ^ k$ and $\frac{3}{4} \times 2 ^ k$. 
    The depths $\frac{5}{8} \times 2 ^ k$ and $\frac{3}{4} \times 2 ^ k$ are exactly the half-depth and the depth of the bottom of 
    BDDpatch($n_i, 2 ^ {k - 2}$)---see \Cref{Fig: TightInstanceBDD}---so 
    there are $2 \times 2 ^ k+ 3$ vertices (1 entry, $2 ^ k + 1$ middle, and $2 ^ k + 1$ exit vertices) in each of the groupings in $\{\Phi(n_i, k - 2) \mid 0 \le i < 2 ^ k\}$, as shown on the right in \Cref{Fig: TightInstanceCFLOBDD}.
    These $2 ^ k$ groupings contribute $2 ^ k \times (2 ^ k+ 3) = \Omega(2 ^ {2k})$ vertices in total to $C$.

    For edges, we also consider the edges in the $2^k$ groupings $\{\Phi(n_i, k - 2) \mid 0 \le i < 2 ^ k\}$ in $C(f_k)$. The previous paragraph just showed that each of these groupings contains $2 ^ k + 1$ middle vertices and $2 ^ k + 1$ exit vertices; we next show that each of these groupings contain $\Omega(2 ^ {2k})$ edges. Let us call the nodes at depth $\frac{5}{8} \times 2 ^ k$ $n'_0, n'_1, \cdots, n'_{2 ^ k - 1}, n'_{\textit{ineq}}$, as shown in the bottom-right of \Cref{Fig: TightInstanceBDD}.
    Consider an arbitrary node $n_j$ (at depth $\frac{1}{2} \times 2 ^ k$).
    The $2 ^ k + 1$ B-callees of $\Phi(n_j, k - 2)$ are exactly the set of groupings $\{\Phi(n'_i, k - 3) \mid 0 \le i < 2 ^ k\} \cup \{ \Phi(n'_{\textit{ineq}}, k - 3) \}$.\footnote{
      The proof of this observation requires us to ``unfold'' the definition of $\Phi$ and go back to an argument about paths. Let $\alpha$ be the partial assignment that lead us to $n_j$ in $B(f_k)$ and $\Phi(n_j, k - 2)$ in $C(f_k)$, respectively. The A-callee of $\Phi(n_j, k - 2)$ must be $\Phi(n_j, k - 3)$: let $g_A$ be the A-callee of $\Phi(n_j, k - 2)$; $n_j \triangleright g_A$ definitely holds by extending the path interpreting $\alpha$ in $C(f_k)$ that witnesses $n_j \triangleright \Phi(n_j, k - 2)$ one step (the step from the entry vertex of $\Phi(n_j, k - 2)$ to the entry vertex of $g_A$).

      At this point, we get two bijections: 
      \begin{enumerate}
        \item
          According to \Cref{Lem:Corr} (as well as the definition of $\Phi$), there exists a bijection between the set of leaf nodes of BDDpatch$(n_j, 2 ^ {k - 3})$ (which equals $\{n'_i \mid 0 \le i < 2 ^ k\} \cup \{ n'_{\textit{ineq}} \}$) and the exit vertices of $\Phi(n_j, k - 3)$, such that we always get to a corresponding leaf node/exit vertex pair for each extension of $\alpha$ with an assignment for the next $2 ^ {k - 3}$ variables in $B(f_k)$ and $C(f_k)$.
        \item
          The A-return map of $\Phi(n_j, k - 2)$ is an identity mapping. In other words, there is a bijection between the exit vertices of $\Phi(n_j, k - 3)$ and the middle vertices of $\Phi(n_j, k - 2)$.
      \end{enumerate}
      We can compose the two bijections and get a bijection between the nodes $\{n'_i \mid 0 \le i < 2 ^ k\} \cup \{ n'_{\textit{ineq}} \}$ and the middle vertices of $\Phi(n_j, k - 2)$ 
      such that we always get to a corresponding node/middle vertex pair for each extension of $\alpha$ with an assignment for the next $2 ^ {k - 3}$ variables in $B(f_k)$ and $C(f_k)$.
      % no matter how we extend $\alpha$ by $2 ^ {k - 3}$ variables}.

      Let $m$ be an arbitrary middle vertex of $\Phi(n_j, k - 2)$, and let $g_m$ be the B-callee associated with $m$.
      Suppose that $n'_m \in \{n'_i \mid 0 \le i < 2 ^ k\} \cup \{ n'_{\textit{ineq}} \}$
      is the node that corresponds to $m$ in the composed bijection.
      Any path/assignment that extends $\alpha$ from $n_j$ to $n'_m$ can be further extended in the same way we did for the A-connection part of $\Phi(n_j, k - 2)$, and thus $n'_m \triangleright g_m$.
      % We can prove that $n'_m \triangleright g_m$ by finding and extending the path in the same way we did for the A-connection part of $\Phi(n_j, k - 2)$.
      In other words, $g_m = \Phi(n'_m, k - 3)$. 
      There are a total of $2 ^ k + 1$ middle vertices of $\Phi(n_j, k - 2)$.
      The procedure just described for finding $\Phi(n'_m, k - 3)$ from $m$ establishes that 
      the B-callees of the $2 ^ k + 1$ middle vertices of $\Phi(n_j, k - 2)$ are exactly the set of groupings $\{\Phi(n'_i, k - 3) \mid 0 \le i < 2 ^ k\} \cup \{ \Phi(n'_{\textit{ineq}}, k - 3) \}$.
    }
    For the same reason that each grouping $\Phi(n_i, k - 2)$ has $2 ^ k + 1$ exit vertices, the grouping $\Phi(n'_i, k - 3)$ has $2 ^ k + 1$ exit vertices.
    Because the number of return edges in a B-connection equals the number of exit vertices of its B-callee, each of the $2 ^ k$ B-connections that call one of the groupings in the set $\{\Phi(n'_i, k - 3) \mid 0 \le i < 2 ^ k\}$ contains $2 ^ k + 1$ return edges, shown as the bright green and deep green return edges on the right in \Cref{Fig: TightInstanceCFLOBDD}, and thus each grouping $\Phi(n_j, k - 2)$ has $\Omega(2 ^ {2k})$ edges.
    Consequently, the groupings $\{\Phi(n_i, k - 2) \mid 0 \le i < 2 ^ k\}$ contribute a total of $\Omega(2 ^ {3k})$ edges to $C(f_k)$.
\end{proof}

\section{Related Work}
\label{Se:RelatedWork}

\revision{
A detailed discussion of work related to CFLOBDDs can be found in \cite[\S11]{TOPLAS:SCR24}.
In this section, we focus on work that relates most closely to the issue of data-structure sizes compared to BDD sizes.
}

\paragraph{\revision{Sentential Decision Diagrams and Variable Shift Sentential Decision Diagrams}}
\revision{
Other data structures that generalize BDDs are Sentential Decision Diagrams (SDDs)~\cite{SDD} and Variable Shift SDDs (VS-SDDs) \cite{VSSDD} (and their quantitative generalizations, such as Probabilistic SDDs~\cite{kisa2014probabilistic}).
As mantioned in \Cref{Se:Introduction}, these decision diagrams also use a tree-structured decomposition of the Boolean variables.
There are families of functions for which these data structures are double-exponentially succinct compared to decision trees and exponentially succinct compared to BDDs.
In SDDs and VS-SDDs, the variable decomposition to use is a user-supplied input.
}

\revision{
VS-SDDs support a sub-structure sharing property that goes beyond SDDs; however, as with BDDs and SDDs, the objects that are shared are always
sub-DAGs, which represent a Boolean function (i.e., $\{0, 1\} ^ n \to \{0, 1 \}$). In contrast, the CFLOBDD ``procedure-call'' device is a more flexible mechanism for reusing substructures than the mechanism for reusing substructures in VS-SDDs.
In particular, the call-return structure used in CFLOBDDs allows sharing of the ``middle of a DAG''---(i.e., in our terminology, a BDDpatch (\Cref{De:BDDpatch})---in the form of a shared proto-CFLOBDD.
In particular, a grouping $g$ can have multiple middle vertices that reuse the same B-connection proto-CFLOBDD $b$, as long as the return edges for the different invocations of $b$ use different mappings to $g$'s exit vertices.
This ``contextual rewiring'' gives CFLOBDDs greater ability to reuse substructures than SDDs and VS-SDDs.
(Moreover, $b$ can also be used as the A-connection proto-CFLOBDD of $g$.)

However, there are interesting connections---in both directions---between SDDs and VS-SDDs on the one hand and CFLOBDDs on the other.
For instance, SDDs and VS-SDDs \emph{can} represent something similar to BDDpatches.
Whereas a CFLOBDD represents the first-half of the variables with a proto-CFLOBDD whose outermost grouping has $k$ exit vertices, an SDD or VS-SDD can represent them with $k$ different sub-diagrams (which represent different Boolean-valued functions).
Each sub-diagram \emph{captures one partition set} of the assignments to the first-half variables.
Thus, from this perspective the major difference is that a CFLOBDD can have multi-exit groupings (groupings with more than two exits), whereas an SDD or VS-SDD would have many sub-diagrams, each having only one or two exits.

In the other direction, variants of CFLOBDDs in which the variable decomposition is a user-supplied input are also possible \cite[\S12]{cflobdd_arxiv}.
Consider a Boolean-valued function $f : \{0,1\}^{2^k} \rightarrow \{0, 1\}$ over variables $x_0, \dots , x_{{2^k}-1}$.\footnote{
\revision{
  For a Boolean function for which the number of variables is not a power of 2, one can pad the function with dummy variables to reach the next higher power of 2.
  Depending on the function, one might choose to interleave the dummy variables among the ``legitimate'' variables or place them at the end (or some combination of both).
}
}
In the variable decomposition used in CFLOBDDs, as one descends by one level, the variables are divided into two halves: $x_0, \dots, x_{{2^{k-1}}-1}$ for the A-connection and $x_{2^{k-1}},\dots, x_{{2^k}-1}$ for the B-connections.
This divide-variables-in-half decomposition is carried out recursively for the proto-CFLOBDDs in the A-connection and B-connections.
Without being precise about the indexing of the Boolean variables themselves, we can say that the variable decomposition used in CFLOBDDs---as presented in \Cref{Se:CFLOBDDs} and in \cite{TOPLAS:SCR24}---has the following structure, which defines a balanced binary tree:
\[
  \begin{array}{rcll}
  \Matched_k & \rightarrow & \Matched_{k-1} \quad \Matched_{k-1} & \\
             & \vdots &  & \\
  \Matched_0 & \rightarrow & x_j & \text{where $0 \leq j \leq 2^{k}-1$.}
  \end{array}
\]
}

\revision{
However, other decomposition schemes could be used.
As long as there is a fixed structural-decomposition pattern to how variables are divided as one descends to a grouping's ``callees,'' it is possible to keep the same properties that one has with ``standard'' CFLOBDDs---i.e., support for the operations of a BDD-like API, without having to instantiate the corresponding decision tree; canonicity; etc.
For instance, the three-vocabulary addition relation
$\ADD_n : \{0,1\}^{n} \times \{0,1\}^{n} \times \{0,1\}^{n} \rightarrow \{0,1\}$
on variables $\{ x_0\cdots x_{n-1} \}$, $\{ y_0\cdots y_{n-1} \}$, and $\{ z_0\cdots z_{n-1} \}$, defined by $\ADD_n(X,Y,Z) \eqdef Z = (X + Y \mod 2^{n})$ was considered by \citet[\S8.3]{cflobdd_arxiv}.
For $\ADD_n$, it would have been convenient to use the decomposition
\[
  \begin{array}{rcll}
    \Matched_k & \rightarrow & \Matched_{k-1} \quad \Matched_{k-1} \quad \Matched_{k-1} & \\
               & \vdots & & \\
    \Matched_0 & \rightarrow & x_j & \text{where $0 \leq j \leq 2^{k}-1$,}
  \end{array}
\]
which would have avoided having to ``waste'' an additional $n$ variables in a fourth vocabulary $\{ \wasted_0 \cdots \wasted_{n-1} \}$,
as in \citet[Figures 16 and 17]{cflobdd_arxiv}.
Another possibility would be to permit a level-$i$ grouping to have connections to level-$(i\text{-}j)$ groupings, where $j > 1$).
For instance, one could have a Fibonacci-like decomposition
\[
  \begin{array}{rcll}
    \Matched_k & \rightarrow & \Matched_{k-1} \quad \Matched_{k-2} & \\
               & \vdots & & \\
    \Matched_1 & \rightarrow &  x_j & \text{where $0 \leq j \leq 2^{k}-1$.}\\
    \Matched_0 & \rightarrow &  x_j & \text{where $0 \leq j \leq 2^{k}-1$.}
  \end{array}
\]
If one were to use the linear grammar
\begin{equation}
  \label{Eq:LinearGrammar}
  \begin{array}{rcll}
    \Matched_k & \rightarrow & x_j \quad \Matched_{k-1} & \text{where $0 \leq j \leq 2^{k}-1$} \\
               & \vdots & & \\
    \Matched_0 & \rightarrow & x_j & \text{where $0 \leq j \leq 2^{k}-1$.}
  \end{array}
\end{equation}
the variable decomposition is essentially a linked list of the Boolean variables.
The resulting CFLOBDD variant would be essentially isomorphic to BDDs, and the size of each CFLOBDD would be linearly related to that of the corresponding BDD.
}

\revision{
The observations that have been made about the sizes of SDDs and VS-SDDs compared to BDDs only consider a decomposition like \Cref{Eq:LinearGrammar}.
However, in that situation the linear-size relationship is obtained only by relinquishing a tree-like decomposition in favor of a list-like decomposition---but that throws the baby out with the bathwater: it is only by using a tree-like decomposition that families of SDDs and VS-SDDs, along with families of CFLOBDDs, obtain a best-case exponential-succinctness advantage over the corresponding family of BDDs.
In contrast, the cubic-size worse-case bound of \Cref{Trm:Edges} shows that standard CFLOBDDs (with its balanced binary-tree decomposition) are only polynomially larger than BDDs, while simultaneously retaining the property of enjoying an exponential-succinctness advantage over BDDs in the best case.
(Moreover, it would be relatively straightforward to establish that all of the CFLOBDD variants sketched above are at most polynomially larger than BDDs, and retain an exponential-succinctness advantage over BDDs for the ones based on a non-linear grammar.)
}

\revision{
% In this paper, we showed that with standard CFLOBDDs, which use a balanced binary-tree variable decomposition, the size of a CFLOBDD for a function $f$ is at most cubic in the size of the of the BDD for $f$.
% It would be relatively straightforward to establish that the size of $f$ with all of the CFLOBDD variants sketched above are at most polynomially larger than the BDD for $f$.
For SDDs and VS-SDDs, it has not been explored whether there is a tree-structured decomposition that offers a polynomial worst-case bound, while simultaneously enjoying an exponential-succinctness advantage over BDDs in the best case.
We believe that the concepts presented in our paper, particularly BDDpatches (\Cref{De:BDDpatch}) and node stratification (\Cref{Def:NS}), are likely to be useful in resolving this question.
}

\paragraph{\revision{Other Kinds of Hierarchical Structure in Decision Diagrams}}
\revision{
There are other kinds of hierarchical organization that have been used in decision-diagram data structures.
Linear Inductive Functions (LIFs) \cite{LIFs} and Exponentially Inductive Functions (EIFs) \cite{Thesis:Gupta94} are BDD-based representations that use numeric or arithmetic annotations on the edges to represent functions over Boolean arguments, with BDD-like subgraphs connected in highly restricted ways.
In contrast, in CFLOBDDs, different groupings at the same level (or different levels) can have very different kinds of connections in them.
}

\revision{
Set Decision Diagrams \cite{couvreur2005hierarchical,thierry2009hierarchical}---which we will abbreviate as Set-DDs to distinguish them from the aforementioned Sentential Decision Diagrams---are based on Data Decision Diagrams (DDDs) \cite{DDD}, which are similar to trie data structures in that they store ``words'' that represent finite sets of assignments sequences of the form $(e_1 := x_1) \cdot (e_2 := x_2) \cdots (e_n := x_n)$ where $e_i$ are variables and $x_i$ are the assigned values.
Whereas tries share common prefixes of sequences in the trie, DDDs are DAG-structured, sharing common suffixes as well as common prefixes.
Set-DDs are similar, but represent sequences of assignments of the form $e_1 \in a_1$; $e_2 \in a_2$; $\ldots$ $e_n \in a_n$, where the $e_i$ are variables and the $a_i$ are sets of values.
}

\revision{
The notion of hierarchy used in Set-DDs is that they need a data structure to use for the sets that label edges of a Set-DD that needs to support an API that includes the usual set-theoretic operations (union, intersection, and set difference) and the computation of a hash key for the set represented.
While they provide an interface so that many types of decision diagrams can serve this purpose, they point out that because a Set-DD itself represents a set, a hierarchy can be introduced in Set-DDs by using Set-DDs as the needed set representation.
The structure is non-recursive: one has a Set-DD, whose labels are Set-DDs, whose labels are Set-DDs, and so on to some degree of nesting.
}

\revision{
This notion of hierarchy is orthogonal to the hierarchical structure of groupings in CFLOBDDs.
In CFLOBDDs, the hierarchical structure implements a kind of procedure-call mechanism.
As already mentioned above, such ``procedure calls'' allow there to be additional sharing of structure beyond what is possible in BDDs and many other kinds of decision diagrams:
a BDD can share only sub-DAGs, whereas a procedure call in a CFLOBDD shares the ``middle of a DAG'' in the form of a shared proto-CFLOBDD.
To convince oneself that the notions are truly orthogonal
\begin{itemize}
  \item
    CFLOBDDs, or Weighted CFLOBDDs (WCFLOBDDs) \cite{DBLP:journals/corr/abs-2305-13610}, could be used as the set representation in Set-DDs.
  \item 
    The kind of hierarchy found in Set-DDs could also be used in WCFLOBDDs, in which the edges of level-0 groupings are labeled with weights.
    Certain kinds of weights (e.g., function-valued or set-valued weights) could themselves be represented with WCFLOBDDs, whose level-0 groupings are in turn labeled with WCFLOBDDs, and so on to some degree of nesting.
\end{itemize}
These observations make it less likely that the concepts introduced in this paper (BDDpatches, node stratification, etc.), would be useful in analyzing how the sizes of Set-DDs are related to the sizes of other data structures.
}

\section{Conclusion}
\label{Se:Conclusion}

BDDs are commonly used to represent Boolean functions;
CFLOBDDs serve as a plug-compatible alternative to them.
\revision{
While \citeauthor{TOPLAS:SCR24} established that there are \emph{best-case families of functions}, which demonstrate an inherently exponential-succinctness advantage of CFLOBDDs over BDDs
}
(regardless of what variable ordering is used in the BDD), no relationship in the opposite direction was known.
This paper fills in the picture by establishing that
\begin{itemize}
  \item
    For every BDD, the size of the corresponding CFLOBDD is at most a polynomial function of the BDD's size:
    \begin{quote}
      If BDD $B$ for function $f$ is of size $|B|$ and uses variable ordering $\textit{Ord}$, then the size of the CFLOBDD $C$ for $f$ that also uses $\textit{Ord}$ is bounded by $\revision{\mathcal{O}}(|B|^3)$.
    \end{quote}
  \item
    There is a family of functions for which $|C|$ grows as $\Omega(|B|^3)$, and hence the bound is tight. 
\end{itemize}

In \Cref{Se:Relations}, we established structural and semantic relationships between $B$ and $C$.
We first defined the binary relation $\triangleright$ (\Cref{Def:Simu}),
which relates two structural notions (paths in BDDs versus paths with a matching condition in CFLOBDDs) to a semantic notion (a partial assignment of a given length). We then introduced the $\NS$ relation (\Cref{Def:NS}); the mapping $\Phi$ (\Cref{def:phi}), and proved that $\Phi$ is an onto-mapping from $B$ to $C$ (\Cref{trm:phi}), which gave us important insights into the structure of a CFLOBDD when compared with the BDD for the same function, with the same variable ordering.

The formalization in \Cref{Se:Relations} has applications that go beyond the polynomial bounds of CFLOBDDs versus BDDs.
For example, by defining $\triangleright$, \NS, and $\Phi$ in the same way for Weighted CFLOBDDs (WCFLOBDDs) \cite{DBLP:journals/corr/abs-2305-13610} and Weighted BDDs (WBDDs) \cite{DBLP:journals/tcad/NiemannWMTD16,DBLP:conf/date/ViamontesMH04}, we can prove that the size of a WCFLOBDD is at most cubic in the size of the WBDD for the same function.
Additionally, $\Phi$ and \Cref{Lem:Corr} enable us to design an algorithm that 
converts a BDD to its equivalent CFLOBDD in polynomial time.

Based on these structural relationships, \Cref{Se:Counting} established an $\revision{\mathcal{O}}(|B|\log|B|)$ bound on the number of CFLOBDD groupings (\Cref{Trm:Groupings}), an $\revision{\mathcal{O}}(|B|^2)$ bound on the number of CFLOBDD vertices (\Cref{Trm:Vertices}), and an $\revision{\mathcal{O}}(|B|^3)$ bounds on the number of CFLOBDD edges (\Cref{Trm:Edges}).
In the proofs, we employed various mathematical techniques that involved significant overcounting.
Despite this overcounting, we demonstrated in \Cref{Se:TightInstances} that the bounds are tight (i.e., asymptotically optimal) by giving a series of functions $f_k$ (\Cref{Def:TightInstanceFunc}) that create tight instances for all three bounds.

\revision{
    We can now answer the question proposed in \Cref{Subse: 3/4-depth-duplication} with a more detailed explanation:
\begin{quote}
    ``3/4-depth duplication'' does not propagate across levels.
    \Cref{Cor:GroupingsAtLSmallerThanB} proves that there is a kind of ``level locality''---the number of ``essentially-different'' groupings that can arise for each level is bounded by $|B|$, which is fundamental to the polynomial bounds established in \Cref{Se:Counting}. 
 \end{quote}
     We can also explain the ``level-locality'' concept in a more operational way, from the viewpoint of sharing.
}
    The definition of $\Phi$ implies the presence of a form of necessary sharing: the grouping that simulates some BDDpatch$(u, 2 ^ l)$ (i.e, $\Phi(u, l)$) only needs to be constructed once, even if it is reached along different paths from the outermost grouping.
    For example, if we look at the CFLOBDD for $f_k$ (defined in \Cref{Se:TightInstances}), we can see that ``3/4-depth duplication'' indeed happens when going from level $k$ to level $k \text{-} 1$, but it would not propagate to level $k \text{-} 2$---even if each of the level-$(k\text{-}1)$ groupings has $2 ^ k + 1$ B-callees, these groupings will greatly share their B-callees, like $\Phi(n_0, k - 1)$ and $\Phi(n_{2 ^ k - 1}, k - 1)$ in \Cref{Fig: TightInstanceCFLOBDD}. As a result, we only need a total of $\revision{\mathcal{O}}(2 ^ k) = \revision{\mathcal{O}}(|B|)$ groupings for level $k - 2$.
    In that sense, duplication does not propagate. The initial intuition that ``3/4-depth duplication'' would propagate was a red herring: it incorrectly assumed that the groupings at level $k$ to level 0 are independent from each other, which neglects the form of sharing that actually occurs (as described above).

\paragraph{\revision{Future Work}}
There are \revision{several additional} questions that would be interesting to study regarding the \revision{sizes of CFLOBDDs and other decision diagrams} compared to BDDs.

\revision{
Following the methods developed in this paper,
% for all decomposition grammars with non-linear productions (so that there is a tree-structured decomposition),
it would be relatively straightforward to establish that the sizes of all CFLOBDD variants of the kind described in \Cref{Se:RelatedWork} are at most polynomially larger than BDDs:
we can always establish ``level locality'' by defining node stratification and a mapping $\Phi$ similar to \Cref{Def:NS} and \Cref{def:phi}, respectively.
The only adjustment required would be to modify the ``depth-alignment'' in the definition of node stratification, which, for the balanced binary-tree decomposition of standard CFLOBDDs, is based on multiples of $2^l$ (motivated by wanting natural strata that ``synchronize'' with CFLOBDD groupings at level $l$).
}

This paper demonstrates that if a BDD $B$ and a CFLOBDD $C$ for function $f$ use the same variable ordering $\textit{Ord}$, then $|C| = \revision{\mathcal{O}}(|B| ^ 3)$. However, what if we allow the variable ordering to change? What is the bound on the size of CFLOBDDs compared to BDDs when the CFLOBDD is permitted to use a different variable ordering? Could we achieve a better bound, such as $\revision{\mathcal{O}}(|B|^2)$, $\revision{\mathcal{O}}(|B|)$, or even sub-linear bounds? This problem remains open.

\revision{
Finally, we believe that the concepts presented in our paper, particularly BDDpatches (\Cref{De:BDDpatch}) and node stratification (\Cref{Def:NS}), can be useful for analyzing the complexity of other structures, such as SDDs \cite{SDD} and VS-SDDs \cite{VSSDD}.
}

\begin{acks}
This work was supported, in part,
by a gift from
\grantsponsor{00001}{Rajiv and Ritu Batra}{}, and by
\grantsponsor{00003}{NSF}{https://www.nsf.gov/}
under grants
\grantnum{00003}{CCF-2211968}
and
\grantnum{00003}{CCF-2212558}.
Any opinions, findings, and conclusions or recommendations
expressed in this publication are those of the authors,
and do not necessarily reflect the views of the sponsoring
entities.
Thomas Reps has an ownership interest in GrammaTech, Inc.,
which has licensed elements of the technology reported
in this publication.
\end{acks}

\bibliographystyle{ACM-Reference-Format}
\bibliography{CFLOBDD}

% \pagebreak

\appendix 
\section{The Structural Invariants of CFLOBDDs}
\label{Appendix: StructuralInvariants}

As described in \S2.2, the structure of a mock-CFLOBDD consists of different groupings organized into levels, which are connected by edges in a particular fashion.
In this section, we describe some additional \emph{structural invariants} that are imposed on CFLOBDDs, which go beyond the basic hierarchical structure that is provided by the
entry vertex, middle vertices, call edges, return edges, and exit vertices of a grouping.

Most of the structural invariants concern the organization of what we call \emph{return tuples}.
For a given $A$-call edge or $B$-call edge $c$ from grouping $g_i$ to $g_{i-1}$, the return tuple $rt_c$ associated with $c$ consists of the sequence of targets of return edges from $g_{i-1}$ to $g_i$ that correspond to $c$ (listed in the order in which the corresponding exit vertices occur in $g_{i-1}$).
Similarly, the sequence of targets of value edges that emanate from the exit vertices of the highest-level grouping $g$ (listed in the order in which the corresponding exit vertices occur in $g$) is called the CFLOBDD's \emph{value tuple}.

Return tuples represent mapping functions that map exit vertices at one level to middle vertices or exit vertices at the next greater level.
Similarly, value tuples represent mapping functions that map exit vertices of the highest-level grouping to terminal values.
In both cases, the $i^{\textit{th}}$ entry of the tuple indicates the element that the $i^{\textit{th}}$ exit vertex is mapped to.
Because the middle vertices and exit vertices of a grouping are each arranged in some fixed known order, and hence can be stored in an array, it is often convenient to assume that each element of a return tuple is simply an index into such an array.
For example, in \Cref{fig:CFLOBDDexample},
\begin{itemize}
  \item
    The return tuple associated with the A-call edge of $g_2$ is [1, 2].\footnote{
      Following the conventions used in \cite{TOPLAS:SCR24}, indices of array elements start at 1.
    }
  \item
    The return tuple associated with the B-call edge that starts at $v_4$ of $g_2$ is [1].
  \item
    The return tuple associated with the B-call edge that starts at $v_5$ of $g_2$ is [2, 1].
  \item
    The value tuple associated with the CFLOBDD is the 2-tuple [5, 7].
\end{itemize}

\subsubsection*{Rationale}
The structural invariants are designed to ensure that---for a given order on the Boolean variables---each Boolean function has a unique, canonical representation as a CFLOBDD \cite[Theorem 4.3]{TOPLAS:SCR24}.
In reading \Cref{De:CFLOBDD} below, it will help to keep in mind that the goal of the invariants is to force there to be a \emph{unique} way to fold a given decision tree into a CFLOBDD that represents the same Boolean function, which is discussed in \cite[\S4.2 and Appendix C]{TOPLAS:SCR24}. The main characteristic of the folding method is that it works greedily, left to right.
This directional bias shows up in structural invariants~\ref{Inv:1}, \ref{Inv:2a}, and \ref{Inv:2b}.

\begin{figure*}
\centering
\begin{tabular}{c@{\hspace{.33in}}c}
  \includegraphics[height=1.62in]{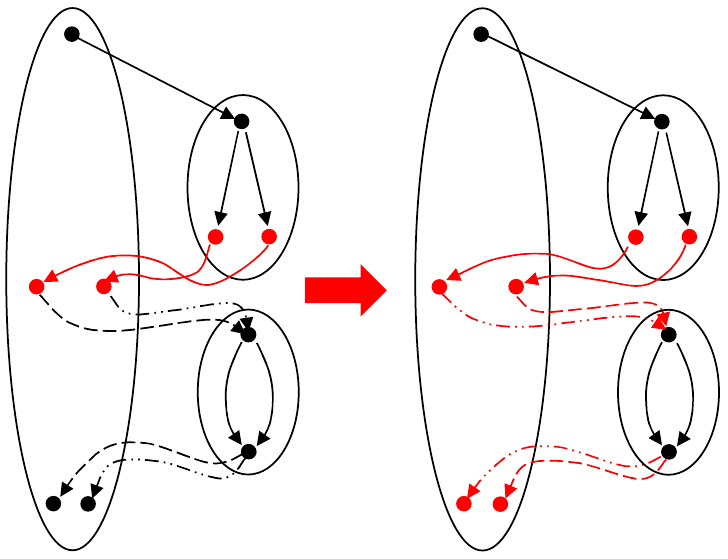}
  &
  \includegraphics[height=1.62in]{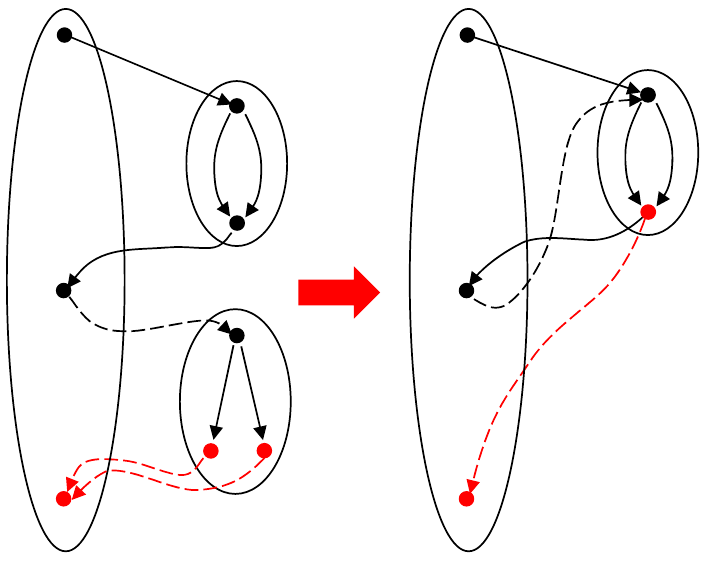}
  \\
  (a) Structural invariant~\ref{Inv:1}
  &
  (b) Structural invariant~\ref{Inv:2a}
  \\
  \includegraphics[height=1.62in]{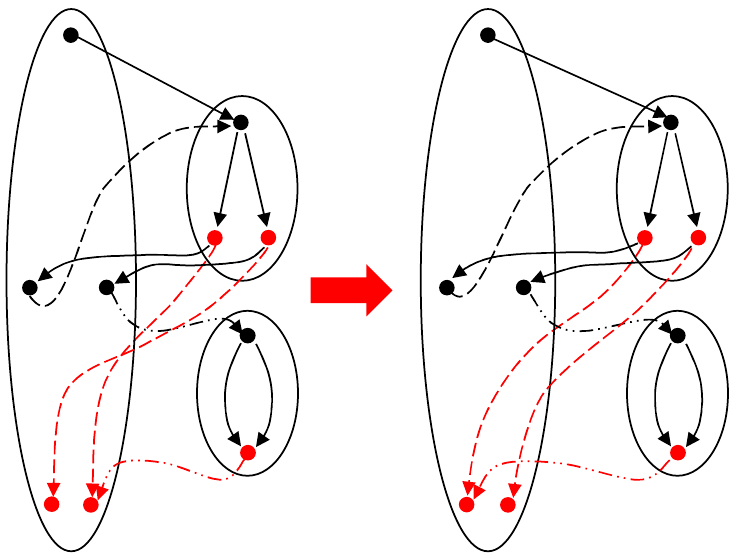}
  &
  \includegraphics[height=1.62in]{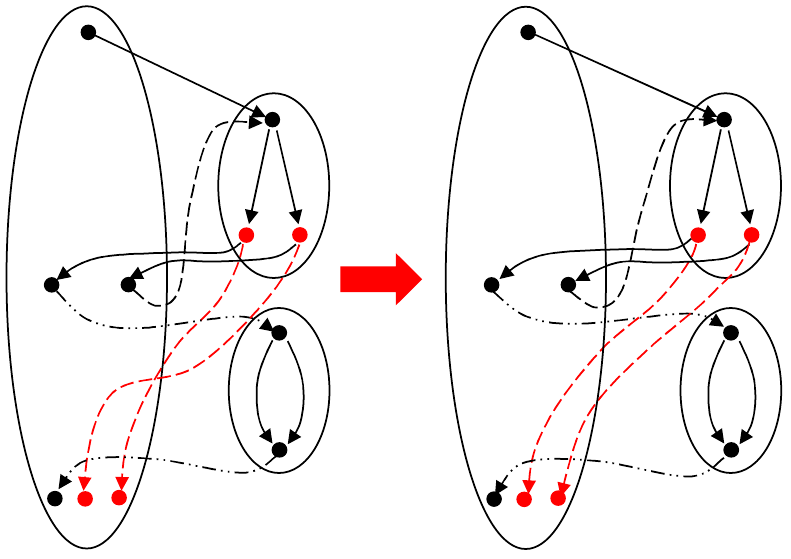}
  \\
  (c) Structural invariant~\ref{Inv:2b}
  &
  (d) Another case of structural invariant~\ref{Inv:2b}
  \\
  \includegraphics[height=1.62in]{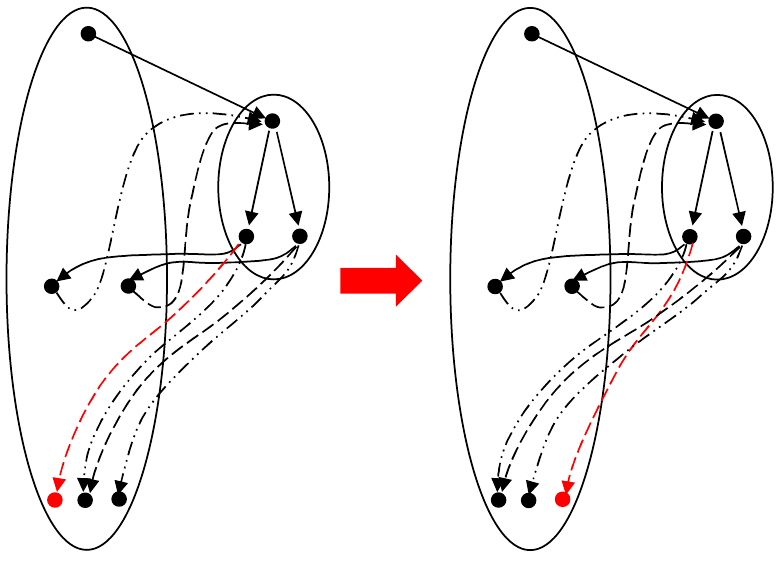}
  &
  \includegraphics[height=1.62in]{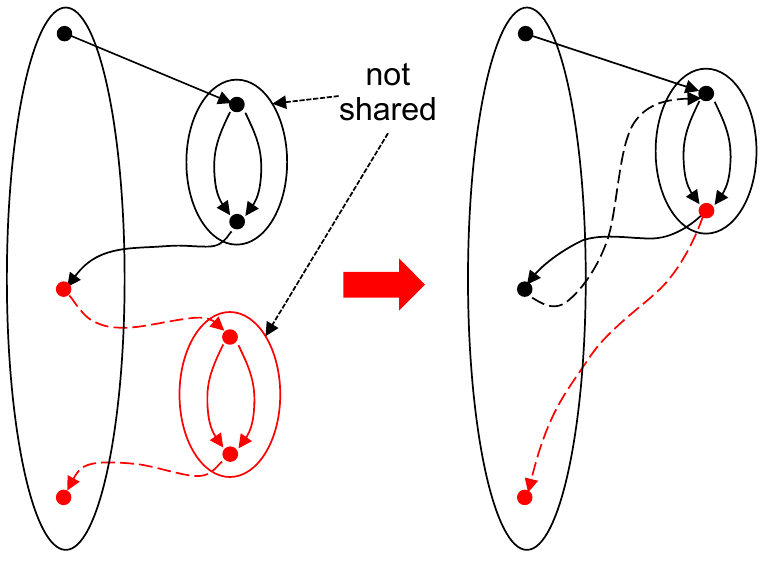}
  \\
  (e) Another case of structural invariant~\ref{Inv:2b}
  &
  (f) Structural invariant~\ref{Inv:3}
  \\
  \includegraphics[height=1.62in]{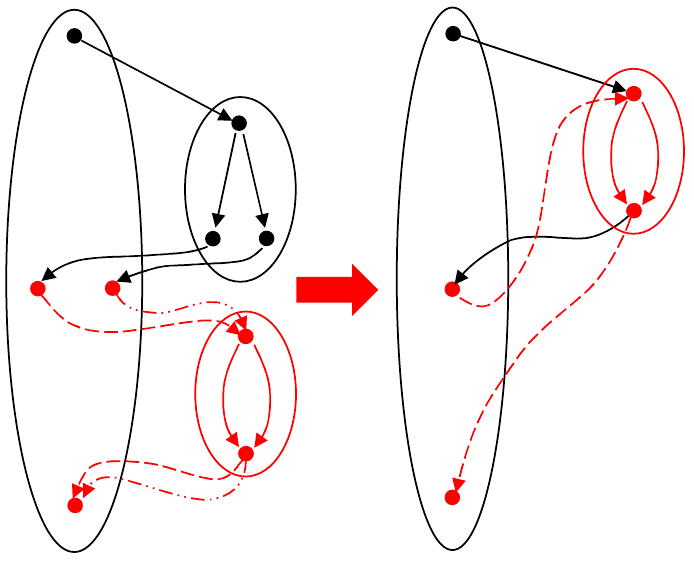}
  &
  \includegraphics[height=1.75in]{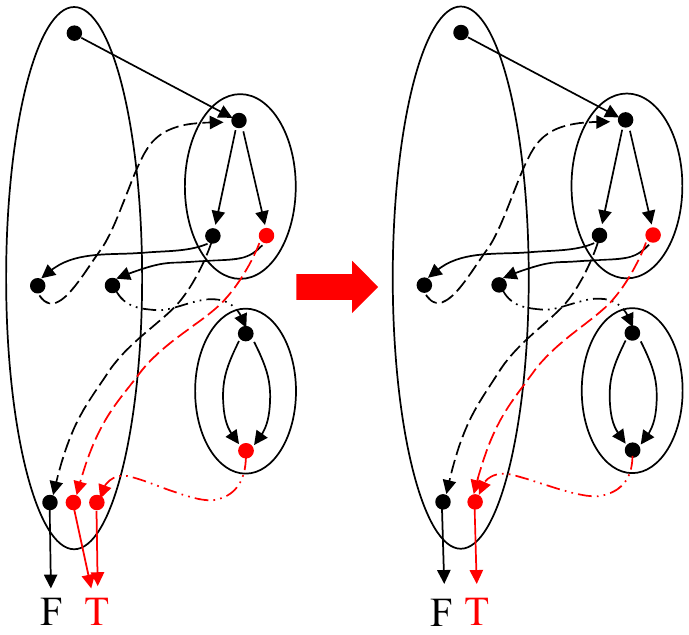}
  \\
  (g) Structural invariant~\ref{Inv:4}
  &  
  (h) Structural invariant~\ref{Inv:6}
\end{tabular}
\caption{\protect \raggedright
\revision{
    (Color online.)
}
To the left of each arrow, a mock-proto-CFLOBDD that violates the indicated structural invariant; to the right, a corrected proto-CFLOBDD.
Invariant violations and their rectifications are shown in red. 
}
\label{Fi:StructuralInvariantsIllustrated}
\end{figure*}

We can now complete the formal definition of a CFLOBDD.
% As will be shown in Construction \ref{Constr:DecisionTreeToCFLOBDD} in Appendix \sectref{canonical-proof}, the following definition ensures that---for a given order on the Boolean variables---for each decision tree $T$, there is a unique CFLOBDD that represents the same function as $T$.

\begin{definition}[Proto-CFLOBDD and CFLOBDD]\label{De:CFLOBDD}
    A \emph{proto-CFLOBDD} $n$ is a mock-proto-CFLOBDD in which every grouping/proto-CFLOBDD in $n$ satisfies the \emph{structural invariants} given below.  
    In particular, let $c$ be an $A$-call edge or $B$-call edge from grouping $g_i$ to $g_{i-1}$, with associated return tuple $rt_c$.
    \begin{enumerate}
      \item
        \label{Inv:1}
        If $c$ is an $A$-call edge, then $rt_c$ must map the exit vertices of $g_{i-1}$ one-to-one, and in order, onto the middle vertices of $g_i$:
        Given that $g_{i-1}$ has $k$ exit vertices, there must also be $k$ middle vertices in $g_i$, and $rt_c$ must be the $k$-tuple $[1,2,\ldots,k]$.
        (That is, when $rt_c$ is considered as a map on indices of exit vertices of $g_{i-1}$, $rt_c$ is the identity map.)
      \item
        \label{Inv:2}
        If $c$ is the $B$-call edge whose source is middle vertex $j+1$ of $g_i$ and whose target is $g_{i-1}$, then $rt_c$ must meet two conditions:
        \begin{enumerate}
          \item
            \label{Inv:2a}
            It must map the exit vertices of $g_{i-1}$ one-to-one (but not necessarily onto) the exit vertices of $g_i$.
            (That is, there are no repetitions in $rt_c$.)
          \item
            \label{Inv:2b}
            It must ``compactly extend'' the set of exit vertices in $g_i$ defined by the return tuples for the previous $j$ $B$-connections:
            Let $rt_{c_1}$, $rt_{c_2}$, $\ldots$, $rt_{c_j}$ be the return tuples for the first $j$ $B$-connection edges out of $g_i$.
            Let $S$ be the set of indices of exit vertices of $g_i$ that occur in return tuples $rt_{c_1}$, $rt_{c_2}$, $\ldots$, $rt_{c_j}$, and let $n$ be the largest value in $S$.
            (That is, $n$ is the index of the rightmost exit vertex of $g_i$ that is a target of any of the return tuples $rt_{c_1}$, $rt_{c_2}$, $\ldots$, $rt_{c_j}$.)
            If $S$ is empty, then let $n$ be $0$.
    
            \hspace*{1.5ex}
            Now consider $rt_c$ ($= rt_{c_{j+1}}$).
            Let $R$ be the (not necessarily contiguous) sub-sequence of $rt_c$ whose values are strictly greater than $n$.
            Let $m$ be the size of $R$.
            Then $R$ must be exactly the sequence $[n+1, n+2, \ldots, n+m]$.
        \end{enumerate}
      \item
        \label{Inv:3}
        While a proto-CFLOBDD may be used as a substructure more than once (i.e., a proto-CFLOBDD may be {\em pointed to\/} multiple times), a proto-CFLOBDD never contains two separate {\em instances\/} of equal proto-CFLOBDDs.\footnote{
          \label{Footnote:CFLOBDDEquality}
          Equality on proto-CFLOBDDs is defined inductively on their hierarchical structure in the obvious manner.
          Two CFLOBDDs are equal when (i) their proto-CFLOBDDs are equal, and (ii) their value tuples are equal.
          When we wish to consider the possibility that \emph{multiple} data-structure instances exist that are equal, we say that such structures are ``isomorphic'' or ``equal (up to isomorphism).''  
        }
      \item
        \label{Inv:4}
        For every pair of $B$-call edges $c$ and $c'$ of grouping $g_i$, with associated return tuples $rt_c$ and $rt_{c'}$, if $c$ and $c'$ lead to level $i-1$ proto-CFLOBDDs, say $p_{i-1}$ and $p'_{i-1}$, such that $p_{i-1} = p'_{i-1}$, then the associated return tuples must be different (i.e., $rt_c \neq rt_{c'}$).
    \end{enumerate}

    A \emph{CFLOBDD} at level $k$ is a mock-CFLOBDD at level $k$ for which
    \begin{enumerate}[resume]
      \item
        \label{Inv:5}
        The grouping at level $k$ heads a proto-CFLOBDD.
      \item
        \label{Inv:6}
        The value tuple associated with the grouping at level $k$ maps each exit vertex to a \emph{distinct} value.
    \end{enumerate}
\end{definition}

\Cref{Fi:StructuralInvariantsIllustrated} illustrates structural invariants~\ref{Inv:1}, \ref{Inv:2a}, \ref{Inv:2b}, \ref{Inv:3}, \ref{Inv:4}, and~\ref{Inv:6}.
In each case, a mock-proto-CFLOBDD that violates one of the structural invariants is shown on the left, and an equivalent proto-CFLOBDD that satisfies the structural invariants is shown on the right.

\section{The proof of Lemma \ref{Lem:Corr}}
\label{Appendix: ProofLemCorr}

\begin{SLem}{Corr}
    Let $g$ be a level-\textit{l} grouping in $C$. For a node $n$ in $B$, if $n \triangleright g$, then BDDpatch$(u, 2^l)$ is equivalent to the proto-CFLOBDD headed by $g$. That is, there exists a bijection between the leaf nodes of BDDpatch$(u, 2^l)$ and the exit vertices of $g$ such that we always get to a corresponding leaf node/exit vertex pair for any assignment of $2^l$ variables.
\end{SLem}

\begin{proof}
    We fix $k$ and do an induction on the level-$l$.

    \paragraph{Base Case:} When $l=0$, a level-$l$ grouping is either a fork-grouping or a don't-care-grouping. In the sense of isomorphism, BDDpatch($u, 1$) can take two forms, a BDD with one leaf node, or a BDD with two leaf nodes. To strengthen the correspondence, we will refer to the former as a ``don't-care-BDD'' and the latter as a ``fork-BDD.''

    \begin{figure}
        \centering
        \includegraphics[width=1\linewidth]{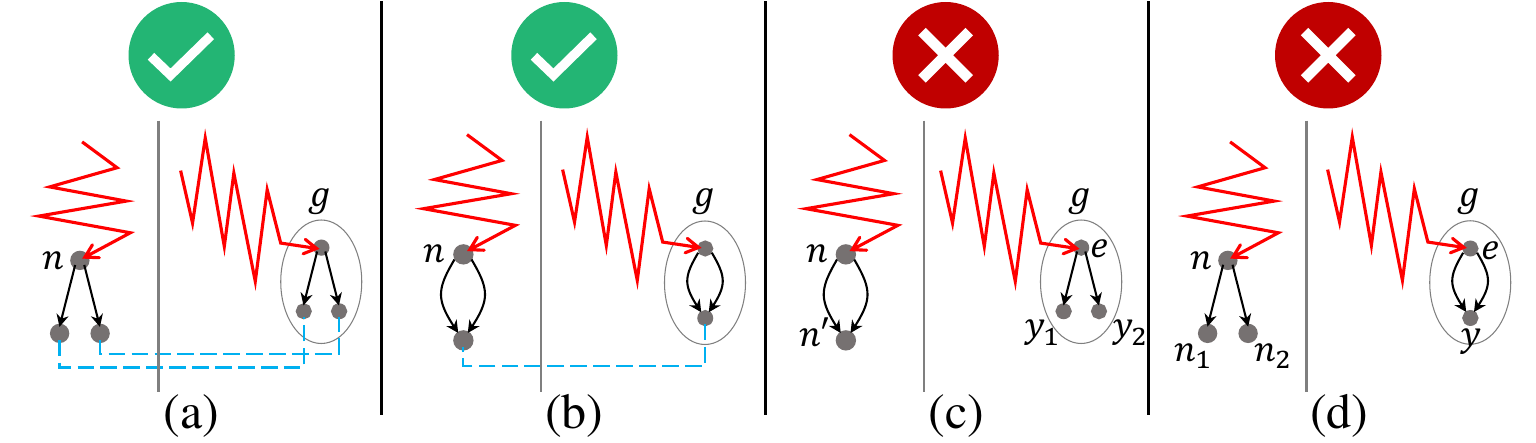}
        \caption{
\revision{
        (Color online.)
}
        The four cases for level-\textit{0} considered in the proof of \Cref{Lem:Corr}. (a) and (b) show the two cases for level-$0$ that are possible, and the blue line denotes the bijection that establishes the equivalence.
        (c) and (d) show the two impossible cases.
        }
        \label{fig:level0}
    \end{figure}

    With the two groupings and two forms of BDDs, there are---\emph{a priori}---four possibilities, as shown in \Cref{fig:level0}.
    We can easily find the bijection between leaf nodes and exit vertices for cases (a) and (b), as depicted by the blue lines. We now wish to show that if $n \triangleright g$, cases (c) and (d) are impossible.
    \begin{itemize}
        \item For (c), BDDpatch$(n, 1)$ is a don't-care-BDD and $g$ is a fork-grouping. Let $\alpha = [x_1 \mapsto a_1, \cdots, x_j \mapsto a_j]$ be the assignment that leads to $n$ and the entry vertex of $g$ (depicted as red lines). We extend it to $\alpha_0 = [x_1 \mapsto a_1, \cdots, x_j \mapsto a_j, x_{j+1} \mapsto 0]$ and $\alpha_1 = [x_1 \mapsto a_1, \cdots, x_j \mapsto a_j, x_{j+1} \mapsto 1]$.
        If we interpret $\alpha_0$ and $\alpha_1$ in $B$, we reach the same node $n'$.
        Now suppose that $\alpha_0$ and $\alpha_1$ are both extended with the same (arbitrarily chosen) assignments to variables $x_{j+2}, \cdots, x_{2^k-1}$;
        call these $\widehat{\alpha_0}$ and $\widehat{\alpha_1}$, respectively.
        Continuing from $n'$, both $\widehat{\alpha_0}$ and $\widehat{\alpha_1}$ lead to the same terminal value.
        In $C$, however, we get to different exit vertices $y_1$ and $y_2$ (immediately) after we read $x_{j+1}$.
        In this situation, there must exist some $\widehat{\alpha_0}/\widehat{\alpha_1}$ pair for which the interpretation of $x_{j+2}, \cdots, x_{2^k-1}$ via $\widehat{\alpha_0}$ (and $\widehat{\alpha_1}$) would lead to different terminal values.\footnote{\label{footnote: proofLemCorr}
          The fact that there can be two different outcomes can be demonstrated by constructing $\hat{\alpha}_0$ (and $\hat{\alpha}_1$) in an ``operational'' way.
          As long as we have not reached the terminal values, carry out the following steps:
          \begin{itemize}
            \item
              If we are at two exit vertices of the same grouping (with the same ``context''), just follow the return-edge and go to either (1) two different middle vertices, or (2) two different exit vertices of the same grouping. We would not get to the same destination because structural invariants tell us that the return map should be injective.
            \item 
              If we are at two middle vertices of the same level-$p$ grouping (with the same ``context''), we treat the B-connection as a whole and choose values for the next $2^{p-1}$ variables (for both $\hat{\alpha}_0$ and $\hat{\alpha}_1$) to reach two different exit vertices. We are assured that such an assignment exists, otherwise the two middle vertices would have the same callee and the same return-map, which violates the structural invariants.
        \end{itemize}
        Each of these steps advances through CFLOBDD $C$ along the same kind of path traversed when interpreting $C$ with respect to an assignment.
        (In the language of CFL-reachability \cite[\S5]{DBLP:conf/pods/Yannakakis90}, such a path is an 
        \textit{unbalanced-right} path
        \cite[\S4.2]{DBLP:journals/infsof/Reps98}---i.e., a suffix of a \textit{matched} path through $C$ \cite[\S3.1]{TOPLAS:SCR24}.)
        From Eqn.\ \eqref{Eq:MatchedPathLengthRecurrence}, we know that the length of each matched path in a CFLOBDD is finite.
        Consequently, the process of traversing the suffix of a matched path, as described above,
        will terminate at the outermost grouping of $C$, at which point we have two different terminal values.
        }
        This situation would contradict the assumption that $B$ and $C$ represent the same pseudo-Boolean function.
        Consequently, case (c) cannot arise.
        \item
        For (d), BDDpatch$(n, 1)$ is a fork-BDD and $g$ is a don't-care-grouping.
        We define $\alpha_0$, $\alpha_1$, $\widehat{\alpha_0}$ and $\widehat{\alpha_1}$ in a similar way to case (c).
        Whatever values are chosen for $x_{j+2}, \cdots, x_{2^k-1}$ in $\widehat{\alpha_0}$ and $\widehat{\alpha_1}$, we get to the same terminal value in $C$ (because the path taken is identical).
        In contrast, there is a way to choose values for $x_{j+2}, \cdots, x_{2^k-1}$
        (creating a particular $\widehat{\alpha_0}$/$\widehat{\alpha_1}$ pair)
        that lead to different terminal values in $B$: $n_1$ and $n_2$ are different nodes in $B$, and hence represent different residual functions.
        Again, this situation contradicts the assumption that $B$ and $C$ represent the same pseudo-Boolean function, and so case (d) cannot arise.
    \end{itemize}

    \paragraph{Induction Step:} Assume that the lemma holds for $l-1$;
    we must prove it true for $l$. 

    Let $g_A$ be $g$'s A-callee.
    We can move from $g$'s entry vertex to $g_A$'s entry vertex without interpreting any Boolean variable, and thus $n \triangleright g_A$ also holds.
    According to the inductive hypothesis, the proto-CFLOBDD headed by $g_A$ is equivalent to BDDpatch($n, 2^{l-1}$). 
    This ``equivalence'' gives us an bijection between the exit vertices of $g_A$ and the nodes at half height in BDDpatch($n, 2^l$). 
    Let $h_1, h_2, \cdots$ be the BDD nodes at half height that correspond to the ``returnees'' of $g_A$'s first, second, $\cdots$ exit vertices, respectively; let $m_1, m_2, \cdots$ be the middle vertices of $g$; let $g_{B_1}, g_{B_2}, \cdots$ be the B-callees from the middle vertices of $g$ that corresponds to $m_1, m_2, \cdots$, respectively, as shown in \Cref{fig:induction}.
    
    We will move from the $i^{\textit{th}}$ exit vertex of $g_A$ to middle vertex $m_i$ of $g$
    and then to a B-callee $g_{B_i}$ without interpreting any variables;
    thus, we have $m_1 \triangleright g_{B_1}, m_2 \triangleright g_{B_2}, \cdots$.
    By applying the inductive hypothesis on $g$'s $i^{\textit{th}}$ B-callee, we can
    identify a bijection $\mu_i$ from the exit vertices of $g_{B_i}$ to the leaves of BDDpatch($m_i, 2^{l-1})$ (which are a subset of the leaves of BDDpatch($n, 2^l$)).

    The rest of the proof is about establishing a bijection $\mu_g$ from the exit vertices of $g$ and the leaves of BDDpatch($n$, $2^l$). Intuitively, $\mu_g$ should be defined by ``inheriting'' $\mu_1, \mu_2, \cdots$: \[ \mu_g(n_x) = \mu_i(y), y\text{ is an exit vertex of } g_{B_i}\text{ and }y \text{ returns to }n_x \] 
    With such a definition, it is obvious that we will get the corresponding result for any assignment. But to make the definition a well-defined bijection,
    we need to show that the following property holds:
    let $y$ be an exit vertex of $g_{B_i}$, let $z$ be an exit vertex of $g_{B_j}$, then:
    
    \begin{itemize}
      \item
        If $\mu_i(y) \neq \mu_j(z)$, then $y$ and $z$ must return to different exit vertices of grouping $g$.
      \item
        If $\mu_i(y) = \mu_j(z)$, then $y$ and $z$ must return to the same exit vertex of $g$.
    \end{itemize}

    \begin{figure}
        \centering
        \includegraphics[width=0.4\linewidth]{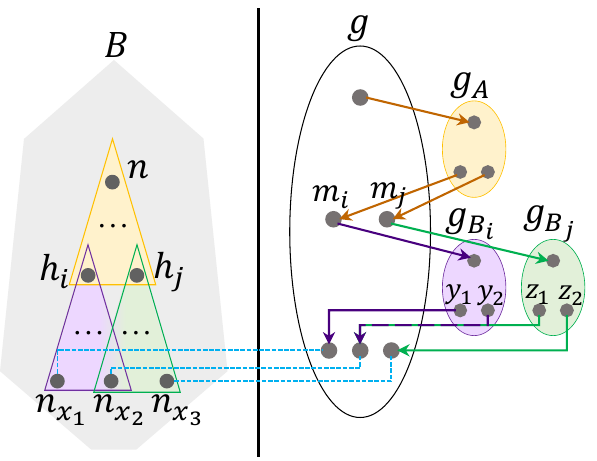}
        \caption{
\revision{
        (Color online.)
}
        A picture for the induction step in the proof of \Cref{Lem:Corr}. }
        \label{fig:induction}
    \end{figure}
    
    \Cref{fig:induction} gives an example of such $y$ and $z$. $\mu_i(y_2) = \mu_j(z_1)$, so $y_2$ and $z_1$ return to the same exit vertex of $g$; $\mu_i(y_1) \neq \mu_j(z_2)$, so $y_1$ and $z_2$ return to different exit vertices of $g$.

    For any pair of such $y$ and $z$, there are four cases to consider---from two alternatives for two conditions:
    (i) $\mu_i(y)=\mu_j(z)$ versus $\mu_i(y) \neq \mu_j(z)$, and
    (ii) $y$ and $z$ return to the same versus different exit vertices of $g$.
    We find ourselves in a situation similar to the four cases considered in the base case of the induction---we need to prove that the following two of the cases are impossible:
    \begin{enumerate}
        \item $\mu_i(y) = \mu_j(z)$, but $y$ and $z$ return to different exit vertices of $g$.
        \item $\mu_i(y) \neq \mu_j(z)$, but $y$ and $z$ return to the same exit vertex of $g$.
    \end{enumerate}
    
    Let $\beta_0$ and $\beta_1$ be two assignments of length $2^l$. Suppose that we get to $y$ and $z$, respectively, after interpreting all the variables in $\beta_0$ and $\beta_1$ in the proto-CFLOBDD headed by $g$.
    According to the definitions of $\mu_i$ and $\mu_j$, the condition $\mu_i(y) = \mu_j(z)$ indicates whether we get to the same leaf in BDDpatch($n$, $2^l$) after interpreting $\beta_0$ and $\beta_1$. Therefore, we can follow a method similar to the argument that established that cases (c) and (d) of the base case are impossible. 
    We just need to modify the definitions of $\alpha_0$ and $\alpha_1$ to be lengthened by values for $2^l$ variables
    (that is, $\beta_0$ and $\beta_1$), instead of just one variable, and then extended to $\widehat{\alpha_0}$ and $\widehat{\alpha_1}$, as in base cases (c) and (d).
    We will not repeat it here.
\end{proof}

\end{document}